\providecommand{\U}[1]{\protect\rule{.1in}{.1in}}
\newtheorem{proposition}{Proposition}[section]
\newtheorem{theorem}[proposition]{Theorem}
\newtheorem{lemma}[proposition]{Lemma}
\newenvironment{proof}[1][Proof]{\noindent\textbf{#1.} }{\hfill$\Box$\vspace{0.4cm}}
\providecommand{\U}[1]{\protect\rule{.1in}{.1in}}
\DeclareMathAlphabet{\mathpzc}{OT1}{pzc}{m}{it}
\newcommand{\eqnsection}{\renewcommand{\theequation}{\thesection.\arabic{equation}}
\makeatletter \csname @addtoreset\endcsname{equation}{section}\makeatother}
\def\l|{\left|\left|}
\def\r|{\right|\right|}
\def\1{\mathds 1}
\begin{document}
\begin{titlepage}
\title{Reconstruction and Clustering\\in Random Constraint Satisfaction Problems}
\author{Andrea Montanari\thanks{~Departments of Electrical Engineering and Statistics,
Stanford University; research funded in part by the NSF grants
CCF-0743978 and DMS-0806211}
\and Ricardo Restrepo\thanks{~School of Mathematics, Georgia Tech, Atlanta, GA
30332-0160}
\and Prasad Tetali\thanks{~Schools of Mathematics and Computer Science, Georgia
Tech, Atlanta, GA 30332-0160; research funded in part by the NSF grant
DMS-0701043}$^{\dagger}$}
\date{\today}
\maketitle
\begin{abstract}
Random instances of Constraint Satisfaction Problems (CSP's)
appear to be hard for all known algorithms, when the number
of constraints per variable lies in a certain interval.
Contributing to the general understanding of the structure
of the solution space of a CSP in the satisfiable regime, we formulate a set
of natural technical conditions on a large family of (random) CSP's, and prove
bounds on three most interesting thresholds for the density of such an
ensemble: namely, the \emph{satisfiability} threshold, the threshold for
\emph{clustering} of the solution space, and the threshold for an appropriate
\emph{reconstruction} problem on the CSP's. The bounds become asymptoticlally
tight as the number of degrees of freedom in each clause diverges.
The families are general enough to include commonly studied problems such as,
random instances of Not-All-Equal-SAT, $k$-XOR formulae, hypergraph 2-coloring, and graph $k$-coloring. An important new ingredient is a condition involving the Fourier
expansion of clauses, which characterizes the class of problems with a similar
threshold structure.
\end{abstract}
\end{titlepage}

\section{Introduction}

Given a set of $n$ variables taking values in a finite alphabet, and a
collection of $m$ constraints, each restricting a subset of variables, a
Constraint Satisfaction Problem (CSP) requires finding an assignment to the
variables that satisfies the given constraints. Important examples include
$k$-SAT, Not All Equal SAT, graph (vertex) coloring with $k$ colors etc.
Understanding the threshold of satisfiability/unsatisfiability for
\emph{random} instances of CSPs, as the number of constraints $m=m(n)$ varies,
has been a challenging task for the past couple of decades, with some notable
successes (see e.g., \cite{ANP05}). On the algorithmic side, the challenge of
\emph{finding} solutions of a random CSP \emph{close to the threshold of
satisfiability} (in the regime where solutions are known to exist) remains
widely open. All provably polynomial-time algorithms fail well before the SAT
to UNSAT threshold.

The attempt to understand this universal failure led to studying the geometry
of the set of solutions of random CSPs \cite{MarcGiorgioRiccardo,AchlioCojas},
as well as the emergence of long range correlations among variables in random
satisfying assignments \cite{OurPNAS}. These research directions are motivated
by two heuristic explanations of the failure of polynomial algorithms:
\textbf{(1)} The space of solutions becomes increasingly complicated as the
number of constraints increases and is not captured correctly by simple
algorithms; \textbf{(2)} Typical solutions become increasingly correlated and
local algorithms cannot unveil such correlations.

By analyzing a large class of random CSP ensembles, this paper provides strong
support to the belief that the above phenomena are \emph{generic}, that they
are characterized by \emph{sharp thresholds}, and that the 
\emph{thresholds for clustering and reconstruction do coincide}.

\subsection{Related work}

Building on a fascinating conjecture on the geometry of the set of solutions,
statistical physicists have developed surprisingly efficient message passing
algorithms to solve random CSPs. For instance, survey propagation
\cite{MarcGiorgioRiccardo,MarcRiccardo} has been shown empirically to find
solutions of random 3-SAT extremely close to the SAT-UNSAT transition. In
order to understand the success of these heuristics, it has become important
to study the thresholds for the emergence of so-called \emph{clustering} of
solutions -- the emergence of an exponential number of sets (or clusters) of
solutions, where solutions within a cluster are closer (in the Hamming
sense, say), compared to the intra-cluster distance
\cite{Mora,AchlioRicci,AchlioCojas}. Moreover, the fact that solutions within
a cluster impose long-range correlations among assignments of variables,
motivates one to study the so-called reconstruction problem in the context of
random CSP's. Indeed, non-rigorous statistical mechanics calculations imply
that the clustering and reconstruction thresholds coincide
\cite{MezardMontanari,OurPNAS}.

Finally, understanding the threshold for (non)reconstruction is also becoming
relevant (if not crucial) to understanding the limit of the Glauber dynamics
to sample from the set of solutions of a CSP. Indeed non-reconstuctibility was
proved in \cite{PeresEtAl} to be a necessary condition for fast mixing, and is
expected to be sufficient for a large class of `sufficiently random' problems
\cite{mon-ger}.

In a recent paper, Gerschenfeld and the first author \cite{mon-ger},
considered the reconstruction problem for \emph{graphical models}, which
included the case of proper colorings of the vertices of a random graph. This
amounts to understanding the correlation (as measured e.g. through mutual
information) between the color of a vertex $v$, and the colors of vertices at
distance $\geq t$ from $v$. In particular, the problem is said to be
`unsolvable' if such a correlation decays to $0$ with $t$. We refer to
Section~\ref{sec:MainResults} for a precise definition of the reconstruction
problem. For a class of models, including the so-called Ising spin glass, the
antiferromagnetic Potts model, and proper $q$-colorings of a graph,
\cite{mon-ger} derived a general sufficient condition, under which
reconstruction for (sparse) random graphs $G(n,m)$ with $m=cn$ edges is
possible if and only if it is possible for a Galton-Watson tree with
independent $\operatorname*{Poisson}(2c)$ degrees for each vertex. Moreover,
they also verified that the condition holds for the Ising spin glass and the
antiferromagnetic Potts at non-zero temperature, leaving open the case of
proper colorings of graphs.

\subsection{Summary of contributions}

It is against this backdrop that we consider certain general families of CSP's
-- the first dealing with constraints consisting of $k$-tuples of binary
variables (as in $k$-uniform hypergraph 2-coloring or Not-All-Equal (NAE)
$k$-sat), while the second dealing with $q$-colorings of vertices of graphs
(which may be seen as an instance of a CSP with $q$-ary variables) -- and
study three important threshold phenomena. Our chief contribution is as follows.

\textbf{(a)} We formulate a fairly natural set of assumptions under which a
general class of constraint satisfaction problems (including the models
mentioned above) can be understood rather precisely in terms of the thresholds
for satisfiability, clustering and (non)reconstruction phenomena. In
particular we verify that the last two thresholds coincide within the
precision of our bounds.

\textbf{(b)} We consider tree ensembles (families of random CSP's whose
variable-constraint dependency structure takes the form of a tree), and prove
optimal bounds on the threshold for reconstruction on trees. These CSP's
consist of binary variables, and the constraints are $k$-ary, and the bounds
are optimal to first order, as $k$ goes to infinity.

\textbf{(c)} We verify the sufficient condition of \cite{mon-ger} for proper
colorings of graphs, thus extending the reconstruction result for colorings on
trees to the same on (sparse) random graphs.

\textbf{(d)} By way of techniques, we make crucial use of the Fourier
expansion of the (binary $k$-CSP) constraints, after introducing an assumption
on the Fourier expansion, as part of the random ensemble under consideration;
this is key to being able to characterize the thresholds precisely.

\textbf{(e)} Finally, as illustrative examples, we mention the specific bounds
(on various thresholds) that follow for some standard models, such as the NAE
$k$-SAT, $k$-XOR formulae etc.

\ 

The organization of the paper is as follows. In Section~\ref{sec:FormalDefns},
we give the formal definitions and assumptions of our models. We state our
main results in Section \ref{sec:MainResults}. In Section~\ref{sec:TreeRec},
we state and prove the optimal bounds for the tree reconstruction problem. In
Section~\ref{sec:ColorRec}, we verify the sufficient condition (from
\cite{mon-ger}) for the specific problem of graph proper $q$-coloring, thus
proving one of our main results -- optimal bounds on the (sparse) random graph
reconstruction problem for colorings. In Appendix~\ref{sec:SecondMom}, we
derive a certain technical second moment bound that is needed for our work.

\section{Definitions}

\label{sec:FormalDefns}

In this section we define a family of random CSP ensembles: problems with
constraints involving $k$-tuples of binary variables and $q$-ary ensembles as
a natural extension. We also introduce some analytic definitions that we will
need in order to present our results.

\bigskip

\emph{Binary }$k$\emph{-CSP ensemble.} Given an integer $n$, $\alpha
\in\mathbb{R}_{+}$, and a distribution $p=\{p(\varphi)\}$ over Boolean
functions $\varphi:\{+1,-1\}^{k}\rightarrow\{0,1\}$, $\operatorname*{CSP}%
(n,\alpha,p)$ is the ensemble of random CSP's over $n$ Boolean variables
$\underline{x}=(x_{1},\dots,x_{n})$ defined as follows. For each
$a\in\{1,\dots,m=n\alpha\}$, draw $k$ indices $i_{a}(1),\dots,i_{a}(k)$
independently and uniformly at random in $[n]$, and a function $\varphi_{a}$
with distribution $p(\varphi)$. An assignment $\underline{x}$ satisfies the
resulting instance if $\varphi_{a}(x_{i_{a}(1)},\dots,x_{i_{a}(k)})=1$ for
each $a\in\lbrack m]$. A CSP instance can be naturally described by a
bipartite graph $G$ (often referred to in the literature as a `factor graph')
including a node for each clause $a\in\lbrack m]$ and for each variable
$i\in\lbrack n]$, and an edge $(i,a)$ whenever variable $x_{i}$ appears in the
$a$-th clause.

\bigskip

$q$\emph{-ary ensembles. } A $q$-ary ensemble is the natural generalization of
a binary ensemble to the case in which variables take $q$ values. For the sake
of simplicity, we restrict our discussion here to the case of pairwise
constraints (i.e. $k=2$ in the language of the previous section).

Given an integer $n$, $\alpha\in\mathbb{R}_{+}$, and a distribution
$p=\{p(\varphi)\}$ over Boolean functions $\varphi:[q]\times\lbrack
q]\rightarrow\{0,1\}$, $\operatorname*{CSP}_{q}(n,\alpha,p)$ is the collection
of random CSP's over $q$-ary variables $x_{i}$, for $i=1,2,\ldots,n$, defined
as follows. For each $a\in\{1,\dots,m=n\alpha\}$, draw $2$ indices
$i_{a},j_{a}$ independently and uniformly at random in $[n]$, and a function
$\varphi_{a}$ with distribution $p(\varphi)$. An assignment $\underline
{x}=(x_{1},\dots,x_{n})$ satisfies the resulting instance, if $\varphi
_{a}(x_{i_{a}},x_{j_{a}})=1$ for each $a\in\lbrack m]$.

In this paper, by way of illustrating how the results for binary ensembles
could be (purportedly) extended to $q$-ary ensembles, we will exclusively
study the $q$-coloring model which consists of ensembles with the single
clause $\varphi\left(  x,y\right)  =\mathbb{I}\left(  x\neq y\right)  $. This
model corresponds to proper colorings with $q$ colors of a random sparse graph
with an edge-to-vertex density of $\alpha>0$.

\bigskip

In the rest of this section, we briefly review some well known definitions in
discrete Fourier analysis that are useful for stating our results.

\ 

\emph{Functional analysis of clauses.} We denote by $v_{\theta}$, the measure
defined over $\{-1,+1\}^{k}$ such that $v_{\theta}\left(  x\right)  =%
%TCIMACRO{\tprod \limits_{i=1}^{k}}%
%BeginExpansion
{\textstyle\prod\limits_{i=1}^{k}}
%EndExpansion
\left(  \frac{1+x_{i}\theta}{2}\right)  $ for every $x\in\{-1,+1\}^{k}$. This
is just the measure induced by choosing $k$ independent copies of a random
variable that takes values $\pm1$ and has expectation $\theta$. Notice that
when $\theta=0$, $v_{\theta}$ corresponds to the uniform measure over
$\{-1,+1\}^{k}$.

The inner product induced by this measure, on the space of real functions
defined on $\{-1,+1\}^{k}$ is denoted by $\left(  \cdot,\cdot\right)
_{\theta}$, and the correponding norm by $\left\Vert \cdot\right\Vert
_{\theta}$. If $\theta=0$, we drop the subindex and just use $\left(
\cdot,\cdot\right)  $ and $\left\Vert \cdot\right\Vert $, respectively. Thus,
if $f,g:\{-1,+1\}^{k}\rightarrow\mathbb{R}$, then%
\begin{align*}
\left(  f,g\right)  _{\theta}  &  =\sum_{x\in\{-1,+1\}^{k}}f\left(  x\right)
g\left(  x\right)  v_{\theta}\left(  x\right)  \text{,\quad}\left\Vert
f\right\Vert _{\theta}^{2}=\sum_{x\in\{-1,+1\}^{k}}f^{2}\left(  x\right)
v_{\theta}\left(  x\right)  \text{,}\\
\left(  f,g\right)   &  =2^{-k}\sum_{x\in\{-1,+1\}^{k}}f\left(  x\right)
g\left(  x\right)  \text{,\quad}\left\Vert f\right\Vert ^{2}=2^{-k}\sum
_{x\in\{-1,+1\}^{k}}f^{2}\left(  x\right)  \text{.}%
\end{align*}
We denote the Hilbert space of functions $\{-1,+1\}^{k}\rightarrow\mathbb{R}$
under the inner product $\left(  \cdot,\cdot\right)  $ by $J_{k}$.

\bigskip

\emph{Fourier transform of clauses.} For any $Q\subseteq[k]\equiv\left\{
1,\ldots,k\right\}  $, let $\gamma_{Q}(x)\overset{def}{=}\prod_{i\in Q}x_{i}$.
Under the scalar product defined above (with $\theta=0$), the functions
$\left\{  \gamma_{S}\right\}  _{S\subseteq[k]}$ form an orthonormal basis for
$J_{k}$. Moreover, they are exactly the algebraic characters of $\left\{
-1,1\right\}  ^{k}$ with the group operation of pointwise multiplication.
Thus, we define the Fourier transform of a function $f\in J_{k}$, by letting
for any $Q\subseteq\left[  k\right]  $,
\[
f_{Q}\overset{def}{=}(\gamma_{Q},f)=2^{-k}\sum_{x\in\{-1,+1\}^{k}}%
f(x)\gamma_{Q}(x)\text{.}%
\]

\bigskip

\emph{Noise operator. }Given $\theta\in\left[  -1,1\right]  $, we define the
\emph{Bonami - Beckner} operator $\operatorname*{T}_{\theta}:J_{k}\rightarrow
J_{k}$, by
\[
\left(  \operatorname*{T}\nolimits_{\theta}f\right)  \left(  x\right)
\overset{def}{=}\sum_{y\in\left\{  -1,1\right\}  ^{k}}f\left(  xy\right)
v_{\theta}\left(  y\right)  \text{.}%
\]
Notice that $\left(  \operatorname*{T}_{\theta}f\right)  (x)$ corresponds to
the expected value of $f(\mathbf{x}_{\theta})$, where $\mathbf{x}_{\theta}$ is
obtained from $x$ by flipping each coordinate independently with probability
$(1-\theta)/2$. Notice that $\operatorname*{T}_{1}$ is just the identity
operator and $\operatorname*{T}_{0}$ sends $f$ to the constant function
$\left(  f,\gamma_{\emptyset}\right)  $.

The Bonami-Beckner operator diagonalizes with respect to the Fourier basis, in
the sense that $\left(  \operatorname*{T}\nolimits_{\theta}\gamma_{Q}\right)
\left(  x\right)  =\theta^{\left\vert Q\right\vert }\gamma_{Q}\left(
x\right)  $ for any $Q\subseteq\left[  k\right]  $.

More generally, given $h\in\left[  -1,1\right]  ^{k}$, we define $\left(
\operatorname*{T}_{h}f\right)  (x)\overset{def}{=}\mathbb{E}[f(\mathbf{x}%
_{h})]$, where $\mathbf{x}_{h}$ is obtained from $x$ by flipping the $i^{th}$
coordinate independently and with probability $\frac{1-h_{i}}{2}$. Since
$\operatorname*{T}_{h}$ also diagonalizes with respect to the Fourier basis,
one gets $\left(  \operatorname*{T}\nolimits_{h}\gamma_{S}\right)  \left(
x\right)  =\gamma_{S}\left(  h\right)  \gamma_{S}\left(  x\right)  .$

\bigskip

\emph{Discrete derivative and influence. }Given a function $f\in J_{k-1}$, we
define its \emph{discrete derivative} $f^{\left(  1\right)  }\in J_{k-1}$ as
$f^{\left(  1\right)  }\left(  x\right)  =\frac{1}{2}\left[  f\left(
1,x\right)  -f\left(  -1,x\right)  \right]  $. We define analogously
$f^{\left(  i\right)  }$ for any other variable index. Finally, the
\emph{influence} of the $i^{\text{th}}$ variable on $f$ is defined using the
norm of the derivative
\[
\operatorname*{I}\nolimits_{i}\left(  f\right)  \overset{def}{=}\left\Vert
f^{\left(  i\right)  }\right\Vert ^{2}\text{.}%
\]
For any $Q\subseteq\left[  k\right]  $, $f_{Q}^{\left(  i\right)  }%
=f_{Q\cup\left\{  i\right\}  }$.

\section{Main results}

\label{sec:MainResults}

\subsection{Binary $k$-CSP ensembles}

\label{sec:BinaryCSP}

We assume the following conditions on the ensemble.

\bigskip

\emph{1. Permutation symmetry.} If $\varphi^{\pi}$ is the Boolean function
obtained from $\varphi$ by permuting its arguments, we require $p(\varphi
^{\pi})=p(\varphi)$.

\bigskip

\emph{2. Balance.} The distribution $p$ is supported on Boolean functions such
that $\varphi(x_{1},\dots,x_{k})=\varphi(-x_{1},\dots,-x_{k})$. This condition
implies that the odd Fourier coefficients of $\varphi$ are zero.

\bigskip

\emph{3. Feasibility.} For each Boolean function $\varphi$ in the support of
$p$, every partial assignment $\left(  x_{1},\ldots,x_{k-1}\right)  $ can be
extended to a satisfying assignment $\left(  x_{0},x_{1},\ldots,x_{k-1}%
\right)  $ of $\varphi$. This condition implies that $\left\Vert
\varphi\right\Vert ^{2}\geq1/2$, and together with the balance condition,
implies that all the variables of $\varphi$ have the same influence, namely,
$\operatorname*{I}_{i}\left(  \varphi\right)  =\frac{1-||\varphi||^{2}}{2}$.

\bigskip

\emph{4. Dominance of balanced assignments.} For every $\theta\in\left[
-1,1\right]  $,
\[
\mathbb{E}_{\varphi}\log\left\Vert \varphi\right\Vert _{\theta}\leq
\mathbb{E}_{\varphi}\log\left\Vert \varphi\right\Vert \text{.}%
\]
This condition implies that, in a typical random instance, most solutions are
balanced in the sense that they have almost as many $+1$'s as $-1$%
's.\vspace{0.2cm}

\bigskip

While our ultimate goal is to exhibit results as $k\rightarrow\infty$, the
probability distribution $p$ over the functions $\varphi:\left\{
-1,1\right\}  ^{k}\rightarrow\left\{  0,1\right\}  $ must be defined for
\emph{every} $k$, and some agreement should exist between such probability
distributions for different $k$'s. In our work this agreement is given by two
conditions concerning the derivative of the clauses in the support of $p$:

$\bigskip$

\noindent\textbf{(a)} $l_{1}$ \emph{norm of the Fourier transform grows at
most polynomially in }$k$. That is, for every $\varphi\in\operatorname*{supp}%
(p)$,
\begin{equation}
\sum_{Q}\left\vert \varphi_{Q}^{\left(  i\right)  }\right\vert \leq k^{a}\;,
\label{fourier2}%
\end{equation}
for some constant $a$ not depending on $k$.

%%A: I do not understand: This is stronger than the above condition!

\bigskip

\noindent\textbf{(b)} \emph{`Small weight'} \emph{Fourier coefficients are
small.} There is a constant $C>0$ (not depending on $k$) such that for every
$\varphi\in\operatorname*{supp}\left(  p\right)  $,
\begin{equation}
\;\;\;\;\left\Vert \operatorname*{T}\nolimits_{\theta}\varphi^{\left(
i\right)  }\right\Vert ^{2}\leq e^{-Ck\,\left(  1-\theta\right)  }\left\Vert
\varphi^{\left(  i\right)  }\right\Vert ^{2}\,\text{, }\theta\in\left[
0,1\right]  \text{.}\label{fourier}%
\end{equation}
The above implies in particular, that for any fixed $\ell$, there exists
$A_{\ell}>0$ (independent of $k$), such that
\begin{equation}
\sum_{1\leq|Q|\leq\ell}|\varphi_{Q}|^{2}\leq A_{\ell}e^{-Ck/2}\sum_{\left\vert
Q\right\vert \geq1}|\varphi_{Q}|^{2}\text{.}\label{fourier3}%
\end{equation}
An equivalent formulation of Eq. (\ref{fourier}) (with a possibly different
constant $C$) is
\begin{equation}
\;\;\;\;\left(  \operatorname*{T}\nolimits_{\theta}\varphi^{\left(  i\right)
},\varphi^{\left(  i\right)  }\right)  \leq e^{-Ck\,\left(  1-\theta\right)
}\left\Vert \varphi^{\left(  i\right)  }\right\Vert ^{2}\,\text{, }\theta
\in\left[  0,1\right]  \text{.}\label{fourier4}%
\end{equation}

%A: I do not think this explanation clarifies much.
%This condition emerges by considering how the influence of the clause
%$\varphi$ is modified by the noise operator. For instance, consider the
%function $\lambda_{\varphi}\left(  \theta\right)  \overset{def}{=}%
%\frac{\operatorname*{I}_{i}\left(  \operatorname*{T}\nolimits_{\theta}%
%\varphi\right)  }{\operatorname*{I}_{i}\left(  \varphi\right)  }$. Condition
%(\ref{fourier}) implies that the derivative of $\lambda_{\varphi}\left(
%\theta\right)  $ at $\theta=1$ is of order $\Omega\left(  k\right)  $ as
%$k\rightarrow+\infty$. This means that for any $\epsilon$ small enough, if we
%randomly negate an average of $\epsilon$ variables \textit{[R: Correction
%here] }of the clause $\varphi$, the average influence will disminish by a
%factor of order $\Omega\left(  \epsilon\right)  $.

\bigskip

\noindent\textbf{Results.} An ensemble of binary $k$-CSP's will be
characterized by the following quantities.
\[
\frac{1}{\Omega_{k}}\overset{def}{=}\mathbb{E}_{\varphi}\frac
{2\operatorname*{I}_{1}\left(  \varphi\right)  }{\left\Vert \varphi\right\Vert
^{2}}\,\text{,\qquad}\frac{1}{\widehat{\Omega}_{k}}\overset{def}{=}%
-\mathbb{E}_{\varphi}\log\Bigl(  \left\Vert \varphi\right\Vert ^{2}\Bigr)
\,.
\]
Notice that $\Omega_{k}\leq\widehat{\Omega}_{k}$ and $\Omega_{k}%
\approx\widehat{\Omega}_{k}$, whenever the influence is relatively small, or
equivalently, when the norm is close to $1$.

\begin{proposition}
\label{propo:SATCSP} A random binary constraint satisfaction instance from the
$\operatorname*{CSP}(n,\alpha,p)$ ensemble is satisfiable, with high
probability, if $\alpha<\alpha_{\mathrm{s}}(k)$, where
\[
\Omega_{k}\,\log2\;\;\{1+o(1)\}\leq\alpha_{\mathrm{s}}(k)\leq\widehat{\Omega
}_{k}\,\log2\;\;\{1+o(1)\}\,.
\]
Vice versa, if $\alpha>\alpha_{\mathrm{s}}(k)(1+o(1))$, then with high
probability, a $\operatorname*{CSP}(n,\alpha,p)$ instance is unsatisfiable.
\end{proposition}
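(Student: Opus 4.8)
The plan is to use a first- and second-moment argument on the number of satisfying assignments $Z = \sum_{\underline x} \prod_{a=1}^m \varphi_a(x_{i_a(1)},\dots,x_{i_a(k)})$. For the upper bound on $\alpha_{\mathrm s}(k)$, I would compute $\E Z$. Conditioning on the fraction of $+1$'s in $\underline x$ being $(1+\theta)/2$, each clause is satisfied independently with probability $\|\varphi_a\|_\theta^2$ (in expectation over the clause distribution and the choice of variable indices, using that the $k$ indices are i.i.d. uniform, so each argument is $+1$ with probability $\approx (1+\theta)/2$). Hence $\E Z \approx \sum_{\theta} \binom{n}{n(1+\theta)/2} \big(\E_\varphi \|\varphi\|_\theta^2\big)^m$. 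Taking logarithms, the exponential growth rate is $\max_\theta\big[ H((1+\theta)/2)\log 2 \cdot n + m \log \E_\varphi\|\varphi\|_\theta^2\big]$; by Jensen together with the Dominance of balanced assignments (Assumption 4), the maximiser is $\theta = 0$ up to lower-order terms — this is exactly where Assumption 4 and the Fourier decay condition (\ref{fourier3}) are used to control the behaviour near $\theta=0$. So $\frac1n\log\E Z \approx \log 2 + \alpha\, \log\E_\varphi\|\varphi\|^2$, and since $\log\E_\varphi\|\varphi\|^2 \le \E_\varphi\log\|\varphi\|^2 = -1/\widehat\Omega_k$ is false in the wrong direction, I instead argue: $\E Z \to 0$ once $\log 2 + \alpha\,\E_\varphi\log\|\varphi\|^2 < 0$, i.e. $\alpha > \widehat\Omega_k \log 2$, after replacing $\log\E_\varphi\|\varphi\|^2$ by $\E_\varphi\log\|\varphi\|^2$ up to $o(1)$ corrections (valid because the clause-to-clause fluctuations of $\log\|\varphi_a\|^2$ are $O(1)$ and there are $m = \Theta(n)$ of them, so a concentration/typicality argument applies to $\log\prod_a\|\varphi_a\|^2$). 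Then Markov gives unsatisfiability w.h.p. for $\alpha > \widehat\Omega_k\log 2\,(1+o(1))$.

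For the lower bound $\alpha_{\mathrm s}(k) \ge \Omega_k \log 2\,(1+o(1))$, I would run a second-moment argument on a suitably weighted count — typically the number of balanced satisfying assignments, or $Z$ weighted to kill the contribution of atypically unbalanced pairs. The key computation is $\E[Z^2] = \sum_{\underline x,\underline y}\big(\E_\varphi\E_{\text{indices}}[\varphi(x)\varphi(y)]\big)^m$, which depends on the overlap $\rho = \frac1n\sum_i x_i y_i$; the per-clause factor is (in expectation) $\big(\T_{\text{something}}\varphi,\varphi\big)$-type quantity, expressible through the Fourier coefficients $\varphi_Q$ and $\rho^{|Q|}$. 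One shows the overlap profile $\rho \mapsto \frac1n\log(\text{summand})$ is maximised at $\rho = 0$ (uncorrelated pair) so that $\E[Z^2] \le (1+o(1))(\E Z)^2 \cdot \text{poly}$, whence $Z > 0$ w.h.p. by Paley–Zygmund. The threshold location $\Omega_k\log 2$ emerges because the relevant first-moment rate, when one tracks the linearised ($|Q|=2$) contribution that governs the second moment near $\rho=0$, is controlled by $\E_\varphi 2\mathrm I_1(\varphi)/\|\varphi\|^2 = 1/\Omega_k$; the Feasibility assumption identifies $\mathrm I_i(\varphi) = (1-\|\varphi\|^2)/2$ and the Fourier conditions (\ref{fourier}), (\ref{fourier3}) guarantee the higher Fourier levels are exponentially negligible, so that the Gaussian-type approximation of the overlap integral near $\rho=0$ is legitimate and the $\rho=0$ point is a strict local (hence global, after checking the endpoints and using Assumption 4 to rule out $\rho$ near $\pm 1$) maximum.

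The main obstacle is the second-moment step: showing the overlap profile has no competing maximum away from $\rho=0$. A naive second moment typically fails (the profile can have a secondary peak), which is precisely why one needs the weighted/small-subgraph-conditioned count and why assumptions (\ref{fourier}) and (\ref{fourier3}) are imposed — they force $\sum_{|Q|\ge 1}\varphi_Q^2$ to be dominated by the top Fourier levels, flattening the profile away from the endpoints and making the curvature at $\rho = 0$ controllable uniformly in $k$. I would isolate this as a separate technical lemma (the "second moment bound" the introduction says is deferred to the appendix), carry out the Laplace-type estimate of $\sum_\rho \exp\{n\,\psi_k(\rho)\}$ there, and here only invoke it; the remaining pieces (first moment, Markov, Paley–Zygmund, transfer from "balanced $Z$" to "satisfiable") are routine. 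A secondary subtlety is handling multi-edges/short cycles in the factor graph and the slight non-uniformity from sampling indices with replacement, both absorbed into the $o(1)$ by standard arguments.
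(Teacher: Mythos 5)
Your proposal is correct and follows essentially the same route as the paper's proof in Appendix~\ref{sec:SecondMom}: the paper realizes your ``concentration of $\log\prod_a\|\varphi_a\|^2$'' step by explicitly conditioning the ensemble on the empirical clause distribution $L_n$ staying within $n^{-1/2+\gamma}$ of $p$ (the $\overline{\operatorname*{CSP}}(n,p,\alpha)$ construction), which replaces the annealed rate $\log\E_\varphi\|\varphi\|_\theta^2$ by the quenched rate $\E_\varphi\log\|\varphi\|_\theta^2$, exactly as you require; it then applies the dominance-of-balanced-assignments condition (not Jensen, which as you noticed points the wrong way) to locate the maximizer at $\theta=0$ and conclude the Markov step. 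For the lower bound the paper works, as you predict, with the balanced count $Z_{\mathrm b}$ and an overlap-profile function $\Phi(\theta)=H(\theta)+\alpha\,\E_{\varphi}\log[(\varphi,\operatorname*{T}_\theta\varphi)/\|\varphi\|^4]$, showing $\Phi$ is uniquely maximized at $\theta=0$ by splitting the interval into three regimes and invoking Eqs.~(\ref{fourier}) and (\ref{fourier3}) to kill the contribution of low-degree Fourier weight; no small-subgraph conditioning is needed, the balanced restriction and the Fourier hypotheses suffice. The only small correction to your heuristic: $\Omega_k$ enters through $\E_\varphi\,2\operatorname*{I}_1(\varphi)/\|\varphi\|^2$, which is the \emph{total} nontrivial Fourier weight per coordinate (showing up in $\de\Phi/\de\theta$ as $\theta\to 1$, where the naive-reconstruction danger lies), not the $|Q|=2$ contribution near $\theta=0$ (which is exponentially small by (\ref{fourier3}) and only guarantees the local strict maximum at $\theta=0$ with $-\Phi(\theta)=\Omega(\theta^2)$).
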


Given an instance of $\operatorname*{CSP}(n,\alpha,p)$, a cluster of solutions
is any equivalence class of solutions under the (closure of the) relation
$\underline{x}\simeq\underline{x}^{\prime}$ if $d_{\mathrm{Hamming}%
}(\underline{x},\underline{x}^{\prime})\leq d_{\mathrm{max}}$ for some
$d_{\mathrm{max}}=o(n)$. The set of solutions is \emph{clustered} if it is
partitioned into exponentially many clusters.

\begin{theorem}
\label{thm:ClustCSP} The set of solutions of an instance from the
$\operatorname*{CSP}(n,\alpha,p)$ ensemble is clustered, with high
probability, if $\alpha>\alpha_{\mathrm{d}}(k)$, where
\[
\alpha_{\mathrm{d}}(k)=\frac{\Omega_{k}}{k}\,\{\log k+\operatorname*{o}(\log
k)\}\,.
\]

\end{theorem}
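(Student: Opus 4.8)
The plan is to establish clustering by a first-moment/second-moment argument on the number of \emph{pairs} of solutions at a given normalized distance, combined with a rigidity (``frozen variable'') argument. Concretely, fix $\underline{x}$ a uniformly random solution (or work with the annealed pair measure) and consider the quantity $Z(\delta)=\#\{(\underline{x},\underline{x}')\text{ solutions}: d_{\mathrm{Hamming}}(\underline{x},\underline{x}')=\delta n\}$. The clustering claim will follow if we show that, for $\alpha>\alpha_{\mathrm{d}}(k)$, the function $\delta\mapsto \tfrac 1n\mathbb{E}\log \mathbb{E}[Z(\delta)]$ (or its second-moment-corrected version) has a gap: it is positive near $\delta=0$ and near $\delta=1/2$ (the balanced-distance regime, which by Assumption~4 dominates) but is negative on an interval $(\delta_0,\delta_1)$ bounded away from both. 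The existence of such a forbidden band of pair-distances, together with the fact that the total number of solutions is exponentially large (from the lower bound in Proposition~\ref{propo:SATCSP}), forces the solution set to break into exponentially many clusters, each of diameter $o(n)$ around $\delta=0$, separated by gaps of size $\Omega(n)$.

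First I would set up the annealed pair computation. For a single clause on $k$ random variables, the probability that both $\underline{x}$ and $\underline{x}'$ (which agree on a fraction $1-\delta$ of coordinates, so each chosen coordinate independently ``differs'' with probability $\delta$) satisfy $\varphi$ is exactly an inner product of the form $(\operatorname*{T}_{1-2\delta}\varphi,\varphi)$ up to normalization — this is precisely where the noise operator $\operatorname*{T}_\theta$ enters, with $\theta=1-2\delta$. Taking $\mathbb{E}_\varphi$ and raising to the $m=\alpha n$ power, the exponential rate of $\mathbb{E}[Z(\delta)]$ becomes
\[
\Sigma(\delta)=H(\delta)\log 2 + \alpha\,\mathbb{E}_\varphi\log\frac{(\operatorname*{T}_{1-2\delta}\varphi,\varphi)}{\|\varphi\|^2}+\text{(lower-order)},
\]
where $H(\cdot)$ is the binary entropy. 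Then I would expand $(\operatorname*{T}_{1-2\delta}\varphi,\varphi)=\sum_Q (1-2\delta)^{|Q|}\varphi_Q^2$ in the Fourier basis and use Assumptions~(a)–(b): Eq.~\eqref{fourier3} says the low-weight Fourier mass is negligible, so for $\delta$ not too small the ratio is dominated by the constant term $\varphi_\emptyset^2=\|\varphi\|^2$ up to a multiplicative error controlled by $e^{-Ck(1-|1-2\delta|)}=e^{-2Ck\delta}$ (and its $\delta\to 1-\delta$ reflection, via the balance/even-Fourier condition). This is what makes the ``middle'' of the curve behave like $\Sigma(\delta)\approx H(\delta)\log 2 - \alpha/\widehat\Omega_k$ roughly, which is negative precisely when $\alpha$ exceeds something of order $\widehat\Omega_k\log 2$ over a window of $\delta$'s — but the delicate point is the \emph{small-}$\delta$ regime, where $\delta$ of order $(\log k)/k$ is the critical scale.

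The main obstacle — and the place where $\alpha_{\mathrm{d}}(k)=\frac{\Omega_k}{k}\{\log k+o(\log k)\}$ comes from — is the behavior of $\Sigma(\delta)$ for $\delta\sim c(\log k)/k$. Here $H(\delta)\log 2\approx \delta\log(1/\delta)\approx \delta\log k$, while the clause term, using $I_1(\varphi)=\tfrac{1-\|\varphi\|^2}{2}$ and the definition $1/\Omega_k=\mathbb{E}_\varphi 2I_1(\varphi)/\|\varphi\|^2$, behaves to leading order like $-\alpha k\delta/\Omega_k$ (the linear-in-$\delta$ term of the Fourier expansion, since $(\operatorname*{T}_{1-2\delta}\varphi,\varphi)/\|\varphi\|^2 = 1 - 2k\delta \cdot \frac{I_1(\varphi)}{\|\varphi\|^2}+O(\delta^2)$ after averaging over which coordinate flips). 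Balancing $\delta\log k$ against $\alpha k\delta/\Omega_k$ gives the threshold $\alpha k/\Omega_k\sim\log k$, i.e. $\alpha\sim\frac{\Omega_k}{k}\log k$; above it, $\Sigma(\delta)<0$ for all $\delta$ in a band starting just above $0$. I would need to be careful that (i) the $O(\delta^2)$ and higher Fourier corrections, bounded via Eq.~\eqref{fourier} / \eqref{fourier4}, do not overwhelm the linear term on this scale — this is exactly what conditions (a)–(b) are designed to guarantee — and (ii) the first-moment upper bound on $Z(\delta)$ genuinely implies the geometric statement; for the latter one argues that any two solutions in a common cluster are connected by a path of $o(n)$-steps, hence cannot jump the forbidden band, so the number of clusters is at least $\exp\{n(\Sigma(0)-o(1))\}/\exp\{n\cdot o(1)\}$, which is exponential by the satisfiability lower bound. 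A standard subadditivity/interpolation argument (or an explicit union bound, since only an \emph{upper} bound on $Z(\delta)$ in the band is needed for the clustering conclusion) handles the passage from annealed to the actual structure; no second moment is required for this direction, which simplifies matters considerably.
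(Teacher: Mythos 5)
The paper does not actually present its own detailed proof of Theorem~\ref{thm:ClustCSP}. In Section~\ref{sec:generalstrategy} the authors state that the clustering threshold ``can be upper bounded through an analysis of the recursive `whitening' process that associates to each cluster a single configuration in an extended space,'' and that the improved bound ``can be obtained by approximating the CSP ensemble with an appropriate `planted' ensemble,'' deferring entirely to \cite{AchlioCojas}. So the blueprint in the paper is whitening plus planted-vs-quenched comparison, not pair-distance counting. Your proposal, by contrast, takes the older Mora--M\'ezard--Zecchina / Achlioptas--Ricci-Tersenghi route of a ``forbidden band'' in the pair-overlap profile. These are genuinely different arguments.

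The distance-profile route can in principle be made to give the same threshold, and your balance of $H(\delta)$ against $\alpha k \delta / \Omega_k$ does land on $\alpha \sim (\Omega_k / k)\log k$. But there is a real gap exactly where you wave it away. You assert that ``no second moment is required for this direction'' and that a ``first-moment upper bound on $Z(\delta)$'' suffices. The unnormalized annealed pair count $\mathbb{E} Z(\delta)$ is exponentially \emph{large}, not small, for every $\delta \in (0,1/2)$ all the way up to the satisfiability threshold; Markov's inequality applied to it yields nothing in the regime $\alpha \sim (\Omega_k/k)\log k$. The quantity whose sign actually flips at that density is the normalized version $\mathbb{E}[Z(\delta)]/\mathbb{E}[Z]$ -- i.e.\ the planted-model expectation of the number of solutions in a shell of radius $\delta n$ around a planted solution. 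In fact, that is implicitly what your balance equation computes (the constant $\log 2 - \alpha/\widehat{\Omega}_k$ terms must cancel for your calculation to be the right one). But turning planted-model expectations into statements about a \emph{typical} instance and a \emph{typical} solution is precisely the nontrivial step: one needs either the planted-to-quenched contiguity argument of \cite{AchlioCojas}, or concentration of $Z_{\mathrm{b}}$ to within $e^{o(n)}$ of its mean (the binary-CSP analogue of Lemma~\ref{acojas}), both of which sit on top of the second-moment machinery. Without that transfer the implication ``negative planted rate $\Rightarrow$ forbidden band in the actual solution set'' does not follow, and the argument does not close. (A minor additional remark: the critical scale is $\delta \sim 1/k$ rather than $(\log k)/k$, though the threshold this produces is the same.)
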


Given a measure $\mu(\underline{x})$ over variable assignments in
$\{+1,-1\}^{V}$, the reconstruction problem is said to be unsolvable if
correlations with respect to $\mu$ decay rapidly with the distance $r$ on $G
$. More precisely, if $\mu_{i,\sim r}$ denotes the joint distribution of
$x_{i}$ and $\{x_{j}:\,d_{G}(i,j)\geq r\}$, then $\ \lim_{r\rightarrow\infty
}\lim\sup_{n\rightarrow\infty}\mathbf{E}\Vert\mu_{i,\sim r}-\mu_{i}\mu_{\sim
r}\Vert_{\operatorname*{TV}}=0$.

\begin{theorem}
\label{reconstruction}Let $\mu(\underline{x})$ be the uniform measure over
solutions of an instance from the $\operatorname*{CSP}(n,\alpha,p)$ ensemble.
The reconstruction problem is solvable
%A: Already included expectation in the definition
%, with high probability,
for $\mu$ if $\alpha>\alpha_{\mathrm{r}}(k)$, where
\[
\alpha_{\mathrm{r}}(k)=\frac{\Omega_{k}}{k}\,\{\log k+\operatorname*{o}(\log
k)\}\,.
\]
Vice versa, the reconstruction problem is unsolvable if $\alpha<\alpha
_{\mathrm{r}}(k)$.
\end{theorem}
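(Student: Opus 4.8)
The plan is to transport the question from the random factor graph to a Galton--Watson tree and then to invoke matching reconstruction bounds on that tree; the same two-step scheme underlies the coloring case of Section~\ref{sec:ColorRec}. First I would identify the local weak limit of the factor graph of a $\operatorname*{CSP}(n,\alpha,p)$ instance, rooted at a uniformly random variable: it is the two-type Galton--Watson tree in which each variable node has an independent $\operatorname*{Poisson}(k\alpha)$ number of clause children, each clause node has exactly $k-1$ variable children, and clauses carry i.i.d. labels $\varphi\sim p$. I would then invoke the equivalence principle of \cite{mon-ger}, extended to factor graphs with $k$-ary clauses: under a monotone-coupling hypothesis on the ensemble, reconstruction for the uniform measure over solutions of the random instance is solvable if and only if it is solvable for the corresponding broadcast measure on this tree. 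Checking the hypothesis is the $k$-CSP analogue of the verification done for colorings in Section~\ref{sec:ColorRec}; it rests on permutation symmetry and balance together with feasibility, which let the free-boundary measure and any forced-boundary measure be monotonically coupled down the tree, while the second moment estimate of Appendix~\ref{sec:SecondMom} controls the effect of the rare short cycles in the random instance, uniformly in $r$.

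It then remains to locate the tree reconstruction threshold, which is the content of Section~\ref{sec:TreeRec} and whose proof I would organize as follows. Sampling from $\mu$ on the tree is a broadcast: fix the root spin and push it down through the clauses; reconstruction is solvable iff, with probability bounded away from $0$ as the depth $r\to\infty$, the conditional law of the root spin given the spins at depth $r$ stays at total-variation distance bounded away from $0$ from the uniform free marginal. The key object is the random ``message'' $\eta_v\in\{\{+1\},\{-1\},\{+1,-1\}\}$ recording which values of $x_v$ are compatible with the boundary below $v$, together with $\nu_v=\mathbf{E}[x_v\mid\text{boundary below }v]$; because each clause couples $k$ variables at once, this is not a binary-symmetric-channel broadcast on a tree, and the dominant contribution to $\nu_v$ comes from the ``hard'' (forcing) part rather than from the residual pairwise correlation. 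For the direction ``$\alpha>\alpha_{\mathrm r}(k)\Rightarrow$ solvable'' I would show that, once we condition on the absence of contradictions (as the Gibbs measure automatically does), the events $\{\eta_v\ \text{a singleton}\}$ stochastically dominate a supercritical multitype branching process as soon as $k\alpha$ exceeds $\Omega_k(\log k+\omega(1))$, so that the root is forced --- hence trivially reconstructible --- with positive probability. Conditions (\ref{fourier2})--(\ref{fourier}) are essential here: combined with balance they confine the Fourier mass of $\varphi^{(i)}$ to subsets of size $\Theta(k)$, so that a typical clause forces one of its variables only when (essentially) all its other $k-1$ are already forced, an event of probability $\rho^{k-1}$ times a clause-dependent factor governed by $\mathbb{E}_\varphi[2\operatorname*{I}_1(\varphi)/\Vert\varphi\Vert^2]=1/\Omega_k$; the sharp constant in front of $\log k$ falls out of the fixed-point equation because the ``defect'' $\varepsilon=1-\rho$ at a near-percolated fixed point is penalized by a factor $e^{(k-1)\varepsilon}$, whose optimal size $\varepsilon\asymp 1/(k\log k)$ makes that penalty negligible and leaves exactly $\log k$. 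This is the same mechanism underlying the clustering threshold of Theorem~\ref{thm:ClustCSP}, which is why the two thresholds agree.

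For the converse, ``$\alpha<\alpha_{\mathrm r}(k)\Rightarrow$ unsolvable,'' I would run the same recursion forward and show it contracts. Concretely, I would bound a functional of the pair $(\eta_v,\nu_v)$ that dominates $\Vert\mu_{i,\sim r}-\mu_i\mu_{\sim r}\Vert_{\operatorname*{TV}}$ --- for instance $\mathbf{E}[\nu_v^2]$ supplemented by $\Pr[\eta_v\ \text{a singleton}]$ --- and prove that one level of the tree multiplies it by a factor below $1$ when $k\alpha<\Omega_k(\log k-\omega(1))$: the forcing part is subcritical in this regime by the computation above, and the exponential bound (\ref{fourier}) caps the residual ``soft'' transmission, which enters multiplied by the mean variable degree $k\alpha$ and is therefore $o(1)$ in the relevant range, while dominance of balanced assignments keeps the recursion near the symmetric fixed point where these estimates hold; a two-copy second-moment bound along the lines of Appendix~\ref{sec:SecondMom} can be used to control the coupled partition function bounding this functional. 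Summing the geometric decay over $r$ gives $\lim_{r\to\infty}\limsup_{n\to\infty}\mathbf{E}\Vert\mu_{i,\sim r}-\mu_i\mu_{\sim r}\Vert_{\operatorname*{TV}}=0$.

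The step I expect to be the main obstacle is closing the fixed-point/contraction analysis of the hypergraph broadcast with the \emph{sharp} constant simultaneously on both sides: the positive direction is naturally phrased as discrete forcing percolation on the tree and the negative one as an analytic contraction, and forcing the window between ``forcing percolates'' and ``all correlations provably decay'' down to $o(\Omega_k\log k/k)$ requires using the Fourier hypotheses (\ref{fourier2})--(\ref{fourier3}) and the balance/dominance conditions to their full strength in both arguments at once. A secondary difficulty is the $k$-ary extension of \cite{mon-ger}: one must either push the interpolation/monotonicity argument through for hypergraph factor models or argue directly that the depth-$o(n)$ neighborhood of a typical variable is a tree and that the contribution of the few short cycles is negligible, uniformly in $r$.
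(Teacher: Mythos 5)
Your high-level plan --- prove a sharp tree-reconstruction threshold and then transfer it to the random factor graph via the reduction of \cite{mon-ger} --- is the paper's strategy, and your reading of the upper (solvable) direction as forcing percolation with a supercritical branching-process fixed point is exactly how the upper bound in Theorem~\ref{thm:TreeRecoKCSP} is proved. Two places in the proposal, however, contain genuine gaps.

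The first concerns the transfer step. The hypothesis that \cite{mon-ger} requires, and that the paper actually verifies, is not a monotone-coupling property of the ensemble but the \emph{sphericity} condition of step (3) in Section~\ref{sec:generalstrategy}: the joint type $\nu_{12}$ of two independent uniform solutions must concentrate at the flat matrix $\overline\nu$. This is a global geometric statement about the solution set, established by the constrained second-moment estimates (Appendix~\ref{sec:SecondMom} in the binary case, Section~\ref{sec:ColorRec} for colorings); it does not follow from permutation symmetry, balance and feasibility, and it is not about controlling rare short cycles uniformly in $r$. A monotone coupling of free and forced boundary conditions is a different and considerably stronger claim, and for these non-attractive $k$-ary models there is no reason to expect it to hold; if you set out to prove that, the argument would stall.

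The second gap is in the tree lower bound. You frame it as a contraction of a functional of $(\eta_v,\nu_v)$ from the first level onward, using~(\ref{fourier}) to cap the soft transmission. But the Fourier bound gives $F_k(\theta)\le e^{-Ck(1-\theta)}/\Omega_k$, which is worthless when $\theta$ is close to $1$, and at depth one the average bias is indeed $1-o(1)$ throughout the relevant range $k\alpha=\Theta(\Omega_k\log k)$; the recursion does not contract from $h\approx1$. The paper's resolution is the exact analysis of the first level (Lemma~\ref{lemma:FirstStep}): at depth one, $h_{\mathbf T}(\mathbf x_1^+)$ takes values only in $\{0,1\}$, and the subcritical-forcing computation gives the quantitative bound $h_1^{\operatorname*{ave}}\le 1-k^{-1+\delta}$. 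Only this, fed through the non-negativity refinement~(\ref{eq:ineq3}) together with~(\ref{fourier}), yields the super-polynomially small $\widehat h_2^{\operatorname*{ave}}\le e^{-Ck^\delta}/\Omega_k$, after which the $F_k+R_k$ recursion of Lemma~\ref{lemma:IterBound} does contract. Your sketch never escapes the $\theta\approx1$ region, and the two-copy second-moment bound you propose to invoke here is not part of the tree argument at all --- it belongs only to the tree-to-graph transfer.
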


Thus, a key result of the present paper is that $\alpha_{\mathrm{d}}(k)$ and
$\alpha_{\mathrm{r}}(k)$ do \emph{coincide for a large family of ensembles}
(up to the slackness, in the second order terms, of our bounds).

\bigskip

\textbf{Example:\ 2-coloring hypergraphs}. Let us consider the ensemble of
CSP's consisting of clauses of the type $\varphi$, where $\varphi\left(
x_{1},\ldots,x_{k}\right)  =\mathbb{I}\left(
%TCIMACRO{\tsum }%
%BeginExpansion
{\textstyle\sum}
%EndExpansion
x_{i}\notin\left\{  -k,k\right\}  \right)  $. The $\operatorname*{CSP}%
(n,\alpha,p)$ in this case, corresponds to the distribution of 2-colorings of
a random hypergraph on $n$ vertices and $\alpha n$ edges, with edge size $k$,
and each edge chosen independently and uniformly at random.

The conditions 1-3 clearly hold for this model and the dominance of balance
assignments follows after checking that $\left\Vert \varphi\right\Vert
_{\theta}=1-\left(  \frac{1+\theta}{2}\right)  ^{k}-\left(  \frac{1-\theta}%
{2}\right)  ^{k}$ maximizes at $\theta=0$. To establish the conditions
(\ref{fourier2}), notice that $\varphi_{Q}^{\left(  i\right)  }=-\frac
{1}{2^{k}}[1-\left(  -1\right)  ^{\left\vert Q\right\vert }]$, which clearly
implies that the $l_{1}$ norm of the fourier transform is bounded. To check
(\ref{fourier}), notice that $\;\frac{\left(  \operatorname*{T}%
\nolimits_{\theta}\varphi^{\left(  i\right)  },\varphi^{\left(  i\right)
}\right)  }{\left\Vert \varphi^{\left(  i\right)  }\right\Vert ^{2}\,}=\left(
\frac{1+\theta}{2}\right)  ^{k-1}-\left(  \frac{1-\theta}{2}\right)
^{k-1}\leq e^{-k\left(  1-\theta\right)  /2}$ for all $\theta\in\left[
0,1\right]  $. 

\ 

An easy computation shows that $\Omega_{k}=2^{k-1}-1$ and $\frac{1}%
{\widehat{\Omega}_{k}}=-\log(1-2^{-k+1})$, therefore we have:%

\

\begin{tabular}
[c]{|c|c|c|c|}\hline
& $\text{Reconstruction - Clustering}$ & $\text{Lower bound satisfiability}$ &
$\text{Upper bound satisfiability}$\\\hline
$\text{2-coloring}$ & $\frac{2^{k-1}}{k}\,\left[  \log k+\operatorname*{o}%
(\log k)\right]  $ & $2^{k-1}\log2\left[  1+\operatorname*{o}(1)\right]  \,$ &
$2^{k-1}\log2\left[  1+\operatorname*{o}(1)\right]  $\\\hline
\end{tabular}

\bigskip

\textbf{Example: Not All Equal $k-$SAT}. Let us consider now an ensemble of
CSP's consisting of clauses of type $\left\{  \varphi_{s}\right\}
_{s\in\left\{  +1,-1\right\}  ^{k}}$, where $\varphi_{s}\left(  x_{1}%
,\ldots,x_{k}\right)  =\mathbb{I}\left(
%TCIMACRO{\tsum }%
%BeginExpansion
{\textstyle\sum}
%EndExpansion
x_{i}s_{i}\notin\left\{  -k,k\right\}  \right)  $ and $p\left(  \varphi
_{s}\right)  =2^{-k}$ for each $s\in\left\{  +1,-1\right\}  ^{k}$. In this
case, the $\operatorname*{CSP}\left(  n,\alpha,p\right)  $ model corresponds
to the distribution of NAE $k-$SAT instances for a random formula in $n$
variables, consisting of $\alpha n$ random clauses, each with $k$ literals.

For this model, the conditions 1-3 are easily verified. The dominance of
balance assignments follows from
\[
\mathbb{E}_{s}\log\left\Vert \varphi\right\Vert _{\theta}\leq\log
\mathbb{E}_{s}\left\Vert \varphi\right\Vert _{\theta}=\log\mathbb{E}%
_{s}\left(  1-%
%TCIMACRO{\tprod \nolimits_{i=1}^{k}}%
%BeginExpansion
{\textstyle\prod\nolimits_{i=1}^{k}}
%EndExpansion
\frac{1+s_{i}\theta}{2}-%
%TCIMACRO{\tprod \nolimits_{i=1}^{k}}%
%BeginExpansion
{\textstyle\prod\nolimits_{i=1}^{k}}
%EndExpansion
\frac{1-s_{i}\theta}{2}\right)  =\mathbb{E}_{s}\log\left\Vert \varphi
\right\Vert \text{.}%
\]
On the other hand, the Fourier expansion of $\varphi_{s}$ is given by
$\varphi_{s,Q}=-2^{-k}[\gamma_{Q}(s)+\gamma_{Q}(-s)]$. In particular
$\left\vert \varphi_{s,Q}\right\vert ^{2}=2^{-k}[1+\left(  -1\right)
^{\left\vert Q\right\vert }]$, so that both Eqs.~(\ref{fourier2}) and
(\ref{fourier}) hold along the same lines as the previous example. Indeed, in
this case we get the same values for $\Omega_{k}$ and $\widehat{\Omega}_{k}$,
so that, we have:

\

\begin{tabular}
[c]{|c|c|c|c|}\hline
& $\text{Reconstruction - Clustering}$ & $\text{Lower bound satisfiability}$ &
$\text{Upper bound satisfiability}$\\\hline
$\text{NAE-SAT}$ & $\frac{2^{k-1}}{k}\,\left[  \log k+\operatorname*{o}(\log
k)\right]  $ & $2^{k-1}\log2\left[  1+\operatorname*{o}(1)\right]  $ &
$2^{k-1}\log2\left[  1+\operatorname*{o}(1)\right]  $\\\hline
\end{tabular}

\bigskip

\textbf{Example:\ }$k$\textbf{-XOR formulas. } For an even integer $k$, the
$k$-XOR ensemble ($k$ even) consists of clauses of type $\left\{
\varphi_{\epsilon}\right\}  _{\epsilon=1,-1}$, where $\varphi_{\epsilon
}\left(  x_{1},\ldots,x_{k}\right)  =\frac{1}{2}\left(  \gamma_{\emptyset
}+\epsilon\gamma_{\left[  k\right]  }\right)  $. In this case, the
$\operatorname*{CSP}\left(  n,\alpha,p\right)  $ model corresponds to a system
of $\alpha n$ random linear equations in $\mathbb{Z}_{2}$, in which every
equation involves $k$ randomly chosen variables (with replacement) from a
total of $n$ possible variables.

Conditions 1-3 hold for $k$ even, and the dominance of balanced assignments
condition follows from the fact that $\mathbb{E}_{\varphi}\log\left\Vert
\varphi\right\Vert _{\theta}=\frac{1}{2}\log\left(  \frac{1-\theta^{2k}}%
{4}\right)  $, which is clearly maximized at $\theta=0$. The condition on
Fourier expansion of clauses for this model is straightforward:\ The Fourier
expansion of $\varphi_{\epsilon}$ is concentrated at $\emptyset$ and $\left[
k\right]  $, so that the Eq. (\ref{fourier2}) holds with $a=0$ and the Eq.
(\ref{fourier2}) holds with $C=1$.

In this case, we have that $\Omega_{k}=1$, while $\widehat{\Omega}_{k}%
=1/\log2$. Therefore, we have: \bigskip%

\begin{tabular}
[c]{|c|c|c|c|}\hline
& $\text{Reconstruction - Clustering}$ & $\text{Lower bound satisfiability}$ &
$\text{Upper bound satisfiability}$\\\hline
$\text{XOR-SAT}$ & $\frac{1}{k}\,\left[  \log k+\operatorname*{o}(\log
k)\right]  $ & $\log2+\operatorname*{o}(1)$ & $1+\operatorname*{o}(1)$\\\hline
\end{tabular}

\

We remark here that, in the case of XOR-SAT, the clustering and satisfiability
thresholds can be determined \emph{exactly} by exploiting  the underlying
group structure \cite{XOR1,XOR2} (see \cite{MezardMontanariBook} for a
discussion of the reconstruction problem  in XOR-SAT).

\subsection{$q$-ary ensembles: graph coloring}

The following result concerning the colorability and clustering of proper
colorings were proved by Achlioptas and Naor \cite{achlioassaf} and Achlioptas
and Coja-Oghlan \cite{AchlioCojas}.
%
%% PT: Perhaps refer to Bollobas as well as Luczak, here.

\begin{theorem}
\label{thm:qcol} \textrm{(Graph $q$-colorability \cite{achlioassaf}) } A
random graph with $n$ vertices and $n\alpha$ edges is satisfiable with high
probability if $\alpha<\alpha_{\mathrm{s}}(q)$, where
\[
\alpha_{\mathrm{s}}(q)=q\left[  \log q+o_{q}(1)\right]  \,.
\]
Vice versa, if $\alpha>\alpha_{\mathrm{s}}(q)(1+o_{q}(1))$, such a graph is
with high probability uncolorable.
\end{theorem}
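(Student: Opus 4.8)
The plan is to prove the two bounds separately: the upper bound $\alpha_{\mathrm{s}}(q)\le q[\log q+o_q(1)]$ by the first moment method, the matching lower bound by a weighted second moment argument in the style of \cite{achlioassaf}, and then to promote ``colorable with probability bounded away from zero'' to ``colorable with high probability'' using the sharp threshold for $q$-colorability.

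First I would handle the (easier) uncolorability direction. Let $Z$ count the proper $q$-colorings of a random graph with $n$ vertices and $m=n\alpha$ independent uniform edges. A coloring with colour-class fractions $\rho=(\rho_1,\dots,\rho_q)$ makes a given random edge bichromatic with probability $1-\|\rho\|_2^2\le 1-1/q$, with equality iff the coloring is balanced; summing over the $\mathrm{poly}(n)$ class profiles therefore gives $\mathbb{E}[Z]\le \mathrm{poly}(n)\cdot q^n(1-1/q)^{m}$. This tends to $0$ once $\alpha>\log q/\bigl(-\log(1-1/q)\bigr)=q\log q\,(1+o_q(1))$, and Markov's inequality then yields uncolorability with high probability, which is the ``vice versa'' statement.

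For the colorability direction I would use the second moment method, but---since unbalanced colorings inflate the variance---applied to the variable $Z$ counting only \emph{balanced} colorings (assume $q\mid n$). The second moment $\mathbb{E}[Z^2]=\sum_{\sigma,\tau}\Pr[\sigma,\tau\text{ both proper}]$ depends on the pair $(\sigma,\tau)$ only through the overlap matrix $\rho_{ab}=|\sigma^{-1}(a)\cap\tau^{-1}(b)|/n$, whose rows and columns all sum to $1/q$; for fixed $\rho$ there are $\exp(n\,\mathrm{H}(\rho)+o(n))$ such pairs, where $\mathrm{H}(\rho)=-\sum_{a,b}\rho_{ab}\log\rho_{ab}$, and a random edge is bichromatic for both colorings with probability $1-2/q+\|\rho\|_2^2$, so that
\[
\mathbb{E}[Z^2]=\exp\Bigl(n\max_{\rho}\bigl[\mathrm{H}(\rho)+\alpha\log(1-2/q+\|\rho\|_2^2)\bigr]+o(n)\Bigr).
\]
The barycentric overlap $\rho^{*}$, with all entries equal to $1/q^2$ (the one realized by two \emph{independent} colorings), gives exponent $2\log q+2\alpha\log(1-1/q)$, which matches $2\log\mathbb{E}[Z]$ up to $o(n)$; so if one shows $\rho^{*}$ is the \emph{global} maximizer for every $\alpha$ below the first-moment threshold, then, with the standard Laplace-method expansion around $\rho^{*}$ controlling the polynomial prefactors, $\mathbb{E}[Z^2]=O\bigl((\mathbb{E}[Z])^2\bigr)$ and Paley--Zygmund gives $\Pr[Z>0]\ge (\mathbb{E}[Z])^2/\mathbb{E}[Z^2]\ge c(q)>0$. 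Finally, $q$-colorability is a monotone graph property with a sharp threshold, so positive probability of being $q$-colorable at density $\alpha$ forces $q$-colorability with high probability at density $\alpha(1-\epsilon)$ for all $\epsilon>0$; combined with the first-moment upper bound this pins down $\alpha_{\mathrm{s}}(q)=q[\log q+o_q(1)]$.

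The main obstacle is exactly the optimization claim: that $g(\rho):=\mathrm{H}(\rho)+\alpha\log(1-2/q+\|\rho\|_2^2)$ is maximized over the Birkhoff-type polytope of doubly-$(1/q)$-stochastic matrices at its center $\rho^{*}$, up to the claimed threshold. The point $\rho^{*}$ is always a critical point, so one must control (i) the Hessian of $g$ at $\rho^{*}$, and (ii), more delicately, the boundary and near-diagonal regions, where overlaps with $\sigma\approx\tau$ always contribute a term of order $\mathbb{E}[Z]$ that must be shown to be exponentially smaller than $(\mathbb{E}[Z])^2$. This region-by-region analysis is where the \emph{sharp} constant $q\log q$ is won (a crude second moment loses a factor of $2$), and it is the technical heart of \cite{achlioassaf}.
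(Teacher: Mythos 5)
The paper does not give a proof of Theorem~\ref{thm:qcol}; it is cited verbatim from Achlioptas and Naor \cite{achlioassaf}, and the paper only re-uses the key ingredient of that proof (the optimization of $\mathcal{H}(v)+\alpha\mathcal{E}(v)$ over the scaled Birkhoff polytope, recorded here as Eq.~(\ref{eqq}) and invoked in Section~\ref{sec:ColorRec}). Your sketch is an accurate reconstruction of the Achlioptas--Naor strategy and is also in close parallel with the argument the paper does carry out in Appendix~\ref{sec:SecondMom} for the binary $k$-CSP case (balanced solutions, second moment over overlap profiles, Paley--Zygmund). The first-moment computation, the reduction to the polytope optimization with the correct exponent, and the use of Paley--Zygmund are all right, and the identification of the global-maximizer claim as the technical heart is exactly correct.

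Two small caveats worth flagging. First, what Achlioptas--Naor actually prove is that $\rho^{*}$ is the global maximizer only up to $\alpha\le(q-1)\log(q-1)$, not ``for every $\alpha$ below the first-moment threshold'' $q\log q$; this gap is real and is what prevents the second moment method from matching the first moment exactly, but since $(q-1)\log(q-1)=q\log q\,[1+o_q(1)]$ it is harmless for the $o_q(1)$ statement you are proving. Second, the boost from ``colorable with probability $\ge c(q)>0$'' to ``colorable w.h.p.'' deserves a citation: $q$-colorability does not in general have a known sharp threshold \emph{location}, but the Achlioptas--Friedgut sharp-threshold-\emph{width} theorem for $k$-colorability suffices for the argument as you phrase it (or, alternatively, the Achlioptas--Naor paper derives two-point concentration of the chromatic number directly, which implies the statement). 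Neither caveat is a gap in the substance of the argument.
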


\begin{theorem}
\textrm{(Clustering of $q$-colorings \cite{AchlioCojas}) }%
\label{thm:qcolclust} The set of proper $q$-colorings of random graph with $n$
vertices and $n\alpha$ edges is clustered with high probability if
$\alpha>\alpha_{\mathrm{d}}(q)$, where
\[
\alpha_{\mathrm{d}}(q)=\frac{q}{2}\,[\log q+o(\log q)]\,.
\]

\end{theorem}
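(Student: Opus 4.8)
The plan is to obtain the clustering result, and the constant $q/2$, by the second-moment method, following Achlioptas and Coja-Oghlan \cite{AchlioCojas}. Write $Z$ for the number of proper $q$-colorings of $G(n,n\alpha)$ and put $\phi(\alpha):=\log q+\alpha\log(1-1/q)$, so that $\mathbb{E}Z=e^{(1+o(1))\phi(\alpha)n}$ and $\phi(\alpha)=\Theta(\log q)>0$ on the whole range $\alpha<\alpha_{\mathrm{s}}(q)$ of Theorem~\ref{thm:qcol}. First I would record that a refinement of the second-moment computation behind Theorem~\ref{thm:qcol} gives, in the interior of that range --- hence in particular at $\alpha=\tfrac q2[\log q+o(\log q)]$ --- the sharp estimate $\mathbb{E}[Z^{2}]=(1+o(1))(\mathbb{E}Z)^{2}$, so that $Z=e^{(1+o(1))\phi(\alpha)n}$ with high probability. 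It then suffices to exhibit, with high probability, a partition of the solution set into clusters each containing at most $e^{cn}$ colorings for some constant $c<\phi(\alpha)$; this forces at least $e^{(\phi(\alpha)-c)n}=e^{\Omega(n)}$ clusters.

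Next I would analyze the overlap of a pair of colorings. For $\sigma,\tau$ record the overlap matrix $\rho=(\rho_{st})$, $\rho_{st}:=n^{-1}|\{v:\sigma_{v}=s,\ \tau_{v}=t\}|$, and let $Z_{\rho}$ count ordered pairs of proper colorings with profile $\rho$. A first-moment computation gives $\tfrac1n\log\mathbb{E}Z_{\rho}=F(\rho)+o(1)$ with
\[
F(\rho)=H(\rho)+\alpha\log\Bigl(\textstyle\sum_{s\ne s',\,t\ne t'}\rho_{st}\rho_{s't'}\Bigr),\qquad H(\rho)=-\textstyle\sum_{s,t}\rho_{st}\log\rho_{st},
\]
and the sharp second moment above is exactly the statement that $F$ is uniquely maximized, with value $2\phi(\alpha)$, at the uncorrelated profile $\rho^{\ast}_{st}=q^{-2}$. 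Restricting to profiles near a color-matched diagonal and letting $d$ be the induced normalized Hamming distance, a short expansion for small $d$ gives
\[
\max_{\rho:\ \mathrm{dist}(\rho)=d}F(\rho)=\phi(\alpha)+d\log\frac{q-1}{d}+d-\frac{2\alpha}{q-1}\,d+O(d^{2}).
\]
The bracketed function of $d$ is positive as $d\downarrow 0$ (the $d\log(1/d)$ term), but once $\tfrac{2\alpha}{q-1}>1+\log(q-1)$, i.e. once $\alpha>\tfrac q2[\log q+o(\log q)]$, it becomes strictly negative on an interval $d\in(\delta_{1},\delta_{2})$ with $0<\delta_{1}<\delta_{2}<1-1/q$, so that $F(\rho)\le\phi(\alpha)-\eta$ there for a fixed $\eta>0$; moreover $\delta_{1}$ can be made $o_{q}(1)$ by taking $\alpha$ far enough above $\tfrac q2\log q$ within the allowed slack. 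This computation is what pins down $\alpha_{\mathrm{d}}(q)$: recoloring a $d$-fraction of the vertices gains entropy $\sim d\log q$, while the $\sim 2\alpha d$ edges incident to them each cost a factor $\tfrac{q-2}{q-1}$, and the barrier forms exactly when $2\alpha/q$ overtakes $\log q$.

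The remaining, and hardest, step is to upgrade this to a structural statement: with high probability no two proper colorings are at Hamming distance in $(\delta_{1}n,\delta_{2}n)$. A bare first-moment bound will not do, since $\mathbb{E}[\#\{\text{pairs at distance in the gap}\}]\le e^{(\phi(\alpha)-\eta)n}$ need not be $o(1)$; instead one must use the sharp second moment in the manner of \cite{AchlioCojas} --- restrict the pair-count $Z^{2}$ to overlaps outside the gap, check that the restricted count still has second moment $(1+o(1))$ times the square of its mean, and deduce that the complementary count vanishes with high probability. Granting this, a path joining two colorings of one cluster proceeds in steps of size $o(n)<\delta_{1}n$, and since the distance to a fixed endpoint changes by less than $\delta_{1}n$ at each step and can never land in $(\delta_{1}n,\delta_{2}n)$, each cluster has Hamming diameter at most $\delta_{1}n$; hence it lies in a ball of that radius, containing at most $\sum_{j\le\delta_{1}n}\binom{n}{j}(q-1)^{j}\le e^{n(H_{2}(\delta_{1})+\delta_{1}\log(q-1))}$ colorings, where $H_{2}$ is the binary entropy in nats. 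Setting $c:=H_{2}(\delta_{1})+\delta_{1}\log(q-1)=o(\phi(\alpha))$ in the stated range and combining with $Z=e^{(1+o(1))\phi(\alpha)n}$ yields at least $e^{(1-o(1))\phi(\alpha)n}=e^{\Omega(n)}$ clusters with high probability.

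I expect the main obstacle to be precisely this last upgrade: the overlap-gap inequality $F\le\phi(\alpha)-\eta$ on $(\delta_{1},\delta_{2})$ controls only an expectation, and converting it into the almost-sure absence of intermediate-overlap pairs genuinely requires the sharp (not merely bounded-ratio) second moment for $q$-colorings together with the Achlioptas--Coja-Oghlan refinement; by contrast the first-moment barrier computation and the ball-counting are routine. The complete argument is carried out in \cite{AchlioCojas}.
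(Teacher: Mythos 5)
The paper does not prove Theorem~\ref{thm:qcolclust}; it is quoted verbatim from Achlioptas and Coja-Oghlan~\cite{AchlioCojas}, and the paper's general-strategy discussion (Section~\ref{sec:generalstrategy}) explicitly says the improved clustering bounds are obtained by ``approximating the CSP ensemble with an appropriate `planted' ensemble.'' Your proposal reproduces the right first-moment/entropy-energy computation (the expansion of $F(\rho)$ around the diagonal, the barrier at $\tfrac{2\alpha}{q-1}\gtrsim 1+\log(q-1)$, i.e.\ $\alpha\gtrsim\tfrac q2\log q$, and the ball-counting that converts an overlap gap into exponentially many small-diameter clusters). That part is sound.

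The step you correctly flag as the hard one is, however, mis-described, and as stated it would not close the argument. At $\alpha\sim\tfrac q2\log q$ the first-moment bound gives $\mathbb{E}\,Z^{(2)}(\text{gap})\le e^{(\phi(\alpha)-\eta)n}$, which is still exponentially \emph{large}. A sharp second moment for the restricted pair count $Z^{(2)}(\text{overlap outside gap})$ does not force $Z^{(2)}(\text{gap})=0$ with high probability: it only yields, together with Lemma~\ref{acojas}-type lower deviations, that the \emph{ratio} $Z^{(2)}(\text{gap})/Z^2$ is $o(1)$ --- exactly what the paper proves for the sphericity condition in Theorem~\ref{uniform}. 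That is the statement that two \emph{independently uniformly} drawn solutions have near-uncorrelated overlap, and it is compatible with $Z^{(2)}(\text{gap})=e^{\Theta(n)}$, which in turn is compatible with the solution space being one large connected component with a few exponentially many ``bridge'' pairs. So the path/ball argument that follows --- which assumes \emph{no} pair at distance in $(\delta_1 n,\delta_2 n)$ exists --- is not licensed by anything you have established.

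What is actually needed, and what \cite{AchlioCojas} supplies, is a conditional analysis: pass to the planted model (condition on a fixed proper coloring $\sigma^*$ and draw the random graph compatible with it), show via first-moment bounds in the planted model that the expected number of proper colorings at distance in a suitable annulus around $\sigma^*$ is $o(1)$, and then transfer this statement back to the uniform model using the sharp second-moment/Paley--Zygmund machinery (the fact that $Z\ge e^{-o(n)}\mathbb{E}Z$ w.h.p.). The transfer is where the sharp second moment enters --- not as a second moment of a restricted pair count, but as the justification that conditioning on a uniform solution of a random graph is comparable to conditioning on a planted solution. Without the planted-model step your argument only reproduces sphericity, which is the input to the reconstruction results, not clustering.
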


One of our main results is to prove a corresponding reconstruction theorem for
this model as follows.

\begin{theorem}
\label{thm:ColorRec} \textrm{(Graph $q$-coloring reconstruction)} Let
$\mu(\underline{x})$ be the uniform measure over of proper $q$-colorings of
random graph with $n$ vertices and $n\alpha$ edges. For $q$ large enough, the
reconstruction problem is solvable for $\mu$ if $\alpha>\alpha_{\mathrm{r}%
}(q)$, where
\[
\alpha_{\mathrm{r}}(k)=\frac{q}{2}\left[  \log q+\log\log q+O\left(  1\right)
\right]  \,.
\]
Vice versa, the reconstruction problem is unsolvable, with high probability,
if $\alpha<\alpha_{\mathrm{r}}(q)$.
\end{theorem}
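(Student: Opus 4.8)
The plan is to derive the theorem from the general random-graph-to-tree reduction of \cite{mon-ger}, combined with the known sharp bounds for the reconstruction of $q$-colorings on trees. Recall that \cite{mon-ger} provides a sufficient condition on a Gibbs model under which the reconstruction problem for the uniform Gibbs measure on $G(n,\alpha n)$ is solvable if and only if it is solvable on the associated Galton--Watson tree, namely the tree in which every vertex has an independent $\operatorname*{Poisson}(2\alpha)$ number of children (the factor $2$ reflecting that $\alpha n$ edges on $n$ vertices give average degree $2\alpha$). The proper-coloring constraint $\varphi(x,y)=\mathbb{I}(x\neq y)$ is precisely the model left open in \cite{mon-ger}, so the crux of the proof is to verify that condition for this $\varphi$, for all $\alpha$ in a neighbourhood of $\alpha_{\mathrm{r}}(q)$ and $q$ large; granting this, it only remains to locate the reconstruction threshold on the Poisson Galton--Watson tree.

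The essential analytic input for the verification is a second-moment estimate for the number $Z=Z(G)$ of proper $q$-colorings of $G(n,\alpha n)$, namely $\tfrac1n\log\mathbb{E}[Z^{2}]=\tfrac2n\log\mathbb{E}[Z]+o(1)$ (in fact $\mathbb{E}[Z^2]\le C(q)\mathbb{E}[Z]^2$ after the standard restriction to balanced colorings), uniformly over the relevant range of $\alpha$; this is established along Achlioptas--Naor lines but carried up to $\alpha\approx\tfrac{q}{2}\log q$ --- still safely below both the condensation and the satisfiability thresholds --- in Appendix~\ref{sec:SecondMom}. This forces $\tfrac1n\log Z$ to concentrate about $\tfrac1n\log\mathbb{E}[Z]$, so that the uniform measure over colorings of $G(n,\alpha n)$, viewed from a uniformly random vertex, is asymptotically insensitive to $o(n)$ local modifications of the graph; together with the local tree-likeness of $G(n,\alpha n)$ this yields the local convergence of the coloring Gibbs measure to the free-boundary coloring measure on the limiting tree that \cite{mon-ger} demands. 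One also checks the mild side hypotheses --- that $G(n,\alpha n)$ is colorable with high probability, which holds since $\alpha_{\mathrm{r}}(q)<\alpha_{\mathrm{s}}(q)$ by Theorem~\ref{thm:qcol}, and that partial colorings extend on tree-like neighbourhoods, which is automatic --- inserting a small-subgraph-conditioning step if needed to upgrade expectation bounds to high-probability ones. Pushing this reduction through for the \emph{hard} coloring constraint, rather than the soft antiferromagnetic Potts interaction already treated in \cite{mon-ger}, is the step I expect to be the main obstacle.

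It then remains to transfer the known reconstruction threshold for colorings on the $d$-ary (equivalently $d$-regular) tree --- non-reconstruction when the number of children is below $q[\log q+\log\log q+O(1)]$, reconstruction above it --- to the Poisson Galton--Watson tree of offspring mean $2\alpha$. This is largely routine: the broadcasting recursion there is the same distributional fixed-point recursion as on the regular tree, only with a $\operatorname*{Poisson}(2\alpha)$-distributed rather than fixed number of children, so both the decay estimate controlling the total-variation distance of the root posterior to uniform (non-reconstruction side) and the planted-model second-moment estimate showing survival of a root bias (reconstruction side) go through with $2\alpha$ playing the role of the degree and the Poisson fluctuations absorbed by convexity and Poisson thinning. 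Equating the offspring mean with $2\alpha$ gives solvability on the tree exactly for $2\alpha>q[\log q+\log\log q+O(1)]$, i.e.\ $\alpha>\tfrac{q}{2}[\log q+\log\log q+O(1)]=\alpha_{\mathrm{r}}(q)$; feeding the corresponding quantitative ``solvable-with-room'' and ``unsolvable-with-room'' statements back through the \cite{mon-ger} reduction yields the claim for $G(n,\alpha n)$. Largeness of $q$ is used both for the appendix's second-moment estimate to cover the window around $\alpha_{\mathrm{r}}(q)$ and for the $q\to\infty$ asymptotics of the tree threshold to apply.
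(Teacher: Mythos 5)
Your high-level strategy agrees with the paper's: reduce graph reconstruction to tree reconstruction via the Gerschenfeld--Montanari equivalence of \cite{mon-ger}, import the tree threshold $\frac{q}{2}[\log q+\log\log q+O(1)]$ from \cite{Bha-Ver-Vig,sly}, and verify the sufficient condition of \cite{mon-ger} for the hard coloring constraint. That is exactly the paper's three-step outline. However, your account of the middle step --- ``the step I expect to be the main obstacle'' --- remains at the level of a heuristic, and this middle step is precisely the entire new content of Section~\ref{sec:ColorRec}. The condition \cite{mon-ger} requires is not, as you suggest, that $\tfrac1n\log Z$ concentrates plus ``local tree-likeness plus insensitivity to $o(n)$ modifications.'' It is the \emph{sphericity} condition on the joint type of two independent uniformly random colorings $\underline{\mathbf{x}}^{(1)},\underline{\mathbf{x}}^{(2)}$: for every $\epsilon>0$,
$\operatorname*{Prob}\bigl(\|v_{\underline{\mathbf{x}}^{(1)},\underline{\mathbf{x}}^{(2)}}-\overline{v}\|^{2}>\epsilon\bigr)\to0$,
which is the content of Theorem~\ref{uniform}. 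A second-moment inequality $\mathbb{E}[Z^2]\le C\,\mathbb{E}[Z]^2$ does not by itself yield this, because what must be controlled is the \emph{expected ratio} $\mathbf{E}\bigl[Z^{(2)}(\text{bad type})/Z^2\bigr]$, not the ratio of expectations --- a distinction the paper explicitly emphasizes in its general strategy.

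Concretely, what is missing from your proposal is the chain of arguments that closes this gap: (i) the lower-tail bound $\operatorname*{Prob}(Z<e^{-f(n)}\mathbf{E}[Z])\to0$ from Achlioptas--Coja-Oghlan (Lemma~\ref{acojas}), which replaces $Z$ in the denominator by $e^{-n\xi}\mathbf{E}[Z]$ at the cost of a negligible factor; (ii) the entropy--energy optimization over the Birkhoff polytope from Achlioptas--Naor (Eq.~\eqref{eqq}) showing $\mathcal{H}(v)+\alpha\mathcal{E}(v)$ is uniquely maximized at $\overline{v}$ for $\alpha\le(q-1)\log(q-1)$, refined in Lemmas~\ref{boundy}--\ref{local} to get strict separation on the set $\mathcal{B}^{\delta,\epsilon}_{q\times q}$ with $\delta$ tied to a compactness argument; and (iii) a marginal ``balance'' lemma (Lemma~\ref{balanced}) that pins down the row and column sums first. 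You also misattribute the relevant second-moment computation to Appendix~\ref{sec:SecondMom}: that appendix handles the binary $k$-CSP case only, and the paper makes clear that for $q$-colorings the analogous step was already carried out in \cite{achlioassaf}; what Section~\ref{sec:ColorRec} supplies is the upgrade of that expectation estimate to the high-probability sphericity statement. So the proposal is right in outline but leaves the actual theorem being proved --- the sphericity estimate --- essentially undemonstrated.
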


%************************************************************************
%%
%%PT: Need to write here where what is proved etc.
%%
%************************************************************************

\subsection{General strategy}

\label{sec:generalstrategy}

The results described in the previous section are of three types: bounds on
the satisfiability thresholds, cf. Proposition \ref{propo:SATCSP} and Theorem
\ref{thm:qcol}; on the clustering threshold, cf. Theorems \ref{thm:ClustCSP}
and \ref{thm:qcolclust}; on the reconstruction threshold, cf. Theorems
\ref{reconstruction} and \ref{thm:ColorRec}. The proof strategy is as follows.

\vspace{0.1cm}

\noindent\emph{The satisfiability threshold} can be upper bounded using the
first moment of the number of solutions, and lower bounded using the second
moment method. This technique is by now discussed in detail in
\cite{AchlioMoore,achlioassaf,ANP05}; we describe its application to the
general $\operatorname*{CSP}(n,\alpha,p)$ ensemble is done in Appendix~
\ref{sec:SecondMom}.

\vspace{0.1cm}

\noindent\emph{The clustering threshold} can be upper bounded through an
analysis of the recursive `whitening' process that associates to each cluster
a single configuration in an extended space \cite{AchlioRicci}. The improved
bounds in Theorems \ref{thm:ClustCSP} and \ref{thm:qcolclust} can be obtained
by approximating the CSP ensemble with an appropriate `planted' ensemble
\cite{AchlioCojas}. Since this approach is explained in detail in
\cite{AchlioCojas}, we will only present the various technical steps.

\vspace{0.1cm}

\noindent\emph{The reconstruction threshold} is characterized via a three-step procedure:

\vspace{0.05cm}

\noindent\textbf{(1)} Bound the reconstruction threshold for an appropriate
ensemble of (infinite) tree instances, i.e. CSP instances for which the
associated factor graph is an infinite Galton-Watson tree. In the case of
proper $q$-colorings, a sharp characterization was obtained independently by
two groups in the past year \cite{Bha-Ver-Vig,sly}. In Section
\ref{sec:TreeRec} we prove sharp bounds on tree reconstruction for binary
CSPs. The proof amounts to deriving an exact distributional recursion for the
so-called belief process, and carefully bounding its asymptotic behavior.

\noindent\textbf{(2)} Given two `balanced' solutions $\underline{x}^{(1)}$,
$\underline{x}^{(2)}$ (a solution is balanced if each possible variable value
is taken on the same number of vertices), define their \emph{joint type}
$\nu(x,y)$ as the matrix such that the fraction of vertices $i$ with
$x_{i}^{(1)}=x$ and $x_{i}^{(2)}=y$ is equal to $\nu(x,y)$. Consider the
number $Z_{\mathrm{b}}(\nu)$ of balanced solution pairs $\underline{x}_{1}$,
$\underline{x}_{2}$ with joint type $\nu$. One has to show that $\mathbb{E}%
\,Z_{\mathrm{b}}(\nu)$ is exponentially dominated by its value at the uniform
type $\overline{\nu}(x,y)=1/q^{2}$ (with $q=2$ for binary CSPs). More
precisely $\mathbb{E}\,Z_{\mathrm{b}}(\nu)\doteq\exp\{n\Phi(\nu)\}$ with
$\Phi$ achieving its unique maximum at $\overline{\nu}$.

This is also a crucial step in the second moment method. It was accomplished
in \cite{achlioassaf} for proper $q$-colorings of random graphs. In the case
of binary CSPs, we prove this estimate in Section \ref{sec:SecondMom}.

\noindent\textbf{(3)} Prove that the above imply that the set of solutions of
a random instance is, with high probability, \emph{roughly spherical}. By this
we mean that the joint type $\nu_{12}$ of two uniformly random solutions
$\underline{x}^{(1)},\underline{x}^{(2)}$ satisfies $||\nu_{12}-\overline{\nu
}||_{\operatorname*{TV}}\leq\delta$ with high probability for all $\delta>0$.
Notice that this implication requires bounding the expected ratio of
$Z_{\mathrm{b}}(\nu)$ to the total number of solution pairs. We prove that the
implication nevertheless holds in Section \ref{sec:ColorRec} for
$q$-colorings. The argument for binary CSP's is completely analogous, and we
omit it.

Finally, it was proved in \cite{mon-ger} that, under such a sphericity
condition, graph reconstruction and tree reconstruction are equivalent, which
finishes the proof of Theorems \ref{reconstruction} and \ref{thm:ColorRec}.

\vspace{0.2cm}

Notice that the techniques used for the clustering and reconstruction
thresholds are very different. Thus it is a surprising (and arguably deep)
phenomenon that they do coincide as far as the present techniques can tell.

\section{Tree ensembles and tree reconstruction for binary $k$-CSP ensembles}

\label{sec:TreeRec}

In this section we define tree ensembles and prove estimates about the
corresponding tree reconstruction thresholds.

\subsection{The tCSP$\left(  \alpha,p\right)  $ ensemble}

The ensemble $\operatorname*{tCSP}(\alpha,p)$ is defined by $\alpha
\in\mathbb{R}_{+}$ and a distribution $p$ over Boolean functions
$\varphi:\{-1,+1\}^{k}\rightarrow\{0,1\}$. We assume the conditions on the
distribution $p$ introduced in Section~\ref{sec:BinaryCSP}. An (infinite)
instance from this ensemble is generated starting by a root variable node
$\o $, drawing an integer $\eta\overset{\mathcal{D}}{=}$
$\operatorname*{Poisson}(k\alpha)$ and connecting $\o $ to $\eta$ function
nodes $\{1,\dots,\eta\}$. Each function node has degree $k$, and each of its
$k-1$ descendants is the root of an independent infinite tree. Finally, each
function node $a$ is associated independently, with a random clause $\varphi$
drawn according to $p$.

A uniform solution for such an instance is sampled by drawing the root value
$\mathbf{x}_{\o }\in\{-1,+1\}$ uniformly at random. The values of descendants
of each variable node $i$ are then drawn recursively. If the function node $a$
connects $i$ to $i_{1},\ldots,i_{k-1}$, then the values $\mathbf{x}_{i_{1}%
},\dots,\mathbf{x}_{i_{k}}$ are sampled uniformly from those that satisfy the
clause in $a$, that is, such that the quantity $\varphi\left(  x_{i},x_{i_{1}%
},\ldots,x_{i_{k-1}}\right)  $ is equal to $1$.

By the \emph{balance} condition, this procedure can be shown to be equivalent
to sampling a solution according to the `free boundary Gibbs measure.' The
latter is a distribution over solutions of the entire (infinite)
$\operatorname*{tCSP}$ formula defined by considering the unifom distribution
over solutions of the first $\ell$ generations of the tree, and then letting
$\ell\rightarrow\infty$.

\subsection{Reconstruction}

Given any fixed tree ensemble $T$, let $\mathbf{x}$ be a random satisfying
assignment for $T$ according to the distribution described previously. We
denote by $\mathbf{x}_{\ell}$ the value of $\mathbf{x}$ at the variables at
generation $\ell$, and in the case that the root degree is $1$, we denote by
$\mathbf{x}_{0,1},\ldots,\mathbf{x}_{0,k-1}$, the value at the variable nodes
connected to the unique child of the root. Also, we use $\eta_{0}$ for the
root degree of $T$. If the tree ensemble $T$ has root degree $\eta_{0}=d$, we
denote by $T_{i}$, $i=1,\ldots,d$, the subtree generated by the root, its
$i^{th}$ children and its descendents. If $\eta_{0}=1$, we denote by
$T_{i}^{\prime}$, $i=1,\ldots,k-1$, the subtree generated by the $i^{th} $
child of the root's child and its descendents.

Finally, because the tree ensemble $T$ could be random (for instance we denote
by $\mathbf{T}$ a random $\operatorname*{tCSP}\left(  \alpha,p\right)  $), we
will use\textbf{\ }$\boldsymbol{E}$ for expectation respect to $\mathbf{T} $,
and $\left\langle \cdot\right\rangle _{T}$ for expectation respect to
$\mathbf{x}$ (given $\mathbf{T}$) and $\mathbb{E}$ for expectation respect to
any other independent random variable (adding, if not in context, a subindex
to indicate such random variable).

\emph{Reconstruction:} For a fixed tree ensemble $T$, let $\mu_{_{\emptyset
,\ell}}$ be the joint distribution of $\left(  \mathbf{x}_{0},\mathbf{x}%
_{\ell}\right)  $ and let $\mu_{_{\emptyset}}$, $\mu_{_{\ell}}$ be the
marginal distribution of $\mathbf{x}_{0}$ and $\mathbf{x}_{\ell}$
respectively. The reconstruction rate for $T$ is defined as the quantity
$\left\Vert \mu_{\emptyset,\ell}\left(  \cdot,\cdot\right)  -\mu_{\emptyset
}\left(  \cdot\right)  \mu_{\ell}\left(  \cdot\right)  \right\Vert
_{\operatorname*{TV}}$. We say that the reconstruction problem for $T$ is
\emph{tree-solvable} if
\[
\liminf\limits_{\ell\rightarrow\infty}\left\Vert \mu_{\emptyset,\ell}\left(
\cdot,\cdot\right)  -\mu_{\emptyset}\left(  \cdot\right)  \mu_{\ell}\left(
\cdot\right)  \right\Vert _{\operatorname*{TV}}>0\text{.}%
\]
Analogously, if $\mathbf{T}$ is a random $\operatorname*{tCSP}\left(
\alpha,p\right)  $, we define the reconstruction rate of $\mathbf{T}$ as
$\mathbf{E}\left\Vert \mu_{\emptyset,\ell}\left(  \cdot,\cdot\right)
-\mu_{\emptyset}\left(  \cdot\right)  \mu_{\ell}\left(  \cdot\right)
\right\Vert _{\operatorname*{TV}}$, and we say that the reconstruction problem
for $\mathbf{T}$ is \emph{tree-solvable}
\[
\liminf\limits_{\ell\rightarrow\infty}\mathbf{E}\left\Vert \mu_{\emptyset
,\ell}\left(  \cdot,\cdot\right)  -\mu_{\emptyset}\left(  \cdot\right)
\mu_{\ell}\left(  \cdot\right)  \right\Vert _{\operatorname*{TV}}>0\text{.}%
\]

\emph{Bias, compatibility:} Given a satisfying assignment $x_{\ell}$ for the
variables at generation $\ell$, define the `bias' of the root, restricted to
the value of the variables at level $\ell$, as
\[
h_{T}\left(  x_{\ell}\right)  \overset{def}{=}\left\langle \mathbf{x}%
_{0}\left\vert \mathbf{x}_{\ell}=x_{\ell}\right.  \right\rangle _{T}\text{.}%
\]
Throughout the next proofs we will study $h_{T}\left(  x_{\ell}\right)  $, for
$x_{l}$ random and subject to different kind of distributions. Notice that
under the balance condition $\left\Vert \mu_{\emptyset,\ell}\left(
\cdot,\cdot\right)  -\mu_{\emptyset}\left(  \cdot\right)  \mu_{\ell}\left(
\cdot\right)  \right\Vert _{\operatorname*{TV}}=\left\langle \left\vert
h_{T}\left(  \mathbf{x}_{\ell}\right)  \right\vert \right\rangle _{T}$.

Now, let $D_{T}\left(  x_{\ell}\right)  \overset{def}{=}\left\{  x\right\}  $
if $h_{T}\left(  x_{\ell}\right)  =x$, $D_{T}\left(  x_{\ell}\right)
\overset{def}{=}\left\{  -1,1\right\}  $ if $\left\vert h_{T}\left(  x_{\ell
}\right)  \right\vert <1$. Observe that $D_{T}\left(  x_{\ell}\right)  $
consists of the values of the root that are compatible with the assignment
$x_{\ell}$ for the variables at generation $l$.

\emph{Domain of clauses:}\textit{\ }Given a binary function $\varphi\left(
x_{0},\ldots,x_{k-1}\right)  $, define the partial solution sets
\begin{align*}
&  S^{+}\left(  \varphi\right)  \overset{def}{=}\left\{  \left(
x_{1},,x_{k-1}\right)  :\varphi\left(  1,x_{1},\ldots,x_{k-1}\right)
=1\right\}  \text{,\qquad}\\
&  S^{-}\left(  \varphi\right)  \overset{def}{=}\left\{  \left(
x_{1},,x_{k-1}\right)  :\varphi\left(  -1,x_{1},\ldots,x_{k-1}\right)
=1\right\}  \text{,}%
\end{align*}

\[
\Lambda^{+}\left(  \varphi\right)  \overset{def}{=}S^{+}\left(  \varphi
\right)  \backslash S^{-}\left(  \varphi\right)  \text{,}\qquad\Lambda
^{-}\left(  \varphi\right)  \overset{def}{=}S^{-}\left(  \varphi\right)
\backslash S^{+}\left(  \varphi\right)
\]
If the clause $\varphi$ is balanced and feasible, we have that $\left\vert
S^{+}\left(  \varphi\right)  \right\vert =\left\vert S^{-}\left(
\varphi\right)  \right\vert =2^{k-1}\left\Vert \varphi\right\Vert ^{2}$ and
$\left\vert \Lambda^{+}\left(  \varphi\right)  \right\vert =\left\vert
\Lambda^{-}\left(  \varphi\right)  \right\vert =2^{k}\operatorname*{I}%
_{1}\left(  \varphi\right)  $.

\begin{theorem}
\label{thm:TreeRecoKCSP} The reconstruction problem for the ensemble
\textrm{tCSP}$(\alpha,p)$ is \emph{tree-solvable} if and only if
$\alpha>\alpha_{\mathrm{tree}}(k)$ where
\[
\alpha_{\mathrm{tree}}(k)=\frac{\Omega_{k}}{k}\,\{\log k+o(\log k)\}\,.
\]

\end{theorem}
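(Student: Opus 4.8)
The plan is to analyze the exact distributional recursion satisfied by the root bias $h_T$ along the tree, and to show that the tree reconstruction rate undergoes a sharp transition precisely at $\alpha_{\mathrm{tree}}(k)=\frac{\Omega_k}{k}\{\log k+o(\log k)\}$. First I would set up the belief/bias recursion: conditioning on the boundary data at level $\ell$, the root bias $h_T(\mathbf{x}_\ell)$ can be expressed in terms of the biases of the $\eta_0$ clause-children, each of which in turn is a function (determined by $\varphi$ and the sets $S^{\pm}(\varphi)$, $\Lambda^{\pm}(\varphi)$ defined above) of the biases of its $k-1$ variable-grandchildren. Because the root value is uniform and the instance is balanced, this gives a clean recursive description of the law of $h_{\mathbf{T}}$ at depth $\ell$ in terms of the law at depth $\ell-1$, after integrating over the $\operatorname*{Poisson}(k\alpha)$ root degree and over the clause distribution $p$. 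The object to track is $\mathbf{E}\langle |h_T(\mathbf{x}_\ell)|\rangle_T$, which by the remark following the definition of bias equals exactly the reconstruction rate $\mathbf{E}\Vert\mu_{\emptyset,\ell}-\mu_\emptyset\mu_\ell\Vert_{\mathrm{TV}}$.

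For the \emph{impossibility} direction ($\alpha<\alpha_{\mathrm{tree}}(k)$), the plan is to linearize: when the bias is small, the clause-to-variable map contracts the bias by a factor governed by the influence, and the expected number of "relevant" clause-children that transmit information is of order $k\alpha\cdot\frac{2\operatorname*{I}_1(\varphi)}{\Vert\varphi\Vert^2}$, whose reciprocal-weighted expectation over $\varphi$ is $k\alpha/\Omega_k$ by the definition of $\Omega_k$. A Kesten--Stigum / second-moment-type argument then shows that when $k\alpha/\Omega_k$ is below the relevant threshold the $L^2$ norm of the bias decays geometrically, hence so does the $L^1$ reconstruction rate; the subtlety is that the naive Kesten--Stigum bound is not tight to first order in $k$, so I would instead run the sharper estimate — controlling higher moments of the bias and using condition \textbf{(b)} (Eq.~(\ref{fourier}), equivalently (\ref{fourier4})) to bound $\Vert\operatorname*{T}_\theta\varphi^{(i)}\Vert^2$ — to push the non-reconstruction region up to $k\alpha/\Omega_k = \log k + o(\log k)$. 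The role of hypothesis \textbf{(a)}, Eq.~(\ref{fourier2}), is to keep the combinatorial error terms (from events where several clauses overlap or where a clause transmits through more than one child) polynomially bounded in $k$, so that they are absorbed into the $o(\log k)$ slack.

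For the \emph{possibility} direction ($\alpha>\alpha_{\mathrm{tree}}(k)$), the plan is the standard "robust reconstruction via counting" approach: exhibit an explicit, local estimator of the root value from the level-$\ell$ boundary whose bias stays bounded away from zero. Concretely, one uses the fact that with probability bounded below, a clause is "frozen" in the sense that $D_T(x_\ell)$ collapses to a single root value along a positive fraction of sub-branches; the expected number of such informative paths from root to level $\ell$ grows like $(k\alpha/\Omega_k)^\ell$ up to the survival probability of an associated branching process, and once $k\alpha/\Omega_k > \log k\,(1+o(1))$ this branching process is supercritical enough that the fraction of the boundary "pointing to" the true root value exceeds that pointing to the wrong value by a non-vanishing margin, uniformly in $\ell$. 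Here \textbf{feasibility} is what guarantees the recursion never gets stuck, and \textbf{dominance of balanced assignments} (condition 4) is what guarantees that the relevant measure really is close to the symmetric one so the estimator's threshold is at $1/2$. I expect the main obstacle to be the matching of constants to first order in $k$: getting both the impossibility bound and the possibility bound to meet at $\frac{\Omega_k}{k}(\log k + o(\log k))$ requires a careful second-moment computation on the belief recursion, precisely the estimate that step \textbf{(2)} of the general strategy defers to Section~\ref{sec:SecondMom}, and controlling the higher-order Fourier terms uniformly in $k$ via (\ref{fourier})–(\ref{fourier4}) rather than settling for the cruder Kesten--Stigum threshold.
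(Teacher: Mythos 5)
Your high-level framing is right---write a recursion for the root bias down the tree, prove a nontrivial fixed point for $\alpha$ large and contraction for $\alpha$ small---but the mechanism you sketch for both directions is the linear one, and as you half-acknowledge, that only delivers the Kesten--Stigum threshold $k\alpha/\Omega_k\approx 1$, i.e.\ $\alpha\approx\Omega_k/k$, missing the $\log k$ factor. You do not actually supply the idea that creates the extra factor, and you point to the wrong place (the appendix) for it.

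For the possibility direction, the paper does \emph{not} count ``informative paths'' whose expected number grows like $(k\alpha/\Omega_k)^\ell$. It tracks the freezing probability $z_\ell(T)=\langle \mathbb{I}[h_T(\mathbf{x}_\ell)=1]\rangle_T$ and derives the \emph{conjunctive} recursion
\[
z_\ell(\alpha)\ \geq\ 1-\exp\!\bigl(-k\alpha\,z_{\ell-1}(\alpha)^{k-1}/\Omega_k\bigr),
\]
where the exponent $k-1$ arises because a clause pins the root only if \emph{all} $k-1$ of its other variables are already frozen --- an AND over sub-branches, not the ``positive fraction'' OR that your sketch assumes. It is the competition between the Poisson branching factor $k\alpha/\Omega_k$ and the $z^{k-1}$ collapse inside the exponential that forces the threshold to $\frac{\Omega_k}{k}\log k(1+o(1))$: the fixed-point equation has no positive solution until $k\alpha/\Omega_k$ reaches roughly $\log k$. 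A purely multiplicative path count would give a threshold at $k\alpha/\Omega_k=1$, which is wrong by a $\log k$.

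For the impossibility direction, the paper never controls higher moments of the bias. It exploits the consistency (symmetry) of the random variable $h_T(\mathbf{x}_\ell^+)$--- which gives, e.g., $\mathbb{E}X=\mathbb{E}X^2$--- to track a \emph{single} average $h_\ell^{\mathrm{ave}}$. The decisive nonlinear ingredient is Lemma~\ref{lemma:FirstStep}: after one level the bias is supported on $\{0,1\}$, hence non-negative, which unlocks the sharp one-sided bound~(\ref{eq:ineq3}),
$\widehat h_2^{\mathrm{ave}}\le F_k(h_1^{\mathrm{ave}})$, and with $h_1^{\mathrm{ave}}\le 1-k^{-1+\delta}$ and the Bonami--Beckner contraction~(\ref{f10}) this already yields $h_2^{\mathrm{ave}}\le e^{-Ck^\delta/2}$. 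Only after the bias has been crushed this far do the cruder bounds on $F_k,R_k$ (which is where conditions (\ref{fourier2})--(\ref{fourier3}) enter, much as you say) take over to drive $h_\ell^{\mathrm{ave}}\to 0$. Your plan of ``higher moments plus condition (b)'' doesn't isolate the two-stage structure (exact first step, sharp second step via non-negativity, then generic contraction) that actually matches the upper bound to first order.

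Finally, two misattributions: the second-moment estimate of Appendix~\ref{sec:SecondMom} is used for the satisfiability threshold and the tree-to-graph transfer, not for Theorem~\ref{thm:TreeRecoKCSP}; and the condition that makes the tree recursion symmetric and the root uniform is the \emph{balance} condition, not ``dominance of balanced assignments,'' which is a graph-side hypothesis.
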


\begin{proof}
\emph{Upper bound:}

Given a tree ensemble $T$, the rate of `naive reconstruction' for $T$ is
defined as%
\[
z_{\ell}\left(  T\right)  \overset{def}{=}\left\langle \mathbb{I}\left[
h_{T}\left(  \mathbf{x}_{\ell}\right)  =1\right]  \right\rangle _{T}\text{
(}=\left\langle \mathbb{I}\left[  h_{T}\left(  \mathbf{x}_{\ell}\right)
=-1\right]  \right\rangle _{T}\text{ by the balance condition),}%
\]
which indicates the probability that a random assignment for the variables at
generation $\ell$, distributed as $\mathbf{x}_{\ell}$, fixes the root to be
equal to $1$ (or $-1$). It is easy to see that $\left\langle \left\vert
h_{T}\left(  \mathbf{x}_{\ell}\right)  \right\vert \right\rangle _{T}\geq
z_{\ell}\left(  T\right)  $. Observe also, that for any $x,y\in\left\{
-1,1\right\}  $,
\begin{equation}
\left\langle \mathbb{I}\left[  h_{T}\left(  \mathbf{x}_{\ell}\right)
=x\right]  \left\vert \mathbf{x}_{0}=y\right.  \right\rangle _{T}=2z_{\ell
}\left(  T\right)  \delta_{x,y}\text{.} \label{f2}%
\end{equation}
Thus, our objective is to show that in an appropiate regime of the parameter
$\alpha$, the quantity $\mathbf{E}\left[  z_{\ell}\left(  \mathbf{T}\right)
\right]  $ remains bounded away from zero as $\ell\rightarrow\infty$, implying
tree-solvability of the reconstruction problem in such regime. Indeed, this
implies tree-solvability by `naive reconstruction', i.e. by the procedure that
assigns to the root any value compatible with the values at generation $\ell$.
By notational convenience, define
\[
z_{\ell}\left(  \alpha\right)  =2\mathbf{E}\left[  z_{\ell}\left(
\mathbf{T}\right)  \right]  \text{ and }\widehat{z_{\ell}}\left(
\alpha\right)  =2\mathbf{E}\left[  z_{\ell}\left(  \mathbf{T}\right)
\left\vert \eta_{0}=1\right.  \right]  \text{.}%
\]
Now, notice that for a tree ensemble $T$ with root degree $\eta_{0}=d$, and
any assignment $x_{\ell}$ for the variables at\ generation $\ell$,
$h_{T}\left(  x_{\ell}\right)  =1$ iff $h_{T}\left(  x_{\ell}\upharpoonright
T_{i}\right)  =1$ for some $i=1,\ldots,d$, so that
\begin{align*}
2z_{\ell}\left(  T\right)   &  =\left\langle 1-%
%TCIMACRO{\dprod \limits_{i=1}^{d}}%
%BeginExpansion
{\displaystyle\prod\limits_{i=1}^{d}}
%EndExpansion
\left(  1-\mathbb{I}\left[  h_{T_{i}}\left(  \mathbf{x}_{\ell}\upharpoonright
T_{i}\right)  =1\right]  \right)  \left\vert \mathbf{x}_{0}=1\right.
\right\rangle _{T}\\
&  =1-%
%TCIMACRO{\dprod \limits_{i=1}^{d}}%
%BeginExpansion
{\displaystyle\prod\limits_{i=1}^{d}}
%EndExpansion
\left\langle \left(  1-\mathbb{I}\left[  h_{T_{i}}\left(  \mathbf{x}_{\ell
}\right)  =1\right]  \right)  \Big\vert \mathbf{x}_{0}=1 \right\rangle
_{T_{i}}\text{ (By the tree Markov property)}\\
&  =1-%
%TCIMACRO{\dprod \limits_{i=1}^{d}}%
%BeginExpansion
{\displaystyle\prod\limits_{i=1}^{d}}
%EndExpansion
\left(  1-2z_{\ell}\left(  T_{i}\right)  \right)\,.
\end{align*}
Therefore, averaging over $T$, we get
\begin{align*}
z_{\ell}\left(  \alpha\right)   &  =\mathbb{E}_{\eta}\left[  1-%
%TCIMACRO{\dprod \limits_{i=1}^{\eta}}%
%BeginExpansion
{\displaystyle\prod\limits_{i=1}^{\eta}}
%EndExpansion
\left(  1-\widehat{z_{\ell}}\left(  \alpha\right)  \right)  \right]  \text{,
}\eta\sim\operatorname*{Poisson}\left(  k\alpha\right) \\
&  =1-\exp\left(  -k\alpha\widehat{z_{\ell}}\left(  \alpha\right)  \right)
\text{.}%
\end{align*}
On the other hand, given a tree ensemble $T$ with root degree $\eta_{0}=1$ and
with the clause $\varphi$ assigned to the root's child, we have that for any
satisfying assignment $x_{\ell}$ for the variables at generation $\ell$,
$h_{T}\left(  x_{\ell}\right)  =1$ iff
\begin{equation}%
%TCIMACRO{\dprod \limits_{i=1}^{k-1}}%
%BeginExpansion
{\displaystyle\prod\limits_{i=1}^{k-1}}
%EndExpansion
D_{T_{i}^{\prime}}\left(  x_{\ell-1}^{\left(  i\right)  }\right)
\subseteq\Lambda^{+}\left(  \varphi\right)  \text{,} \label{f1}%
\end{equation}
where $x_{\ell-1}^{\left(  i\right)  }$ is the assignment $x_{\ell
}\upharpoonright T_{i}^{\prime}$ for the variables at generation $\ell-1$ in
the subtree $T_{i}^{\prime}$. Observe that (\ref{f1}) holds, in particular, if
for some $a=\left(  a_{1},\ldots,a_{k-1}\right)  \in\Lambda^{+}\left(
\varphi\right)  $, $h_{T_{i}^{\prime}}\left(  x_{\ell-1}^{\left(  i\right)
}\right)  =a_{i}$ for $i=1,\ldots,k-1$. Therefore, if $\mathbf{y}=\left(
\mathbf{y}_{1},\ldots,\mathbf{y}_{k-1}\right)  $ denotes a random uniform
vector from $S^{+}\left(  \varphi\right)  $, we have
\begin{align*}
z_{\ell}\left(  T\right)   &  \geq\frac{1}{2}%
%TCIMACRO{\dsum \limits_{a\in\Lambda^{+}\left(  \varphi\right)  }}%
%BeginExpansion
{\displaystyle\sum\limits_{a\in\Lambda^{+}\left(  \varphi\right)  }}
%EndExpansion
\left\langle
%TCIMACRO{\dprod \limits_{i=1}^{k-1}}%
%BeginExpansion
{\displaystyle\prod\limits_{i=1}^{k-1}}
%EndExpansion
\mathbb{I}\left[  h_{T_{i}^{\prime}}\left(  \mathbf{x}_{\ell-1}^{\left(
i\right)  }\right)  =a_{i}\right]  \left\vert \mathbf{x}_{0}=1\right.
\right\rangle _{T}\\
&  =\frac{1}{2}%
%TCIMACRO{\dsum \limits_{a\in\Lambda^{+}\left(  \varphi\right)  }}%
%BeginExpansion
{\displaystyle\sum\limits_{a\in\Lambda^{+}\left(  \varphi\right)  }}
%EndExpansion
\mathbb{E}_{\mathbf{y}}%
%TCIMACRO{\dprod \limits_{i=1}^{k-1}}%
%BeginExpansion
{\displaystyle\prod\limits_{i=1}^{k-1}}
%EndExpansion
\left\langle \mathbb{I}\left[  h_{T_{i}^{\prime}}\left(  \mathbf{x}_{\ell
-1}\right)  =a_{i}\right]  \left\vert \mathbf{x}_{0}=y_{i}\right.
\right\rangle _{T_{i}^{\prime}}\text{ (By the tree Markov property)}\\
&  =\frac{\left\vert \Lambda^{+}\left(  \varphi\right)  \right\vert
}{\left\vert S^{+}\left(  \varphi\right)  \right\vert }%
%TCIMACRO{\dprod \limits_{i=1}^{k-1}}%
%BeginExpansion
{\displaystyle\prod\limits_{i=1}^{k-1}}
%EndExpansion
2z_{\ell-1}\left(  T_{i}^{\prime}\right)  \text{ (By Eq. (\ref{f2})),}%
\end{align*}
which implies, after averaging over $T$, that%
\[
\widehat{z_{\ell}}\left(  \alpha\right)  \geq\mathbb{E}_{\mathbf{\varphi}%
}\left[  \frac{2\operatorname*{I}_{1}\left(  \varphi\right)  }{\left\Vert
\varphi\right\Vert ^{2}}\right]  \left(  z_{\ell-1}\left(  \alpha\right)
\right)  ^{k-1}=\frac{\left(  z_{\ell-1}\left(  \alpha\right)  \right)
^{k-1}}{\Omega_{k}}\text{,}%
\]
which leads to the recursion $z_{\ell}\left(  \alpha\right)  \geq1-\exp\left(
-k\alpha\left(  z_{\ell-1}\left(  \alpha\right)  \right)  ^{k-1}/\Omega
_{k}\right)  $. Now, it is standard to verify that this recursion implies that
$z_{\ell}\left(  \alpha\right)  $ is, for all $\ell$, greater or equal than
the maximum of the fixed points of the function $g\left(  z\right)
=1-\exp\left(  -k\alpha z^{k-1}/\Omega_{k}\right)  $ in the interval $\left[
0,1\right]  $. The minimum value of $\alpha$ for which such fixed point is
positive is given by
\[
\alpha^{\ast}=\frac{\Omega_{k}\left(  1+u\left(  1+\frac{1}{u}\right)
^{k-2}\right)  }{k\left(  k-1\right)  }\text{,}%
\]
where $u$ is the unique solution of the equation $u=\left(  k-1\right)
\log\left(  1+u\right)  $. In particular, asymptotically in $k$, we have that
$\alpha^{\ast}=$ $\frac{\Omega_{k}}{k}\left(  \log k+\operatorname*{o}\left(
\log k\right)  \right)  $, which implies the upper bound for $\alpha
_{\operatorname*{tree}}$.

\emph{Lower bound}:

The matching lower bound on $\alpha_{\mathrm{tree}}(k)$ requires a more
elaborate proof; we first prove three lemmas, before returning to complete the
lower bound proof.
\end{proof}

Given a tree ensemble $T$, let $\mathbf{x}_{\ell}^{+}\overset{\mathcal{D}}%
{=}\left(  \mathbf{x}_{\ell}\left\vert \mathbf{x}_{0}=1\right.  \right)  $ and
$\mathbf{x}_{\ell}^{-}\overset{\mathcal{D}}{=}\left(  \mathbf{x}_{\ell
}\left\vert \mathbf{x}_{0}=-1\right.  \right)  $. When the tree ensemble is
not clear in the definition of $\mathbf{x}_{\ell}^{+}$ (or $\mathbf{x}_{\ell
}^{-}$), we add a subindex indicating the tree ensemble from where it is
defined. Notice that, if $\mu^{+}$ and $\mu^{-}$ are the distributions of
$\mathbf{x}_{\ell}^{+}$ and $\mathbf{x}_{\ell}^{-}$ respectively, then
\begin{equation}
\frac{d\mu^{-}}{d\mu^{+}}=\frac{1-h_{T}\left(  x_{l}\right)  }{1+h_{T}\left(
x_{l}\right)  }\text{.} \label{f4}%
\end{equation}
By the balance condition, it's clear that
\begin{equation}
h_{T}\left(  \mathbf{x}_{\ell}^{+}\right)  \overset{\mathcal{D}}{=}%
-h_{T}\left(  \mathbf{x}_{\ell}^{-}\right)  \text{.} \label{f3}%
\end{equation}
Also, it is easy to show that $\left\langle h_{T}\left(  \mathbf{x}_{\ell}%
^{+}\right)  \right\rangle _{T}=\left\langle \left[  h_{T}\left(
\mathbf{x}_{\ell}\right)  \right]  ^{2}\right\rangle _{T}$ (and therefore
$\left[  R_{l}\left(  T\right)  \right]  ^{2}\leq\left\langle h_{T}\left(
\mathbf{x}_{\ell}^{+}\right)  \right\rangle _{T}\leq R_{l}\left(  T\right)
$), so that non-reconstructibility for $T$ is equivalent to the condition
$\lim\limits_{\ell\rightarrow\infty}\left\langle h_{T}\left(  \mathbf{x}%
_{\ell}^{+}\right)  \right\rangle _{T}=0$ (see \cite{TreeRec}). Similarly, if
$\mathbf{T}$ is a random $\operatorname*{tCSP}\left(  \alpha,p\right)  $
ensemble, non-reconstructibility for $\mathbf{T}$, is equivalent to the
condition $\lim\limits_{\ell\rightarrow\infty}\mathbf{E}\left[  \left\langle
h_{\mathbf{T}}\left(  \mathbf{x}_{\ell}^{+}\right)  \right\rangle
_{\mathbf{T}}\right]  =0$.

\begin{lemma}
\label{lemma:Recursion} \textbf{(a)} Given a tree ensemble $T$ with root
degree $\eta_{0}=d$, we have
\begin{equation}
\left[  \frac{1-h_{T}\left(  \mathbf{x}_{\ell}^{+}\right)  }{1+h_{T}\left(
\mathbf{x}_{\ell}^{+}\right)  }\right]  \overset{\mathcal{D}}{=}\prod
_{i=1}^{d}\left[  \frac{1-h_{l,i}}{1+h_{l,i}}\right]  \text{,}
\label{eq:Recursion1}%
\end{equation}
where $\left(  h_{l,i}\right)  _{i=1}^{d}$ are independent random variables
such that $h_{l,i}\overset{\mathcal{D}}{=}h_{T_{i}}\left(  \mathbf{x}_{\ell
}^{+}\right)  $.

\textbf{(b)} Given a tree ensemble $T$ with root degree $\eta_{0}=1$ and with
the clause $\varphi$ assigned to the unique child of the root, we have that
\begin{equation}
\left[  \frac{1-h_{T}\left(  \mathbf{x}_{\ell+1}^{+}\right)  }{1+h_{T}\left(
\mathbf{x}_{\ell+1}^{+}\right)  }\right]  \overset{\mathcal{D}}{=}%
\frac{\operatorname*{T}_{h_{l}}\varphi(-1,\mathbf{s})}{\operatorname*{T}%
_{h_{l}}\varphi(1,\mathbf{s})}\text{,} \label{eq:Recursion2}%
\end{equation}
where $\mathbf{s\sim}\operatorname*{Unif}\left(  S^{+}\left(  \varphi\right)
\right)  $ and $h_{l}=\left(  h_{l,i}\right)  _{i=1}^{k-1}$ are independent
random variables such that $h_{l,i}\overset{\mathcal{D}}{=}h_{T_{i}^{\prime}%
}\left(  \mathbf{x}_{l}^{+}\right)  $.
\end{lemma}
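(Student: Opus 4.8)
The plan is to prove each of (a) and (b) by first establishing a \emph{deterministic, pointwise} identity for the bias $h_T$ — one valid for every satisfying boundary assignment — and then promoting it to a distributional identity using two ingredients: the tree Markov property, which makes the restrictions of the boundary to the various subtrees conditionally independent once the relevant separating values are fixed, and the balance condition, which relates conditioning the root on $-1$ to conditioning it on $+1$. I would treat part (a) first, as the clean special case, and then adapt the argument to part (b), where a clause sits between the root and its subtrees.

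For part (a), I would start by writing the bias as a posterior expectation: $\tfrac{1-h_T(x_\ell)}{1+h_T(x_\ell)}$ is the posterior odds $\mathbb{P}(\mathbf{x}_0=-1\mid\mathbf{x}_\ell=x_\ell)/\mathbb{P}(\mathbf{x}_0=1\mid\mathbf{x}_\ell=x_\ell)$, which by Bayes' rule with a uniform root prior equals the likelihood ratio $\mathbb{P}(\mathbf{x}_\ell=x_\ell\mid\mathbf{x}_0=-1)/\mathbb{P}(\mathbf{x}_\ell=x_\ell\mid\mathbf{x}_0=1)$. By the tree Markov property, conditioning on $\mathbf{x}_0$ makes the event $\{\mathbf{x}_\ell=x_\ell\}$ a product over the $d$ subtrees $T_i$, and applying the same Bayes identity inside each $T_i$ (again with a uniform root prior) rewrites the $i$-th factor as $\tfrac{1-h_{T_i}(x_\ell\upharpoonright T_i)}{1+h_{T_i}(x_\ell\upharpoonright T_i)}$; this is $(\ref{eq:Recursion1})$ pointwise. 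To obtain the distributional statement, specialize to $x_\ell=\mathbf{x}_\ell^{+}$: conditioned on $\mathbf{x}_0=1$ the restrictions $(\mathbf{x}_\ell\upharpoonright T_i)_{i=1}^{d}$ are independent, and each is distributed as the level-$\ell$ boundary of $T_i$ conditioned on the root of $T_i$ equal to $1$; hence the $h_{T_i}(\mathbf{x}_\ell\upharpoonright T_i)$ are independent with $h_{T_i}(\mathbf{x}_\ell\upharpoonright T_i)\overset{\mathcal{D}}{=}h_{T_i}(\mathbf{x}_\ell^{+})$, as claimed.

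For part (b), I would first write out the generative recipe conditioned on $\mathbf{x}_0=1$: the $k-1$ children of the unique function node receive an assignment $\mathbf{s}\sim\operatorname*{Unif}(S^{+}(\varphi))$, and then, given $\mathbf{s}=s$, the level-$\ell$ boundary of each $T_i'$ is generated independently, distributed as the boundary of $T_i'$ conditioned on its root equal to $s_i$. Since balance and feasibility give $|S^{+}(\varphi)|=|S^{-}(\varphi)|$, the Bayes computation of part (a) — now expanding each conditional probability as a sum over the latent level-$1$ assignment $s'$ and then a product over the subtrees — yields, for any boundary $x_{\ell+1}$ with $x_\ell^{(i)}:=x_{\ell+1}\upharpoonright T_i'$ and $h_i:=h_{T_i'}(x_\ell^{(i)})$,
\[
\frac{1-h_T(x_{\ell+1})}{1+h_T(x_{\ell+1})}=\frac{\sum_{s'}\varphi(-1,s')\prod_{i=1}^{k-1}\tfrac{1+s'_ih_i}{2}}{\sum_{s'}\varphi(1,s')\prod_{i=1}^{k-1}\tfrac{1+s'_ih_i}{2}}\,,
\]
with all clause normalizations and subtree partition functions cancelling between numerator and denominator. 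Now specialize to $x_{\ell+1}=\mathbf{x}_{\ell+1}^{+}$: conditioned on $\mathbf{s}=s$ the $h_i$ are independent, each distributed as $h_{T_i'}$ evaluated on the boundary conditioned on root $s_i$, which by the balance identity $(\ref{f3})$ applied to $T_i'$ equals in law $s_i\,h_{l,i}$ with $h_{l,i}\overset{\mathcal{D}}{=}h_{T_i'}(\mathbf{x}_\ell^{+})$ and the $h_{l,i}$ independent of each other and of $\mathbf{s}$. Replacing $h_i$ by $\mathbf{s}_ih_{l,i}$ turns the factor $\tfrac{1+s'_ih_i}{2}$ into $\tfrac{1+s'_i\mathbf{s}_ih_{l,i}}{2}$, and since the noise operator $\operatorname*{T}\nolimits_{h_l}$ acting on the last $k-1$ coordinates satisfies $(\operatorname*{T}\nolimits_{h_l}g)(s)=\sum_{s'}g(s')\prod_i\tfrac{1+h_{l,i}s_is'_i}{2}$, the numerator and denominator become $(\operatorname*{T}\nolimits_{h_l}\varphi)(-1,\mathbf{s})$ and $(\operatorname*{T}\nolimits_{h_l}\varphi)(1,\mathbf{s})$, which is $(\ref{eq:Recursion2})$.

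I expect the only genuinely delicate step to be this passage to distributions in part (b): the deterministic bias is a function of the posterior biases $h_i$ of the children's subtrees, but once the root is conditioned on $+1$ their joint law is that of the \emph{signed} variables $\mathbf{s}_ih_{l,i}$, and this sign flip — precisely where balance is used — is what forces the Bonami--Beckner operator to be evaluated at the random point $\mathbf{s}$ rather than at the all-ones vector. Keeping the independence bookkeeping straight (the $h_{l,i}$ independent across branches and independent of $\mathbf{s}$, so that the right-hand side of $(\ref{eq:Recursion2})$ is a well-defined law) requires care; the remaining ingredients are the routine Bayes-and-factorization manipulations indicated above together with the identity $|S^{+}(\varphi)|=|S^{-}(\varphi)|$ that makes all normalizations cancel.
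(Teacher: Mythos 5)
Your proposal is correct and takes essentially the same route as the paper: part (a) follows from Bayes' rule and the tree Markov property, and part (b) follows by expanding over the level-$1$ assignment $s'$, factorizing the conditional boundary law over the subtrees $T_i'$, and invoking the balance identity $(\ref{f3})$ to recognize the conditional subtree biases as $\mathbf{s}_ih_{l,i}$ — exactly what produces the Bonami--Beckner operator evaluated at the random point $\mathbf{s}$. The paper routes the final substitution through the auxiliary configuration-level variables $\widetilde{\mathbf{x}}_{l,i}^{+}$ rather than directly at the level of the biases $h_i$, but the underlying manipulations are the same.
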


\begin{proof}
This recursion follows straightforwardly from the recursive definition of tree
formulae. The balance condition on clauses implies
\[
\frac{1-h_{T}\left(  \mathbf{x}_{l}^{+}\right)  }{1+h_{T}\left(
\mathbf{x}_{l}^{+}\right)  }=\frac{\left\langle \mathbb{I}\left[
\mathbf{x}_{l}=\mathbf{x}_{l}^{+}\right]  \left\vert \mathbf{x}_{0}=-1\right.
\right\rangle _{T}}{\left\langle \mathbb{I}\left[  \mathbf{x}_{l}%
=\mathbf{x}_{l}^{+}\right]  \left\vert \mathbf{x}_{0}=1\right.  \right\rangle
_{T}}\text{.}%
\]
Therefore, if the root degree of $T$ is $\eta_{0}=d$, we have by the tree
Markov property that
\[
\frac{1-h_{T}\left(  \mathbf{x}_{l}^{+}\right)  }{1+h_{T}\left(
\mathbf{x}_{l}^{+}\right)  }=%
%TCIMACRO{\dprod \limits_{i=1}^{d}}%
%BeginExpansion
{\displaystyle\prod\limits_{i=1}^{d}}
%EndExpansion
\frac{\left\langle \mathbb{I}\left[  \mathbf{x}_{l}=\mathbf{x}_{l}%
^{+}\upharpoonright T_{i}\right]  \left\vert \mathbf{x}_{0}=-1\right.
\right\rangle _{T_{i}}}{\left\langle \mathbb{I}\left[  \mathbf{x}%
_{l}=\mathbf{x}_{l}^{+}\upharpoonright T_{i}\right]  \left\vert \mathbf{x}%
_{0}=1\right.  \right\rangle _{T_{i}}}\text{,}%
\]
and the last expression has the same distribution as $%
%TCIMACRO{\dprod \limits_{i=1}^{d}}%
%BeginExpansion
{\displaystyle\prod\limits_{i=1}^{d}}
%EndExpansion
\frac{1-u_{l,i}}{1+u_{l,i}}$, due to the fact that $\left(  \mathbf{x}_{l}%
^{+}\upharpoonright T_{i}\right)  _{i=1}^{d}$ are independent random
assignments for the variables at generation $l$ of $T_{i}$, such that
$\mathbf{x}_{l}^{+}\upharpoonright T_{i}\overset{\mathcal{D}}{=}%
\mathbf{x}_{l,T_{i}}^{+}$. This proves Eq.~(\ref{eq:Recursion1}). Now, if the
root degree of $T$ is $\eta_{0}=1$, define $\left(  \widetilde{\mathbf{x}%
}_{l,i}^{+}\right)  _{i=1}^{k-1}$ to be independent random assignments for the
variables at generation $l$ of the subtrees $T_{i}^{\prime}$, such that
$\widetilde{\mathbf{x}}_{l,i}^{+}\overset{\mathcal{D}}{=}\mathbf{x}%
_{l,T_{i}^{\prime}}^{+}$. By the tree Markov property, we have that $\left(
\mathbf{x}_{l+1}^{+}\upharpoonright T_{i}^{\prime}\right)  _{i=1}%
^{k-1}\overset{\mathcal{D}}{=}\left(  \mathbf{s}_{i}\widetilde{\mathbf{x}%
}_{l,i}^{+}\right)  _{i=1}^{k-1}$ where $\mathbf{s}\sim\operatorname*{Unif}%
S^{+}\left(  \varphi\right)  $. Using once more the tree Markov property, we
get
\begin{align*}
\left[  \frac{1-h_{T}\left(  \mathbf{x}_{\ell+1}^{+}\right)  }{1+h_{T}\left(
\mathbf{x}_{\ell+1}^{+}\right)  }\right]   &  =\frac{%
%TCIMACRO{\dsum \limits_{y}}%
%BeginExpansion
{\displaystyle\sum\limits_{y}}
%EndExpansion
\varphi\left(  -1,y\right)
%TCIMACRO{\dprod \limits_{i=1}^{k-1}}%
%BeginExpansion
{\displaystyle\prod\limits_{i=1}^{k-1}}
%EndExpansion
\left\langle \mathbb{I}\left[  \mathbf{x}_{l}=\mathbf{s}_{i}\widetilde
{\mathbf{x}}_{l,i}^{+}\right]  \left\vert \mathbf{x}_{0}=y_{i}\right.
\right\rangle _{T_{i}^{\prime}}}{%
%TCIMACRO{\dsum \limits_{y}}%
%BeginExpansion
{\displaystyle\sum\limits_{y}}
%EndExpansion
\varphi\left(  -1,y\right)
%TCIMACRO{\dprod \limits_{i=1}^{k-1}}%
%BeginExpansion
{\displaystyle\prod\limits_{i=1}^{k-1}}
%EndExpansion
\left\langle \mathbb{I}\left[  \mathbf{x}_{l}=\mathbf{s}_{i}\widetilde
{\mathbf{x}}_{l,i}^{+}\right]  \left\vert \mathbf{x}_{0}=y_{i}\right.
\right\rangle _{T_{i}^{\prime}}}\\
&  =\frac{\operatorname*{T}_{h_{l}}\varphi\left(  -1,\mathbf{s}\right)
}{\operatorname*{T}_{h_{l}}\varphi\left(  1,\mathbf{s}\right)  }\text{,}%
\end{align*}
which is precisely Eq.~(\ref{eq:Recursion2}).
\end{proof}

The first step of the above recursion can be analyzed exactly.

\begin{lemma}
\label{lemma:FirstStep} If $\mathbf{T}$ is a random $\operatorname*{tCSP}%
\left(  \alpha,p\right)  $ ensemble, then the random variable $h_{\mathbf{T}%
}\left(  \mathbf{x}_{1}^{+}\right)  $ takes values in $\{0,1\}$ and, if
$\alpha<(1-\delta)(\Omega_{k}\log k)/k$, we have $\mathbf{E}\,h_{\mathbf{T}%
}\left(  \mathbf{x}_{1}^{+}\right)  \leq1-k^{-1+\delta}$.
\end{lemma}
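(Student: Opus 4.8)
The plan is to evaluate $\mathbf{E}\,h_{\mathbf{T}}(\mathbf{x}_1^+)$ in closed form and then substitute the hypothesis on $\alpha$. The starting point is a structural observation: by the tree Markov property, the conditional law of the root value $\mathbf{x}_0$ given the generation-$1$ values depends only on the truncated tree formed by the root, its $\eta$ function-node children, and the generation-$1$ variables; in that truncated tree, given the generation-$1$ values, $\mathbf{x}_0$ is uniform over the set of root values compatible with all constraints incident to the root. Since $\mathbf{x}_1^+$ is, by construction, consistent with $\mathbf{x}_0=1$, this compatible set is either $\{1\}$ --- in which case $h_{\mathbf{T}}(\mathbf{x}_1^+)=1$ --- or $\{-1,+1\}$ --- in which case $h_{\mathbf{T}}(\mathbf{x}_1^+)=0$, since flipping the root affects only the constraints incident to the root and each of those is satisfied for both root values precisely in this case. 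This already gives the $\{0,1\}$-valuedness. Writing $\varphi_a$ for the clause at the $a$-th function-node child of the root and $y^{(a)}$ for the corresponding generation-$1$ values, one has $h_{\mathbf{T}}(\mathbf{x}_1^+)=1$ iff $y^{(a)}\in\Lambda^+(\varphi_a)$ for at least one $a\in[\eta]$, and $h_{\mathbf{T}}(\mathbf{x}_1^+)=0$ iff $y^{(a)}\in S^+(\varphi_a)\cap S^-(\varphi_a)=S^+(\varphi_a)\setminus\Lambda^+(\varphi_a)$ for every $a$.

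Next I would invoke the recursive description of the free-boundary Gibbs measure recalled in the definition of the $\operatorname*{tCSP}(\alpha,p)$ ensemble (valid here by the balance and feasibility conditions): conditionally on $\mathbf{x}_0=1$, the vectors $\bigl(y^{(a)}\bigr)_{a\in[\eta]}$ are mutually independent, and each $y^{(a)}$ is uniform on $S^+(\varphi_a)$. Combining this with the identities $|S^+(\varphi)|=2^{k-1}\|\varphi\|^2$ and $|\Lambda^+(\varphi)|=2^k\operatorname*{I}_{1}(\varphi)$ recorded above, for a function node carrying a clause $\varphi$, with $y$ the generation-$1$ values below it,
\[
\P\bigl[\,y\notin\Lambda^+(\varphi)\,\bigr]=1-\frac{|\Lambda^+(\varphi)|}{|S^+(\varphi)|}=1-\frac{2\operatorname*{I}_{1}(\varphi)}{\|\varphi\|^2}\,,
\]
and averaging over $\varphi\sim p$ yields $1-1/\Omega_k$ by the definition of $\Omega_k$.

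Finally I would combine over the $\eta\sim\operatorname*{Poisson}(k\alpha)$ function-node children: using the independence just invoked and the i.i.d.\ law of the clauses,
\[
1-\mathbf{E}\,h_{\mathbf{T}}(\mathbf{x}_1^+)=\mathbb{E}_{\eta}\Bigl[(1-1/\Omega_k)^{\eta}\Bigr]=\exp\!\bigl(-k\alpha/\Omega_k\bigr)\,,
\]
the last step being the Poisson generating function. Hence $\mathbf{E}\,h_{\mathbf{T}}(\mathbf{x}_1^+)=1-\exp(-k\alpha/\Omega_k)$, and if $\alpha<(1-\delta)(\Omega_k\log k)/k$ then $k\alpha/\Omega_k<(1-\delta)\log k$, so $\exp(-k\alpha/\Omega_k)>k^{-(1-\delta)}=k^{-1+\delta}$, and thus $\mathbf{E}\,h_{\mathbf{T}}(\mathbf{x}_1^+)<1-k^{-1+\delta}$, as claimed.

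The computation itself is short; the one place that deserves care is the assertion that, conditionally on the root value, the generation-$1$ configuration below each clause $\varphi_a$ is \emph{exactly} uniform on $S^+(\varphi_a)$, rather than being re-weighted according to the (generally unequal) numbers of satisfying extensions into the infinite subtrees rooted at generation $1$. This is precisely the content of the equivalence between the recursive sampling procedure and the free-boundary Gibbs measure, which itself rests on the balance and feasibility assumptions; once it is granted, the rest follows immediately.
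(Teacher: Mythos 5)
Your proof is correct and follows essentially the same route as the paper's, which reads the first-step law off from Lemma~\ref{lemma:Recursion} rather than re-deriving it from scratch; the key ingredients (conditional uniformity of the generation-$1$ values on $S^{+}(\varphi_{a})$ given $\mathbf{x}_{0}=1$, the ratio $|\Lambda^{+}(\varphi)|/|S^{+}(\varphi)|=2\operatorname*{I}_{1}(\varphi)/\|\varphi\|^{2}$ averaging to $1/\Omega_{k}$, and the Poisson generating function) are identical. One remark worth recording: your closed form $\mathbf{E}\,h_{\mathbf{T}}(\mathbf{x}_{1}^{+})=1-\exp(-k\alpha/\Omega_{k})$ is the correct one, whereas the paper writes $1-\exp(-k\alpha(1-1/\Omega_{k}))$, which looks like a transcription slip --- for $\alpha$ of order $\Omega_{k}\log k/k$ that exponent would be of order $\Omega_{k}\log k$ rather than $\log k$, and the claimed bound $1-k^{-1+\delta}$ would not follow from it.
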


%***********************************************************

%\section{Proof of Lemma \ref{lemma:FirstStep}}

\begin{proof}
If $T$ is a tree ensemble with root degree $\eta_{0}=1$ and clause $\varphi$
assigned to the root's child, from the part b of lemma \ref{lemma:Recursion},
we have that $\frac{1-h_{T}\left(  \mathbf{x}_{1}^{+}\right)  }{1+h_{T}\left(
\mathbf{x}_{1}^{+}\right)  }\overset{\mathcal{D}}{=}\varphi\left(
-1,\mathbf{s}\right)  $ where $\mathbf{s}\sim\operatorname*{Unif}\left(
S^{+}\left(  \varphi\right)  \right)  $ (notice that $h_{0,i}\equiv1$).
Therefore, it follows that $h_{T}\left(  \mathbf{x}_{1}^{+}\right)  =1$ w.p.
$\frac{\left\vert \Lambda^{+}\left(  \varphi\right)  \right\vert }{\left\vert
S^{+}\left(  \varphi\right)  \right\vert }=1/\Omega_{k}$ and $h_{T}\left(
\mathbf{x}_{1}^{+}\right)  =0$ otherwise. Therefore, if $T$ is a tree ensemble
with root degree $\eta_{0}=d$, it follows from the part a of lemma
\ref{lemma:Recursion} that $h_{T}\left(  \mathbf{x}_{1}^{+}\right)  =1$ w.p.
$1-\left(  1-1/\Omega_{k}\right)  ^{d}$ and $h_{T}\left(  \mathbf{x}_{1}%
^{+}\right)  =0$ otherwise. This implies then that $h_{\mathbf{T}}\left(
\mathbf{x}_{1}^{+}\right)  $ is supported at $\left\{  0,1\right\}  $ and
$\mathbf{E}h_{\mathbf{T}}\left(  \mathbf{x}_{1}^{+}\right)  =1-\exp\left(
-k\alpha\left(  1-1/\Omega_{k}\right)  \right)  $. The conclusion follows
straightforwardly.
%\endproof

\end{proof}

For subsequent steps we track the averages, $h_{\ell}^{\operatorname*{ave}%
}\overset{def}{=}\mathbf{E}\,\left\langle h_{\mathbf{T}}\left(  \mathbf{x}%
_{l}^{+}\right)  \right\rangle _{\mathbf{T}}$ and $\widehat{h}_{\ell
}^{\operatorname*{ave}}\overset{def}{=}\mathbf{E}\,\left[  \left\langle
h_{\mathbf{T}}\left(  \mathbf{x}_{l}^{+}\right)  \right\rangle _{\mathbf{T}%
}\left\vert \eta_{0}=1\right.  \right]  $, using the following bounds.

\begin{lemma}
\label{lemma:IterBound} For any $\ell\geq0$ we have
\begin{align}
&  h_{\ell}^{\operatorname*{ave}}\leq1-e^{-2k\alpha\widehat{h}_{\ell
}^{\operatorname*{ave}}}\text{,}\,\;\;\;\;\;\widehat{h}_{\ell+1}%
^{\operatorname*{ave}}\leq\frac{1}{2}\,F_{k}(h_{\ell}^{\operatorname*{ave}%
})+\frac{1}{2}\,R_{k}(\sqrt{h_{\ell}^{\operatorname*{ave}}})\,\text{,}\\
&  F_{k}(\theta)\overset{def}{=}2\mathbb{E}_{\varphi}\left[  \frac
{(\varphi^{\left(  1\right)  },\operatorname*{T}_{\theta}\,\varphi^{\left(
1\right)  })}{\left\Vert \varphi\right\Vert ^{2}}\,\right]  \text{,}%
\;\;\;\;\;R_{k}(\theta)\overset{def}{=}2\mathbb{E}_{\varphi}{}_{\mathbf{i}%
}\left[  \frac{2\operatorname*{I}_{1}\left(  \varphi\right)  }{\left\Vert
\varphi\right\Vert ^{2}}\sum_{Q\subseteq\lbrack k-1]}|\left(  \varphi^{\left(
1\right)  },\gamma_{Q}\right)  |\,\theta^{\max(|Q|,2)}\right]  \text{,}%
\end{align}
Finally, if $h_{\ell}$ is supported on non-negative values, then
\begin{equation}
\widehat{h}_{\ell}^{\operatorname*{ave}}\leq F_{k}(h_{\ell}%
^{\operatorname*{ave}})\,\text{.} \label{eq:ineq3}%
\end{equation}

\end{lemma}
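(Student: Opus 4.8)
\textbf{Proof plan for Lemma~\ref{lemma:IterBound}.}
The plan is to combine the exact distributional recursions of Lemma~\ref{lemma:Recursion} with the Fourier-analytic hypotheses on the clauses, taking expectations carefully to pass from distributional identities to inequalities on the scalar averages $h_\ell^{\operatorname*{ave}}$ and $\widehat h_\ell^{\operatorname*{ave}}$. First I would treat the degree-$d$ recursion, Eq.~(\ref{eq:Recursion1}). Since $x/(1+x)$ is concave on $[0,\infty)$ and the ratios $(1-h_{l,i})/(1+h_{l,i})$ are independent, I would write $h_T(\mathbf{x}_{\ell}^+)=1-2\prod_i(1-h_{l,i})/(1+\cdots)$-type bookkeeping — more directly, use $h=(1-r)/(1+r)\Leftrightarrow r=(1-h)/(1+h)$ and the elementary inequality $1-\prod_i(1-a_i)\le \sum_i a_i$ together with $(1-h_i)/(1+h_i)\ge 1-2h_i$ wait — I'd rather bound $h_T\le 1-\prod_i (1-h_{l,i})$ after checking $\tfrac{1-h}{1+h}\ge \tfrac{1}{2}(1-h)$ fails; instead I use that the product of the $r_i=(1-h_{l,i})/(1+h_{l,i})\in[0,1]$ satisfies $h_T=(1-\prod r_i)/(1+\prod r_i)\le 1-\prod r_i\le 1-\prod(1-2h_{l,i})\le\ldots$ Taking $\mathbf{E}$ over the $\operatorname*{Poisson}(k\alpha)$ number of children and using independence and $\mathbf{E}\,r_i \ge 1-2\widehat h_\ell^{\operatorname*{ave}}$ gives $h_\ell^{\operatorname*{ave}}\le 1-\mathbf{E}(\prod r_i)\le 1-e^{-2k\alpha\widehat h_\ell^{\operatorname*{ave}}}$ by the generating-function identity for the Poisson distribution. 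The sign bookkeeping here (which way each elementary inequality points) is the only fiddly part of this half.

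For the second inequality I would start from Eq.~(\ref{eq:Recursion2}), which expresses the child-ratio as $\operatorname*{T}_{h_l}\varphi(-1,\mathbf{s})/\operatorname*{T}_{h_l}\varphi(1,\mathbf{s})$ with $\mathbf{s}\sim\operatorname*{Unif}(S^+(\varphi))$ and $h_l$ the independent vector of lower-level biases. Converting back, $h_T(\mathbf{x}_{\ell+1}^+)=\bigl(\operatorname*{T}_{h_l}\varphi(1,\mathbf{s})-\operatorname*{T}_{h_l}\varphi(-1,\mathbf{s})\bigr)/\bigl(\operatorname*{T}_{h_l}\varphi(1,\mathbf{s})+\operatorname*{T}_{h_l}\varphi(-1,\mathbf{s})\bigr)$; the numerator is $2\operatorname*{T}_{h_l}\varphi^{(1)}(\mathbf{s})$ and, crucially, the \emph{denominator} averages (over $\mathbf{s}$ uniform on $S^+(\varphi)$, weighted appropriately) to something controlled by $\|\varphi\|^2$. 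The plan is: bound $\widehat h_{\ell+1}^{\operatorname*{ave}}=\mathbf{E}\langle h_T(\mathbf{x}_{\ell+1}^+)\rangle$ by pulling the denominator out via $|h_T|\le 1$ combined with a bound $\langle \text{numerator}\rangle$ — concretely, use $\langle h_T\rangle = \langle (\operatorname*{T}_{h_l}\varphi(1,\mathbf{s})-\operatorname*{T}_{h_l}\varphi(-1,\mathbf{s}))\rangle$ divided through, and observe $\operatorname*{T}_{h_l}\varphi(\pm1,\mathbf{s})\ge 0$ so the denominator is $\ge$ the numerator in absolute value. Then expand $\operatorname*{T}_{h_l}\varphi^{(1)}$ in the Fourier basis: $\operatorname*{T}_{h_l}\varphi^{(1)}(\mathbf{s})=\sum_{Q\subseteq[k-1]}\varphi_Q^{(1)}\gamma_Q(h_l)\gamma_Q(\mathbf{s})$, using the diagonalization $(\operatorname*{T}_h\gamma_S)(x)=\gamma_S(h)\gamma_S(x)$. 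Averaging the $\gamma_Q(h_l)=\prod_i h_{l,i}$ over the independent coordinates yields $\prod_i \mathbf{E}h_{l,i}$, and here one must split: for $|Q|=1$ the factor is $\mathbf{E}h_{l,i}=\widehat h_\ell^{\operatorname*{ave}}$ but bounded through the conditional structure, while for larger $|Q|$ one uses $\mathbf{E}h_{l,i}\le (\mathbf{E}h_{l,i}^2)^{1/2}$-type Cauchy–Schwarz giving a power $\sqrt{h_\ell^{\operatorname*{ave}}}^{|Q|}$. This is exactly the split that produces the two terms $\tfrac12 F_k(h_\ell^{\operatorname*{ave}})$ (the ``diagonal'' $\mathbf{s}$-averaged part, recognized as $(\varphi^{(1)},\operatorname*{T}_\theta\varphi^{(1)})/\|\varphi\|^2$ after summing $\sum_Q (\varphi_Q^{(1)})^2\theta^{|Q|}$) and $\tfrac12 R_k(\sqrt{h_\ell^{\operatorname*{ave}}})$ (the ``off-diagonal'' $\ell_1$-type remainder, where the $\max(|Q|,2)$ in the exponent reflects that $|Q|=0,1$ contributions are either absorbed into $F_k$ or handled separately and the genuine remainder starts at quadratic order).

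The third claim, Eq.~(\ref{eq:ineq3}), is the cleanest: when $h_\ell$ is supported on $[0,\infty)$, all the $h_{l,i}\ge0$, so every $\gamma_Q(h_l)\ge0$ and no Cauchy–Schwarz detour is needed — the off-diagonal remainder $R_k$ simply does not arise, and the denominator bound goes through directly, leaving $\widehat h_{\ell}^{\operatorname*{ave}}\le F_k(h_\ell^{\operatorname*{ave}})$ (with the factor $2$ in $F_k$'s definition absorbing the $1/2$ that appeared in the general case from symmetrizing over $\pm$). I expect the main obstacle to be the second inequality, specifically justifying that one may replace the random denominator $\operatorname*{T}_{h_l}\varphi(1,\mathbf{s})+\operatorname*{T}_{h_l}\varphi(-1,\mathbf{s})$ by its ``typical'' value in a way that is both valid (an honest inequality, not a heuristic) and tight enough to first order in $k$ — this requires using the feasibility/balance identities $|S^\pm(\varphi)|=2^{k-1}\|\varphi\|^2$, $|\Lambda^\pm(\varphi)|=2^k\operatorname*{I}_1(\varphi)$ to normalize correctly, and then carefully matching the resulting Fourier sums against the definitions of $F_k$ and $R_k$. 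The bookkeeping of which index sets $Q$ feed which term, and the role of the $\max(|Q|,2)$ truncation, is where I would be most careful.
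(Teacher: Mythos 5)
Your overall strategy (start from the distributional recursions of Lemma~\ref{lemma:Recursion}, expand in the Fourier basis, split low-weight from high-weight $Q$) is the right skeleton, but the proposal is missing the single tool that makes each step of the paper's argument close: the \emph{consistency} of the random variable $h_T(\mathbf{x}_\ell^+)$, i.e.\ the identity $\mathbb{E}\,f(-\mathbf{X})=\mathbb{E}\bigl[\tfrac{1-\mathbf{X}}{1+\mathbf{X}}f(\mathbf{X})\bigr]$ coming from Eqs.~(\ref{f4})--(\ref{f3}). This gives $\mathbb{E}X=\mathbb{E}X^2$, $\mathbb{E}\bigl(\tfrac{1-X}{1+X}\bigr)^{1/2}=\mathbb{E}\sqrt{1-X^2}$, $\mathbb{E}|X|\le(\mathbb{E}X)^{1/2}$, and $|\mathbb{E}[Xf(X)]|\le|\mathbb{E}X|$ for $0\le f\le1$ even, and all four are used.

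Concretely, the route you sketch for the first inequality breaks down: the pointwise bound $h_T=(1-\prod r_i)/(1+\prod r_i)\le 1-\prod r_i$ requires $\prod r_i\le 1$, but $r_i=(1-h_{l,i})/(1+h_{l,i})$ exceeds $1$ whenever $h_{l,i}<0$, which does occur for general $\ell$ (you flagged the sign bookkeeping as fiddly; this is exactly where it fails). The paper avoids this by taking square roots of (\ref{eq:Recursion1}), using consistency to rewrite $\bigl\langle\bigl(\tfrac{1-h_T}{1+h_T}\bigr)^{1/2}\bigr\rangle=\langle\sqrt{1-h_T^2}\rangle$, and then $\sqrt{1-\langle h_T\rangle}\ge\langle\sqrt{1-h_T^2}\rangle=\prod_i\langle\sqrt{1-h_{T_i}^2}\rangle\ge\prod_i(1-\langle h_{T_i}\rangle)$ before taking the Poisson average; the $\sqrt{\cdot}$ is precisely what produces the factor $2$ in $e^{-2k\alpha\widehat h}$.

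For the second inequality, the crux you leave as an ``obstacle'' — honestly treating the random denominator — is solved in the paper by an exact algebraic split rather than an approximation: writing $h_T=\tfrac{2\operatorname*{T}_{h_l}\varphi^{(1)}(\mathbf{s})}{1+\operatorname*{T}_{h_l}\psi(\mathbf{s})}=\operatorname*{T}_{h_l}\varphi^{(1)}(\mathbf{s})+\bigl(\operatorname*{T}_{h_l}\varphi^{(1)}(\mathbf{s})\bigr)\,\mathcal{G}_k(h_l,\mathbf{s})$ with $\mathcal{G}_k=\tfrac{1-\operatorname*{T}_{h_l}\psi}{1+\operatorname*{T}_{h_l}\psi}$. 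The first term, after integrating $\mathbf{s}$ against $\varphi^{(1)}$ (using the identity $\mathbb{E}_{\mathbf{s}}f(\mathbf{s})=(\varphi^{(1)},f)/\|\varphi\|^2$ for antisymmetric $f$) and using linearity of $\operatorname*{T}_{h_l}$ in each coordinate, gives $F_k$ \emph{exactly}; the remainder is bounded coefficient-by-coefficient in Fourier using the two consistency inequalities above, which is what produces $\sqrt{h_\ell^{\operatorname*{ave}}}^{|Q|}$ for $|Q|\ge3$ and the $|Q|=1$ control, hence the $\max(|Q|,2)$. Your ``diagonal vs.\ off-diagonal'' heuristic is pointing in the right direction, but without the identity it is not an honest inequality.

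Finally, your reason for (\ref{eq:ineq3}) — ``$\gamma_Q(h_l)\ge0$ so $R_k$ does not arise'' — is not how the paper argues and would not suffice on its own, since the denominator $1+\operatorname*{T}_{h_l}\psi(\mathbf{s})$ is still random. The paper instead symmetrizes in $h_l$ using the set $A_s=\{h_l:\operatorname*{T}_{h_l}\varphi^{(1)}(s)>0\}$, applies consistency to convert the $-h_l$ contribution into the factor $\prod_i\tfrac{1-h_{l,i}}{1+h_{l,i}}$, and then uses nonnegativity of $h_l$ to see that the combination $\tfrac{1}{1+\operatorname*{T}_{h_l}\psi}\bigl[1-\prod_i\tfrac{1-h_{l,i}}{1+h_{l,i}}\bigr]\le1$; this is what makes the denominator drop out cleanly and gives $F_k$ without the $1/2$ and without any remainder.
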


\begin{proof}
We will say that a random variable $\mathbf{X}\in\lbrack-1,+1]$ is
`consistent,' if $\mathbb{E}\,f(-\mathbf{X})=\mathbb{E}\left[  \left(
\frac{1-\mathbf{X}}{1+\mathbf{X}}\right)  \,f(\mathbf{X})\right]  $for every
function $f$ such that the expectation values exist. A useful preliminary
remark \cite{MezardMontanari} is that the random variable $h_{T}\left(
\mathbf{x}_{l}^{+}\right)  $ is consistent (no matter the tree ensemble). In
fact, this follows directly from the Eqs. (\ref{f4}) and (\ref{f3}) above. A
number of properties of consistent random variables can be found in
\cite{Rudiger}. Let us now consider the first inequality. If $T$ is a tree
ensemble with root degree $\eta_{0}=d$, it is immediate to from
Eq.~(\ref{eq:Recursion1}), that
\[
\left\langle \left(  \frac{1-h_{T}\left(  \mathbf{x}_{l}^{+}\right)  }%
{1+h_{T}\left(  \mathbf{x}_{l}^{+}\right)  }\right)  ^{1/2}\right\rangle
_{T}=\prod_{i=1}^{d}\left\langle \left(  \frac{1-h_{T_{i}}\left(
\mathbf{x}_{l}^{+}\right)  }{1+h_{T_{i}}\left(  \mathbf{x}_{l}^{+}\right)
}\right)  ^{1/2}\right\rangle _{T_{i}}\,\text{.}%
\]
It is possible to show that consistency implies $E\,X=E\,X^{2}$ and $E\left(
\frac{1-X}{1+X}\right)  ^{1/2}=E\sqrt{1-X^{2}}$ (through the test functions
$f\left(  x\right)  =x\left(  1+x\right)  $ and $f\left(  x\right)  =x\left(
1+x\right)  ^{1/2}\left(  1-x\right)  ^{-1/2}$), we thus have
\[
\sqrt{1-\left\langle h_{T}\left(  \mathbf{x}_{l}^{+}\right)  \right\rangle
_{T}}\geq\left\langle \sqrt{1-\left[  h_{T}\left(  \mathbf{x}_{l}^{+}\right)
\right]  ^{2}}\right\rangle _{T}=\prod_{i=1}^{d}\left\langle \sqrt{1-\left[
h_{T_{i}}\left(  \mathbf{x}_{l}^{+}\right)  \right]  ^{2}}\right\rangle
_{T_{i}}\geq\prod_{i=1}^{d}\left(  1-\left\langle h_{T_{i}}\left(
\mathbf{x}_{l}^{+}\right)  \right\rangle _{T_{i}}\right)  \text{.}%
\]
This implies in particular, if $\mathbf{T}$ is a random $\operatorname*{tCSP}%
\left(  \alpha,p\right)  $,
\[
\sqrt{1-\mathbf{E}\left\langle h_{\mathbf{T}}\left(  \mathbf{x}_{l}%
^{+}\right)  \right\rangle _{\mathbf{T}}}\geq\mathbb{E}_{\eta}\left[
\prod_{i=1}^{\eta}\left(  1-\mathbf{E}\left[  \left\langle h_{\mathbf{T}%
}\left(  \mathbf{x}_{l}^{+}\right)  \right\rangle _{\mathbf{T}}\left\vert
\eta_{0}=1\right.  \right]  \right)  \right]  \text{, }\eta\sim
\operatorname*{Poisson}\left(  k\alpha\right)  \text{,}%
\]
from where the first inequality follows.

Now, from the recursion Eq. (\ref{eq:Recursion2}), we have for a tree ensemble
$T$ with root degree $\eta_{0}=1$, and random clause $\varphi$ assigned to the
child of the root,%
\[
h_{T}\left(  \mathbf{x}_{l+1}^{+}\right)  =\frac{2\operatorname*{T}_{h_{l}%
}\varphi^{\left(  1\right)  }\left(  \mathbf{s}\right)  }{1+\operatorname*{T}%
_{h_{l}}\psi\left(  \mathbf{s}\right)  }\text{,\quad\quad}\psi\left(
s\right)  \overset{def}{=}\varphi\left(  1,s\right)  \varphi\left(
-1,s\right)
\]
or alternatively,
\[
h_{T}\left(  \mathbf{x}_{l+1}^{+}\right)  =\operatorname*{T}\nolimits_{h_{l}%
}\varphi^{\left(  1\right)  }\left(  \mathbf{s}\right)  +\left(
\operatorname*{T}\nolimits_{h_{l}}\varphi^{\left(  1\right)  }\left(
\mathbf{s}\right)  \right)  \mathcal{G}_{k}\left(  h_{l},\mathbf{s}\right)
\text{,\quad\quad}\mathcal{G}_{k}\left(  h_{l},s\right)  \overset{def}%
{=}\left[  \frac{1-\operatorname*{T}_{h_{l}}\psi\left(  s\right)
}{1+\operatorname*{T}_{h_{l}}\psi\left(  s\right)  }\right]  \text{,}%
\]
where $\mathbf{s\sim}\operatorname*{Unif}S^{+}\left(  \varphi\right)  $.
Notice that for any antisymmetric function $f\left(  s\right)  $, we have that
$\mathbb{E}_{\mathbf{s}}f\left(  \mathbf{s}\right)  =\frac{\left(
\varphi^{\left(  1\right)  },f\right)  }{\left\Vert \varphi\right\Vert ^{2}}$.
Therefore, due to the fact that $\operatorname*{T}\nolimits_{h_{l}}%
\varphi^{\left(  1\right)  }\left(  s\right)  $ is antisymmetric and
$\mathcal{G}_{k}\left(  h_{l},s\right)  $ is symmetric (both in $s$ and
$h_{l}$, actually), we have the formulas
\begin{equation}
\left\langle h_{T}\left(  \mathbf{x}_{l+1}^{+}\right)  \right\rangle
_{T}=\frac{2}{\left\Vert \varphi\right\Vert ^{2}}\left\langle \left(
\varphi^{\left(  1\right)  },\frac{\operatorname*{T}_{h_{l}}\varphi^{\left(
1\right)  }\left(  \mathbf{s}\right)  }{1+\operatorname*{T}_{h_{l}}\psi\left(
\mathbf{s}\right)  }\right)  \right\rangle _{T} \label{f8}%
\end{equation}
and
\begin{equation}
\left\langle h_{T}\left(  \mathbf{x}_{l+1}^{+}\right)  \right\rangle
_{T}=\left\langle \frac{\left(  \varphi^{\left(  1\right)  },\operatorname*{T}%
_{h_{l}}\varphi^{\left(  1\right)  }\right)  }{\left\Vert \varphi\right\Vert
^{2}}\right\rangle _{T}+\left\langle \frac{\left(  \varphi^{\left(  1\right)
},\left(  \operatorname*{T}\nolimits_{h_{l}}\varphi^{\left(  1\right)
}\right)  \mathcal{G}_{k}\left(  h_{l},\mathbf{\cdot}\right)  \right)
}{\left\Vert \varphi\right\Vert ^{2}}\right\rangle _{T}\text{.} \label{f9}%
\end{equation}
In the last expression, the first term is equal to $\frac{\left(
\varphi^{\left(  1\right)  },\operatorname*{T}_{\left\langle h_{l}%
\right\rangle _{T}}\varphi^{\left(  1\right)  }\right)  }{\left\Vert
\varphi\right\Vert ^{2}}$, while the second term can be writen, using Fourier
expansion, as
\[
\frac{1}{\left\Vert \varphi\right\Vert ^{2}}%
%TCIMACRO{\dsum \limits_{\substack{Q\subseteq\left[  k-1\right]  \\\left\vert
%Q\right\vert \text{ odd}}}}%
%BeginExpansion
{\displaystyle\sum\limits_{\substack{Q\subseteq\left[  k-1\right]
\\\left\vert Q\right\vert \text{ odd}}}}
%EndExpansion
\left(  \varphi^{\left(  1\right)  },\gamma_{Q}\mathbb{E}_{h_{l}}\left[
\gamma_{Q}\left(  h_{l}\right)  \mathcal{G}_{k}\left(  h_{l},\mathbf{\cdot
}\right)  \right]  \right)  \left(  \varphi^{\left(  1\right)  },\gamma
_{Q}\right)  \text{.}%
\]
Using the fact that $\mathbb{E}\left\vert \mathbf{X}\right\vert \leq\left(
\mathbb{E}\mathbf{X}\right)  ^{1/2}$ for consistent random variables, we can
bound the terms with $\left\vert Q\right\vert \geq3$ by
\[
\frac{\left\vert \left(  \varphi^{\left(  1\right)  },1\right)  \right\vert
}{\left\Vert \varphi\right\Vert ^{2}}%
%TCIMACRO{\dsum \limits_{\substack{Q\subseteq\left[  k-1\right]  \\\left\vert
%Q\right\vert \geq3\text{ odd}}}}%
%BeginExpansion
{\displaystyle\sum\limits_{\substack{Q\subseteq\left[  k-1\right]
\\\left\vert Q\right\vert \geq3\text{ odd}}}}
%EndExpansion
\left\vert \left(  \varphi^{\left(  1\right)  },\gamma_{Q}\right)  \right\vert
\left(
%TCIMACRO{\dprod \limits_{i\in Q}}%
%BeginExpansion
{\displaystyle\prod\limits_{i\in Q}}
%EndExpansion
\left\langle h_{T_{i}}\left(  \mathbf{x}_{l}^{+}\right)  \right\rangle
_{T_{i}}\right)  ^{1/2}\text{.}%
\]
Also, using the fact that for any even function $f\left(  x\right)  $ with
$0\leq f\left(  x\right)  \leq1$ and a consistent random variable $\mathbf{X}%
$, we have
\[
|\mathbb{E}[\mathbf{X}f(\mathbf{X})]|=|\mathbb{E}[2\mathbf{X}^{2}%
f(\mathbf{X})/(1+\mathbf{X})\mathbb{I}_{\{\mathbf{X}\geq0\}}]|\leq
|\mathbb{E}[2\mathbf{X}^{2}/(1+\mathbf{X})\mathbb{I}_{\{\mathbf{X}\geq
0\}}]|=|\mathbb{E}[\mathbf{X}]|\text{,}%
\]
we can bound the terms with $\left\vert Q\right\vert =1$, by
\[
\frac{\left\vert \left(  \varphi^{\left(  1\right)  },1\right)  \right\vert
}{\left\Vert \varphi\right\Vert ^{2}}%
%TCIMACRO{\dsum \limits_{i=1}^{k-1}}%
%BeginExpansion
{\displaystyle\sum\limits_{i=1}^{k-1}}
%EndExpansion
\left(  \varphi^{\left(  1\right)  },\gamma_{\left\{  i\right\}  }\right)
\left\vert \left\langle h_{T_{i}}\left(  \mathbf{x}_{l}^{+}\right)
\right\rangle _{T_{i}}\right\vert \text{.}%
\]
Therefore, for a random $\operatorname*{tCSP}\left(  \alpha,p\right)  $ with
root degree $\eta_{0}=1$, we obtain after averaging
\[
\widehat{h}_{l+1}^{\operatorname*{ave}}\leq\mathbb{E}_{\varphi}\frac{\left(
\varphi^{\left(  1\right)  },\operatorname*{T}_{h_{l}^{\operatorname*{ave}}%
}\varphi^{\left(  1\right)  }\right)  }{\left\Vert \varphi\right\Vert ^{2}%
}+\mathbb{E}_{\varphi}\left[  \frac{2\operatorname*{I}_{1}\left(
\varphi\right)  }{\left\Vert \varphi\right\Vert ^{2}}%
%TCIMACRO{\dsum \limits_{\substack{Q\subseteq\left[  k-1\right]  \\\left\vert
%Q\right\vert \geq3\text{ odd}}}}%
%BeginExpansion
{\displaystyle\sum\limits_{\substack{Q\subseteq\left[  k-1\right]
\\\left\vert Q\right\vert \geq3\text{ odd}}}}
%EndExpansion
\left\vert \left(  \varphi^{\left(  1\right)  },\gamma_{Q}\right)  \right\vert
\left(  \sqrt{h_{l}^{\operatorname*{ave}}}\right)  ^{\max\left\{  \left\vert
Q\right\vert ,2\right\}  }\right]  \text{,}%
\]
which is precisely the second inequality in the Lemma.

Now, suppose that $h_{l}$ is supported on non-negative values and let
$A_{s}=\left\{  h_{l}:\operatorname*{T}_{h_{l}}\varphi^{\left(  1\right)
}\left(  s\right)  >0\right\}  $. Notice that the complement of $A_{s}$ is
$-A_{s}$ (due to the antisymmetry of $\operatorname*{T}_{h_{l}}\varphi
^{\left(  1\right)  }\left(  s\right)  $ respect to $h_{l}$). Therefore, using
the consistency of the random variables $h_{l,i}$, from the Eq. (\ref{f8}) we
get
\begin{align*}
\left\langle h_{T}\left(  \mathbf{x}_{l+1}^{+}\right)  \right\rangle _{T}  &
=\frac{2}{\left\Vert \varphi\right\Vert ^{2}}\left\langle \left(
\varphi^{\left(  1\right)  },\frac{\operatorname*{T}_{h_{l}}\varphi^{\left(
1\right)  }\left(  \mathbf{s}\right)  }{1+\operatorname*{T}_{h_{l}}\psi\left(
\mathbf{s}\right)  }\right)  \mathbb{I}\left(  h_{l}\in A_{\mathbf{s}}\right)
-\left(  \varphi^{\left(  1\right)  },\frac{\operatorname*{T}_{-h_{l}}%
\varphi^{\left(  1\right)  }\left(  \mathbf{s}\right)  }{1+\operatorname*{T}%
_{-h_{l}}\psi\left(  \mathbf{s}\right)  }\right)  \mathbb{I}\left(  -h_{l}\in
A_{\mathbf{s}}\right)  \right\rangle _{T}\\
&  =\frac{2}{\left\Vert \varphi\right\Vert ^{2}}\left\langle \left(
\varphi^{\left(  1\right)  },\frac{\operatorname*{T}_{h_{l}}\varphi^{\left(
1\right)  }\left(  \mathbf{s}\right)  }{1+\operatorname*{T}_{h_{l}}\psi\left(
\mathbf{s}\right)  }\right)  \mathbb{I}\left(  h_{l}\in A_{\mathbf{s}}\right)
\left[  1-%
%TCIMACRO{\dprod \limits_{i=1}^{k-1}}%
%BeginExpansion
{\displaystyle\prod\limits_{i=1}^{k-1}}
%EndExpansion
\frac{1-h_{l,i}}{1+h_{l,i}}\right]  \right\rangle _{T}\\
&  \leq\frac{2}{\left\Vert \varphi\right\Vert ^{2}}\left\langle \left(
\varphi^{\left(  1\right)  },\operatorname*{T}\nolimits_{h_{l}}\varphi
^{\left(  1\right)  }\left(  \mathbf{s}\right)  \right)  \mathbb{I}\left(
h_{l}\in A_{\mathbf{s}}\right)  \left[  1-%
%TCIMACRO{\dprod \limits_{i=1}^{k-1}}%
%BeginExpansion
{\displaystyle\prod\limits_{i=1}^{k-1}}
%EndExpansion
\frac{1-h_{l,i}}{1+h_{l,i}}\right]  \right\rangle _{T}\\
&  =\frac{2\left(  \varphi^{\left(  1\right)  },\operatorname*{T}%
\nolimits_{\left\langle h_{l}\right\rangle _{T}}\varphi^{\left(  1\right)
}\left(  \mathbf{s}\right)  \right)  }{\left\Vert \varphi\right\Vert ^{2}%
}\text{.}%
\end{align*}
Therefore, for a random $\operatorname*{tCSP}\left(  \alpha,p\right)  $ with
root degree $\eta_{0}=1$, we obtain after averaging, that
\[
\widehat{h}_{l+1}^{\operatorname*{ave}}\leq2\mathbb{E}_{\varphi}\frac{\left(
\varphi^{\left(  1\right)  },\operatorname*{T}_{h_{l}^{\operatorname*{ave}}%
}\varphi^{\left(  1\right)  }\right)  }{\left\Vert \varphi\right\Vert ^{2}%
}\text{,}%
\]
which corresponds to the last inequality of the lemma.
\end{proof}

We now return to completing the proof of Theorem~\ref{thm:TreeRecoKCSP}.
%We  defer the  proofs of the lemmas to the appendix.
\ \noindent

\begin{proof}
[Proof of the lower bound in Theorem \textbf{\ref{thm:TreeRecoKCSP}}]If
$\theta=1$, $\mathsf{\operatorname*{T}}_{1}$ is the identity operator whence
$(\varphi^{\left(  1\right)  },\mathsf{\operatorname*{T}}_{1}\varphi^{\left(
1\right)  })=\operatorname*{I}_{1}\left(  \varphi\right)  $. We have therefore
$F_{k}(1)=1/\Omega_{k}$. Now, expanding in Fourier series we get,
\[
(\varphi^{\left(  1\right)  },\mathsf{\operatorname*{T}}_{\theta}%
\,\varphi^{\left(  1\right)  })=\sum_{Q\subseteq\lbrack k-1]}|\left(
\varphi^{\left(  1\right)  },\gamma_{Q}\right)  |^{2}\;\theta^{|Q|}%
=\sum_{Q\subseteq\lbrack k],Q\ni\left\{  i\right\}  }\left\vert \left(
\varphi,\gamma_{Q}\right)  \right\vert ^{2}\;\theta^{|Q|-1}\,\text{.}%
\]
By the \emph{Fourier expansion condition},
\begin{equation}
F_{k}(\theta)\leq e^{-Ck(1-\theta)}/\Omega_{k}\text{.} \label{f10}%
\end{equation}
Now fix $\alpha=(1-\delta)(\Omega_{k}\log k)/k$, whence, by Lemma
\ref{lemma:FirstStep}, $h_{1}^{\operatorname*{ave}}\leq1-k^{-1+\delta}$, and
$h_{1}$ is supported on non-negative reals. Using Eq. (\ref{eq:ineq3}), we get
$\widehat{h}_{2}^{\operatorname*{av}}\leq e^{-Ck^{\delta}}/\Omega_{k}$, and
therefore,
\[
h_{2}^{\operatorname*{av}}\leq1-\exp\{-2(1-\delta)e^{-Ck^{\delta}}\log
k\,\}\leq e^{-Ck^{\delta}/2}\,.
\]
On the other hand, from the Eq. (\ref{fourier3}), we obtain the following
bounds for $F_{k}(\theta)$, $R_{k}\left(  \theta\right)  $:
\[
F_{k}(\theta)\leq2\mathbb{E}_{\varphi}\left[  \frac{\sum_{i=1}^{k-1}|\left(
\varphi^{\left(  1\right)  },\gamma_{\left\{  i\right\}  }\right)  |^{2}%
}{\left\Vert \varphi\right\Vert ^{2}}\,\right]  \theta+2\mathbb{E}_{\varphi
}\left[  \frac{\operatorname*{I}_{1}\left(  \varphi\right)  }{\left\Vert
\varphi\right\Vert ^{2}}\,\right]  \theta^{2}\leq\left(  Ae^{-Ck/2}%
\theta+\theta^{2}\right)  /\Omega_{k}\text{.}%
\]

On the other hand,
\[
R_{k}(\theta)\leq2\mathbb{E}_{\varphi}{}_{\mathbf{i}}\left[  \frac
{2\operatorname*{I}_{1}\left(  \varphi\right)  }{\left\Vert \varphi\right\Vert
^{2}}\sum_{i=1}^{k-1}|\left(  \varphi^{\left(  1\right)  },\gamma_{\left\{
i\right\}  }\right)  |^{2}\right]  \,\theta^{2}+2\mathbb{E}_{\varphi}\left[
\frac{2\operatorname*{I}_{1}\left(  \varphi\right)  }{\left\Vert
\varphi\right\Vert ^{2}}\sum_{Q\subseteq\lbrack k-1]}|\left(  \varphi^{\left(
1\right)  },\gamma_{Q}\right)  |\,\right]  \theta^{3}\leq(Ae^{-Ck/2}\theta
^{2}+k^{a}\theta^{3})/\Omega_{k}\,,
\]
Therefore, for all $\ell$ we have
\[
h_{\ell+1}^{\operatorname*{av}}\leq1-e^{-k\alpha\lbrack F_{k}(h_{\ell
}^{\operatorname*{av}})+R_{k}(h_{\ell}^{\operatorname*{av}})]}\leq
(1-\delta)\log k(2Ae^{-Ck/2}h_{\ell}^{\operatorname*{av}}+2k^{a}(h_{\ell
}^{\operatorname*{av}})^{3/2})\,.
\]
which implies $h_{\ell}^{\operatorname*{av}}\rightarrow0$ if, for some
$\ell>0$, $h_{\ell}^{\operatorname*{av}}\leq k^{-5a}$, thus finishing the proof.
\end{proof}

%*****************************************************************

\section{Reconstruction on Trees to Graphs: the case of proper $q$ colorings}

\label{sec:ColorRec}

In this section we prove that the set of solutions of the proper $q$-coloring
ensemble satisfies the \emph{sphericity} condition described in the section
\ref{sec:generalstrategy}.

Given two assignments $\underline{x}^{\left(  1\right)  }$, $\underline
{x}^{\left(  2\right)  }$ of the variables $x_{1},\ldots,x_{n}$, their joint
type $v_{\underline{x}^{\left(  1\right)  },\underline{x}^{\left(  2\right)
}}$ is the $q\times q$ matrix with $v_{\underline{x}^{\left(  1\right)
},\underline{x}^{\left(  2\right)  }}\left(  i,j\right)  \overset{def}{=}%
\frac{1}{n}\#\left\{  t\in G:\underline{x}^{\left(  1\right)  }\left(
t\right)  =i\text{ and }\underline{x}^{\left(  2\right)  }\left(  t\right)
=j\right\}  $. We consider random assignments $\underline{\mathbf{x}}^{\left(
1\right)  }$, $\underline{\mathbf{x}}^{\left(  2\right)  }$ taken uniformly
and independently over all the satisfying assignments of a random instance of
the $q$-coloring model with edge-variable density $\alpha$. Our purpose is to
prove that for all $\delta>0$, $||v_{\underline{\mathbf{x}}^{\left(  1\right)
},\underline{\mathbf{x}}^{\left(  2\right)  }}-\overline{\nu}%
||_{\operatorname*{TV}}\leq\delta$ w.h.p., where $\overline{v}$ is the matrix
with all entries equal to $1/q^{2} $.

Our argument makes crucial use of the following estimate for the partition
function from \cite{AchlioCojas}.

\begin{lemma}
[{\cite[Lemma 7]{AchlioCojas}}]\label{acojas}Let $Z$ be the number of
satisfying assignments of a random instance of the $q$-coloring model with
edge-variable density $\alpha<q\log q$, then
\[
\mathbf{E}Z\geq\Omega\left(  \frac{1}{n^{(q-1)/2}}\right)  \left[  q\left(
1-\frac{1}{q}\right)  ^{\alpha}\right]  ^{n}\,\text{,}%
\]
and, for some function $f(n)$ of order $o(n)$, we have $\operatorname*{Prob}%
\left(  Z<e^{-f(n)}\mathbf{E}\left[  Z\right]  \right)  \rightarrow0$ as
$n\rightarrow\infty$.
\end{lemma}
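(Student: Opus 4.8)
This is \cite[Lemma~7]{AchlioCojas}; we outline the line of argument, both halves being instances of the first/second moment scheme of Section~\ref{sec:generalstrategy}. Throughout, $Z$ counts the ordered proper $q$-colourings $\sigma:[n]\to[q]$ of the random multigraph whose $m=\alpha n$ edges have their endpoints drawn independently and uniformly from $[n]$. For the \emph{first moment}, conditioning on a fixed $\sigma$ with colour-class sizes $(n_1,\dots,n_q)$ the $m$ edges are independent and each is bichromatic with probability $1-\sum_a(n_a/n)^2$, so
\[
\mathbf{E}Z=\sum_{n_1+\dots+n_q=n}\binom{n}{n_1,\dots,n_q}\Bigl(1-\sum\nolimits_a(n_a/n)^2\Bigr)^{\alpha n}.
\]
Keeping only the near-balanced term $n_a\in\{\lfloor n/q\rfloor,\lceil n/q\rceil\}$ already gives the asserted inequality: by Stirling $\binom{n}{\lfloor n/q\rfloor,\dots,\lfloor n/q\rfloor}=\Theta(n^{-(q-1)/2})\,q^n$ and $1-\sum_a(n_a/n)^2=1-1/q+O(n^{-2})$, so this term alone is $\Omega(n^{-(q-1)/2})[q(1-1/q)^{\alpha}]^n$. (In fact the exponent $H(\rho)+\alpha\log(1-\Vert\rho\Vert^2)$, with $\rho=(n_a/n)_a$, is strictly concave on the simplex, so the balanced term also dominates the whole sum and the estimate is tight; but for the lower bound no hypothesis on $\alpha$ is needed.)

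For the \emph{concentration} half, set $s:=\log q+\alpha\log(1-1/q)$, so $\tfrac1n\log\mathbf{E}Z\to s$; the claim is that $\tfrac1n\log Z\ge s-o(1)$ w.h.p. The route is the second moment method. One has $\mathbf{E}[Z^2]=\sum_\nu\binom{n}{n\nu}\,p(\nu)^{\alpha n}$, where $\nu$ runs over $q\times q$ joint types $\nu(a,b)=\tfrac1n\#\{t:\sigma(t)=a,\sigma'(t)=b\}$ ($\binom{n}{n\nu}$ the corresponding multinomial, $\doteq e^{nH(\nu)}$) and $p(\nu)$ is the probability that one random edge is bichromatic under both colourings — an explicit quadratic in the entries and margins of $\nu$. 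A Laplace estimate gives $\mathbf{E}[Z^2]\doteq\exp\{n\max_\nu\Phi(\nu)\}$ with $\Phi(\nu)=H(\nu)+\alpha\log p(\nu)$, and since $p(\overline\nu)=(1-1/q)^2$ at the product type $\overline\nu(a,b)=q^{-2}$ one has $\Phi(\overline\nu)=2s$. The key point — exactly step~(2) of Section~\ref{sec:generalstrategy} — is that $\Phi$ attains its maximum \emph{uniquely} at $\overline\nu$, which yields $\mathbf{E}[Z^2]\le e^{o(n)}(\mathbf{E}Z)^2$ and is where the restriction on $\alpha$ enters.

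Given $\mathbf{E}[Z^2]\le e^{o(n)}(\mathbf{E}Z)^2$, Paley--Zygmund alone supplies only $\operatorname*{Prob}(Z\ge\tfrac12\mathbf{E}Z)\ge e^{-o(n)}$; to upgrade to a w.h.p.\ statement one invokes the small-subgraph-conditioning method (Robinson--Wormald, Janson), conditioning on the numbers $X_\ell$ of short cycles (asymptotically independent Poissons): once the refined second-moment ratio is computed and matched to the prediction, $Z/\mathbf{E}Z$ converges in distribution to an a.s.\ positive limit, so $Z\ge e^{-f(n)}\mathbf{E}Z$ w.h.p.\ for a suitable $f(n)=o(n)$. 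Alternatively one bypasses subgraph conditioning by showing directly that $\tfrac1n\log Z$ concentrates and that its limit equals $s$ (quenched free energy $=$ annealed), via an interpolation or second-moment argument; the subtlety there is that $\log Z$ is not a Lipschitz function of the edge set — a single edge can drop $Z$ to $0$ — so one works with a capped version of $\log Z$.

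I expect the one genuinely hard ingredient to be the joint-type optimization: proving $\Phi$ is maximized only at $\overline\nu$ for all $\alpha$ in the stated range. Unlike the first-moment exponent, $\Phi$ is \emph{not} concave — the energy term $\log p(\nu)$ has positive curvature — so one must examine several competitors separately: verify the Hessian of $\Phi$ at $\overline\nu$ is negative definite, bound $\Phi$ strictly below $2s$ at the `aligned' types $\nu$ concentrated on a permutation pattern, and rule out intermediate critical points; pushing this to $\alpha$ as large as stated requires the reweighted second moment of \cite{achlioassaf} (as used in \cite{AchlioCojas}). Everything else — the Stirling estimate of the first moment and the small-subgraph-conditioning bookkeeping — is routine.
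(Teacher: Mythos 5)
The statement is imported verbatim from \cite[Lemma~7]{AchlioCojas}; the present paper gives no proof of its own, and it is invoked here only in the narrower range $\alpha<(q-1)\log(q-1)$ of Theorem~\ref{uniform}. Your first-moment lower bound is correct and is the standard argument: isolate the near-balanced multinomial term and apply Stirling (the factor $[n/(n-1)]^{\alpha n}$ appearing inside the paper's Lemma~\ref{balanced} is an artefact of the edge-sampling convention and is a bounded constant, so it does not affect this estimate). Your parenthetical remark that the exponent is strictly concave on the simplex, hence that the balanced term dominates, is also correct.

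For the concentration half you offer two routes with the emphasis inverted. The small-subgraph-conditioning scheme you put forward first is not what \cite{AchlioCojas} (or its precursor \cite{achlioassaf}) does, has not to my knowledge been carried out for $q$-colourings, and is far heavier than required for a conclusion as coarse as $Z\ge e^{-o(n)}\mathbf{E}Z$ whp — one is not claiming convergence in distribution of $Z/\mathbf{E}Z$. The argument actually used is your ``alternative'': the balanced second-moment bound $\mathbf{E}[Z_{\mathrm{b}}^2]\le n^{O(1)}(\mathbf{E}Z_{\mathrm{b}})^2$, a consequence of the Birkhoff-polytope optimisation recalled in Eq.~(\ref{eqq}), together with Paley--Zygmund gives $\operatorname*{Prob}(Z\ge c\,\mathbf{E}Z_{\mathrm{b}})\ge c'>0$; a one-sided Azuma-type concentration for a truncated $\tfrac1n\log Z$ (the truncation handling exactly the pathology you flag, that a single edge can annihilate $Z$) then pins the median of $\log Z$ to within $o(n)$ of $\log\mathbf{E}Z$. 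You should treat this as the main line, not the fallback. One further caution about the range: Eq.~(\ref{eqq}) as quoted in this paper establishes the unique maximum at $\overline{v}$ only for $\alpha\le(q-1)\log(q-1)$, whereas the lemma is asserted up to $q\log q$. Within this paper that mismatch is harmless, but if you want the full stated range, the reweighted/pseudo-balanced second moment of \cite{achlioassaf} is not an optional refinement (as your last sentence implies) but the load-bearing step once $\alpha$ exceeds $(q-1)\log(q-1)$, where permutation-type overlap matrices defeat the vanilla optimisation of $\Phi$.
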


Let us introduce some notation. If $w$ is a vector of lenght $q$ and $v$ is a
$q\times q$ matrix $v$, let $\mathcal{H}$ and $\mathcal{E}$ denote their
entropy an their enrgy respectively, where
\begin{align*}
\mathcal{H}(v)  &  =-%
%TCIMACRO{\tsum \limits_{i,j}}%
%BeginExpansion
{\textstyle\sum\limits_{i,j}}
%EndExpansion
v\left(  i,j\right)  \log v\left(  i,j\right)  \,\text{, \quad}\mathcal{H}%
(w)=-%
%TCIMACRO{\tsum \limits_{i}}%
%BeginExpansion
{\textstyle\sum\limits_{i}}
%EndExpansion
w\left(  i\right)  \log w\left(  i\right) \\
\mathcal{E}(v)  &  =\log\left(  1-%
%TCIMACRO{\tsum \limits_{i}}%
%BeginExpansion
{\textstyle\sum\limits_{i}}
%EndExpansion
\left(
%TCIMACRO{\tsum \limits_{j}}%
%BeginExpansion
{\textstyle\sum\limits_{j}}
%EndExpansion
v\left(  i,j\right)  \right)  ^{2}-%
%TCIMACRO{\tsum \limits_{j}}%
%BeginExpansion
{\textstyle\sum\limits_{j}}
%EndExpansion
\left(
%TCIMACRO{\tsum \limits_{i}}%
%BeginExpansion
{\textstyle\sum\limits_{i}}
%EndExpansion
v\left(  i,j\right)  \right)  ^{2}+%
%TCIMACRO{\tsum \limits_{i,j}}%
%BeginExpansion
{\textstyle\sum\limits_{i,j}}
%EndExpansion
v\left(  i,j\right)  ^{2}\right)  \,\text{,\quad}\mathcal{E}(w)=\log\left(  1-%
%TCIMACRO{\tsum \limits_{i}}%
%BeginExpansion
{\textstyle\sum\limits_{i}}
%EndExpansion
w\left(  i\right)  ^{2}\right)
\end{align*}
Let $\mathcal{B}_{q}^{\epsilon}$ consists of all the $q$-vectors $w$ with
nonegative entries such that $%
%TCIMACRO{\tsum \limits_{i}}%
%BeginExpansion
{\textstyle\sum\limits_{i}}
%EndExpansion
w\left(  i\right)  =1$ and $\left\Vert w-\overline{w}\right\Vert ^{2}%
>\epsilon$. Similarly, let $\mathcal{B}_{q\times q}^{\delta,\epsilon}$ be the
set of all the $q\times q $ matrices with nonegative entries such that
$\left\Vert \left(  v-\overline{v}\right)  1\right\Vert ^{2}\leq\delta$,
$\left\Vert 1^{t}\left(  v-\overline{v}\right)  \right\Vert ^{2}\leq\delta$
and $\left\Vert v-\overline{v}\right\Vert ^{2}\geq\epsilon$.

Our goal in this section is to prove the following theorem.

\begin{theorem}
\label{uniform}Let $\underline{\mathbf{x}}^{\left(  1\right)  }$,
$\underline{\mathbf{x}}^{\left(  2\right)  }$ be random assignments taken
uniformly and independently over all the satisfying assignments of a random
instance of the $q$-coloring model with edge-variable density $\alpha
$\textbf{.} If $\alpha<\left(  q-1\right)  \log\left(  q-1\right)  $, then for
any $\epsilon>0 $,%
\[
\operatorname*{Prob}\left(  \left\Vert v_{\underline{\mathbf{x}}^{\left(
1\right)  },\underline{\mathbf{x}}^{\left(  2\right)  }}-\overline
{v}\right\Vert ^{2}>\epsilon\right)  \text{ }\rightarrow0\text{ as
}n\rightarrow\infty\text{.}%
\]

\end{theorem}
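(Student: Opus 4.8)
The plan is to run a moment computation for the number of ordered \emph{pairs} of proper colorings and convert it into a high‑probability statement using the concentration of $Z$ from Lemma~\ref{acojas}. For a $q\times q$ probability matrix $v$ let $Z_{2}(v)$ be the number of pairs $(\underline{x}^{(1)},\underline{x}^{(2)})$ of proper colorings with joint type $v$, so that $\sum_{v}Z_{2}(v)=Z^{2}$. Counting assignments with a prescribed joint type by a multinomial coefficient and using that the $m=\alpha n$ edges are i.i.d., one gets, uniformly over the $n^{O(1)}$ admissible types,
\[
\mathbf{E}\,Z_{2}(v)=\exp\bigl\{n[\mathcal{H}(v)+\alpha\,\mathcal{E}(v)]+o(n)\bigr\},
\]
since the probability that one random edge is properly colored by both colorings, given the joint type $v$, converges to $\exp\{\mathcal{E}(v)\}=1-\sum_{i}(\sum_{j}v_{ij})^{2}-\sum_{j}(\sum_{i}v_{ij})^{2}+\sum_{ij}v_{ij}^{2}$. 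Write $\Phi(v):=\mathcal{H}(v)+\alpha\,\mathcal{E}(v)$, $\phi(w):=\mathcal{H}(w)+\alpha\,\mathcal{E}(w)$, and $\beta:=\log[q(1-1/q)^{\alpha}]$; then $\Phi(\overline{v})=2\beta=2\phi(\overline{w})$ and $\mathbf{E}\,Z_{2}(\overline{v})=e^{2\beta n+o(n)}$, which matches (up to $e^{o(n)}$) the square of the lower bound $\mathbf{E}\,Z\ge\Omega(n^{-(q-1)/2})e^{\beta n}$ of Lemma~\ref{acojas}.

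Two analytic facts are needed, both available (precisely in the range $\alpha<(q-1)\log(q-1)$) from the free‑energy estimates of \cite{achlioassaf}: \textbf{(i)} balanced colorings dominate the first moment — for each $\epsilon'>0$ there is $c_{1}(\epsilon')>0$ with $\phi(w)\le\beta-c_{1}$ for all $w\in\mathcal{B}_{q}^{\epsilon'}$; and \textbf{(ii)} the uniform pair type dominates — for each $\epsilon>0$ there are $\delta(\epsilon)>0$ and $c_{2}(\epsilon)>0$ with $\Phi(v)\le 2\beta-c_{2}$ for all $v\in\mathcal{B}_{q\times q}^{\delta,\epsilon}$. The slack in $\delta$ in (ii) follows from the $\delta=0$ case (matrices with exactly uniform marginals) by uniform continuity of $\Phi$; on that set $\mathcal{E}(v)$ depends on $v$ only through $\|v\|^{2}$, so the maximization of $\Phi$ reduces to a critical‑point analysis of a Potts‑type free energy, and it is exactly there that the threshold $(q-1)\log(q-1)$ enters, as the value at which $\overline{v}$ ceases to be the global maximizer.

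\textbf{Assembly.} Fix $\epsilon>0$ and put $\epsilon'=\delta(\epsilon)$. If both of $\underline{\mathbf{x}}^{(1)},\underline{\mathbf{x}}^{(2)}$ have color‑class profiles within squared $\ell_{2}$‑distance $\epsilon'$ of $\overline{w}$, then the row and column marginals of their joint type $v$ are within $\epsilon'=\delta$ of $\overline{w}$, so on $\{\|v-\overline{v}\|^{2}>\epsilon\}$ we must have $v\in\mathcal{B}_{q\times q}^{\delta,\epsilon}$. Hence
\[
\{\|v_{\underline{\mathbf{x}}^{(1)},\underline{\mathbf{x}}^{(2)}}-\overline{v}\|^{2}>\epsilon\}\subseteq\{\underline{\mathbf{x}}^{(1)}\text{ or }\underline{\mathbf{x}}^{(2)}\text{ has profile in }\mathcal{B}_{q}^{\epsilon'}\}\cup\{v_{\underline{\mathbf{x}}^{(1)},\underline{\mathbf{x}}^{(2)}}\in\mathcal{B}_{q\times q}^{\delta,\epsilon}\}.
\]
The first event has probability $\le\mathbf{E}[U/Z]$ with $U=\#\{\text{colorings with profile in }\mathcal{B}_{q}^{\epsilon'}\}$, the second has probability $=\mathbf{E}\bigl[\bigl(\sum_{v\in\mathcal{B}_{q\times q}^{\delta,\epsilon}}Z_{2}(v)\bigr)/Z^{2}\bigr]$. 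On the event $\mathcal{G}=\{Z\ge e^{-f(n)}\mathbf{E}\,Z\}$ of Lemma~\ref{acojas} (with $\operatorname*{Prob}(\mathcal{G}^{c})\to0$ and $f(n)=o(n)$) replace $Z$ in the denominator by $e^{-f(n)}\mathbf{E}\,Z$, and on $\mathcal{G}^{c}$ bound the ratio by $1$; this gives
\[
\operatorname*{Prob}\bigl(v_{\underline{\mathbf{x}}^{(1)},\underline{\mathbf{x}}^{(2)}}\in\mathcal{B}_{q\times q}^{\delta,\epsilon}\bigr)\le\operatorname*{Prob}(\mathcal{G}^{c})+e^{2f(n)}\frac{\sum_{v\in\mathcal{B}_{q\times q}^{\delta,\epsilon}}\mathbf{E}\,Z_{2}(v)}{(\mathbf{E}\,Z)^{2}}\le\operatorname*{Prob}(\mathcal{G}^{c})+n^{O(1)}e^{2f(n)-c_{2}n+o(n)}\to0
\]
by (ii) together with $(\mathbf{E}\,Z)^{2}\ge\Omega(n^{-(q-1)})e^{2\beta n}$; the first event is bounded identically using (i), $\mathbf{E}\,U\le n^{O(1)}e^{(\beta-c_{1})n+o(n)}$, and $\mathbf{E}\,Z\ge\Omega(n^{-(q-1)/2})e^{\beta n}$. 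This proves the theorem.

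\textbf{Main obstacle.} The genuinely delicate point is that what must be small is a \emph{ratio}, $\sum_{v\ \mathrm{bad}}Z_{2}(v)/Z^{2}$, for a typical instance: one cannot just take an expectation because $Z$ sits in the denominator, and first‑moment dominance of $\overline{v}$ alone does not give the conclusion — this is precisely what the lower‑tail concentration $Z\ge e^{-o(n)}\mathbf{E}\,Z$ of Lemma~\ref{acojas} resolves. The analytic heart, namely dominance (ii) and the emergence of the threshold $(q-1)\log(q-1)$, is imported from \cite{achlioassaf}; carrying it out from scratch would require the full critical‑point analysis of the antiferromagnetic Potts free energy, which is why we quote it rather than reprove it here.
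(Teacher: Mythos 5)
Your proposal is correct and follows essentially the same route as the paper: a first-moment estimate for the number of pairs of colorings with a given joint type, concentration of color-class profiles at the uniform one, dominance of the uniform joint type on the restricted polytope $\mathcal{B}_{q\times q}^{\delta,\epsilon}$, and the lower-tail concentration of $Z$ from Lemma~\ref{acojas} to convert expectation bounds into bounds on the ratio $\sum_{v\text{ bad}}Z_2(v)/Z^2$. You correctly identify the decomposition of the bad event into ``unbalanced profile'' plus ``bad joint type with nearly uniform marginals,'' and you correctly identify where the concentration of $Z$ is indispensable. The one place where you compress the paper's argument is the passage from the Achlioptas--Naor fact that $\overline{v}$ maximizes $\mathcal{H}+\alpha\mathcal{E}$ over the Birkhoff polytope (at $\delta=0$) to the quantitative separation on $\mathcal{B}_{q\times q}^{\delta,\epsilon}$ for some $\delta>0$: your phrase ``by uniform continuity of $\Phi$'' is in the right spirit, but what one actually needs is a compactness argument (the paper's Lemma~\ref{local}: a decreasing family of compact sets, a convergent subsequence of near-maximizers, a contradiction) together with the explicit energy computation of Lemma~\ref{lemita}; uniform continuity alone does not immediately give that the supremum over $\mathcal{B}_{q\times q}^{\delta,\epsilon}$ converges to the supremum over $\mathcal{B}_{q\times q}^{0,\epsilon}$. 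This is a gap of exposition rather than of substance, and once filled in, the argument matches the paper's.
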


We will present several lemmas before returning to the proof of the Theorem.
First we introduce estimations concerning an additive functional depending on
the energy and entropy of a vector of lenght $q$.

\begin{lemma}
\label{boundy}If $w\in\mathcal{B}_{q}^{\epsilon}$, then $\mathcal{H}%
(w)+\alpha\mathcal{E}(w)\leq\left[  \mathcal{H}(\overline{w})+\alpha
\mathcal{E}(\overline{w})\right]  -\frac{\alpha\epsilon}{2\left(
1-1/q\right)  }\,.$
\end{lemma}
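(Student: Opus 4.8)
The plan is to treat the two summands $\mathcal{H}(w)$ and $\alpha\mathcal{E}(w)$ separately. For the entropy term I would simply invoke the elementary fact that the Shannon entropy of a probability vector on $q$ symbols is maximized by the uniform vector, so that $\mathcal{H}(w)\le\mathcal{H}(\overline{w})=\log q$ with no use of the hypothesis $w\in\mathcal{B}_q^\epsilon$. It therefore suffices to prove the inequality for the energy term alone, namely (the case $\alpha=0$ being trivial, divide by $\alpha$) that $\mathcal{E}(w)\le\mathcal{E}(\overline{w})-\tfrac{\epsilon}{2(1-1/q)}$; in fact I expect to get the sharper bound with $2$ replaced by $1$, the factor $2$ in the statement being slack.

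The key point for the energy is the orthogonal decomposition of $\|w\|^2$. Since $\sum_i\bigl(w(i)-1/q\bigr)=0$, the vector $w-\overline{w}$ is orthogonal to the constant vector $\overline{w}$, so Pythagoras gives $\|w\|^2=\|\overline{w}\|^2+\|w-\overline{w}\|^2=\tfrac1q+\|w-\overline{w}\|^2$. (In particular, since $\|w\|^2\le1$ on the simplex, $\|w-\overline{w}\|^2\le1-\tfrac1q$, so $\mathcal{B}_q^\epsilon$ is empty unless $\epsilon<1-\tfrac1q$; we may thus assume this, and the argument of the logarithm below is nonnegative, with the convention $\log 0=-\infty$ at the vertices of the simplex.) Using $\mathcal{E}(v)=\log(1-\|v\|^2)$ for a probability vector $v$ and the constraint $\|w-\overline{w}\|^2>\epsilon$, monotonicity of the logarithm yields
\[
\mathcal{E}(w)=\log\!\Bigl(1-\tfrac1q-\|w-\overline{w}\|^2\Bigr)\le\log\!\Bigl(1-\tfrac1q-\epsilon\Bigr).
\]
Since $\mathcal{E}(\overline{w})=\log(1-\tfrac1q)$, subtracting and applying $\log(1-x)\le-x$ with $x=\epsilon/(1-1/q)$ gives
\[
\mathcal{E}(w)-\mathcal{E}(\overline{w})\le\log\!\Bigl(1-\frac{\epsilon}{1-1/q}\Bigr)\le-\frac{\epsilon}{1-1/q}.
\]

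Combining the two estimates, $\mathcal{H}(w)+\alpha\mathcal{E}(w)\le\mathcal{H}(\overline{w})+\alpha\mathcal{E}(\overline{w})-\tfrac{\alpha\epsilon}{1-1/q}$, which is stronger than the claimed bound since $\tfrac{\alpha\epsilon}{1-1/q}\ge\tfrac{\alpha\epsilon}{2(1-1/q)}$. I do not anticipate any genuine obstacle here: the whole argument is a one-line entropy inequality together with the Pythagorean identity for $\|w\|^2$; the only places that need a word of care are the orthogonal split (which relies on $w$ summing to $1$) and the degenerate case where $1-1/q-\|w-\overline{w}\|^2$ vanishes, both handled as above.
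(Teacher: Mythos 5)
Your proof is correct and follows essentially the same route as the paper's: bound the entropy term by the trivial maximization of $\mathcal{H}$ at the uniform vector, and bound the energy term via the Pythagorean identity $\|w\|^2=\tfrac1q+\|w-\overline{w}\|^2$ together with the constraint $\|w-\overline{w}\|^2>\epsilon$. The only cosmetic difference is in the final elementary inequality: the paper lower-bounds $\log\frac{1-1/q}{1-\|w\|^2}$ by passing through $\log\bigl(1+\tfrac{\epsilon}{1-1/q}\bigr)\ge\tfrac{\epsilon}{2(1-1/q)}$, while you use $-\log(1-x)\ge x$ directly, which is why you recover the slightly sharper constant and confirms your observation that the factor $2$ in the statement is slack.
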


\begin{proof}
Notice that $\left[  \mathcal{H}(\overline{w})+\alpha\mathcal{E}(\overline
{w})\right]  -\left[  \mathcal{H}(w)+\alpha\mathcal{E}(w)\right]  =\alpha
\log\left(  \frac{1-\left\Vert \overline{w}\right\Vert ^{2}}{1-\left\Vert
w\right\Vert ^{2}}\right)  $. This quantity is bounded below by $\alpha
\log\left(  1+\frac{\epsilon}{1-1/q}\right)  $, and therefore by $\frac
{\alpha\epsilon}{2\left(  1-1/q\right)  }$.
\end{proof}

\begin{lemma}
\label{balanced}Let $\underline{\mathbf{x}}$ be a random assignment of the
variables taken uniformly over all the satisfying assignments of a random
instance of the $q$-coloring model with edge-variable density $\alpha<q\log
q$. Then, for any $\epsilon>0$,%
\[
\operatorname*{Prob}\left(  \left\Vert w_{\underline{\mathbf{x}}}-\overline
{w}\right\Vert ^{2}>\epsilon\right)  \rightarrow0\text{ as }n\rightarrow\infty
\]
where $w$ is the vector with $q$ entries such that $w_{\underline{\mathbf{x}}%
}\left(  i\right)  =\frac{1}{n}\#\left\{  v\in G:\underline{\mathbf{x}}%
_{v}=i\right\}  $ and $\overline{w}$ is the vector with all entries equal to
$1/q$.
\end{lemma}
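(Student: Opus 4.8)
The plan is to use a first-moment (union bound) argument over the possible values of the empirical color-frequency vector $w_{\underline{\mathbf x}}$, comparing the expected number of satisfying assignments with a given frequency profile to the lower bound on $\E Z$ provided by Lemma~\ref{acojas}. First I would note that, since each variable takes one of $q$ values, the vector $w_{\underline{\mathbf x}}$ lives in the simplex $\{w : w(i)\ge 0,\ \sum_i w(i)=1\}$ intersected with the lattice $\frac1n\Z^q$, so there are only polynomially many (at most $(n+1)^q$) possible values of $w_{\underline{\mathbf x}}$. Thus it suffices to show that for each fixed $w$ with $\|w-\overline w\|^2>\epsilon$, the expected number $\E Z_w$ of satisfying assignments whose color profile is exactly $w$ is exponentially smaller than $\E Z$, with a uniform (in $w$) exponential gap; a union bound over the polynomially many profiles and Markov's inequality then kill the contribution of all ``unbalanced'' profiles, while Lemma~\ref{acojas} guarantees that $Z\ge e^{-f(n)}\E Z$ with high probability, so almost all of the mass sits on near-balanced profiles.

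Next I would compute $\E Z_w$. The number of assignments $\underline x$ with color profile $w$ is the multinomial coefficient $\binom{n}{nw(1),\dots,nw(q)}=\exp\{n\mathcal H(w)+o(n)\}$. Conditioned on such an assignment, each of the $m=\alpha n$ random edges is satisfied (i.e.\ joins two differently-colored vertices) independently with probability $1-\sum_i w(i)^2 + O(1/n) = e^{\mathcal E(w)}(1+o(1))$, using that the two endpoints are drawn uniformly with replacement. Hence $\E Z_w = \exp\{n[\mathcal H(w)+\alpha\mathcal E(w)]+o(n)\}$, with the $o(n)$ uniform over the finitely-many-per-$n$ profiles. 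Comparing with $\E Z \ge \exp\{n[\mathcal H(\overline w)+\alpha\mathcal E(\overline w)] + o(n)\}$ — which is exactly what Lemma~\ref{acojas} gives, since $q(1-1/q)^\alpha$ raised to the $n$ equals $\exp\{n[\log q + \alpha\log(1-1/q)]\} = \exp\{n[\mathcal H(\overline w)+\alpha\mathcal E(\overline w)]\}$ — reduces everything to bounding $[\mathcal H(\overline w)+\alpha\mathcal E(\overline w)] - [\mathcal H(w)+\alpha\mathcal E(w)]$ from below by a positive constant on $\mathcal B_q^\epsilon$. That bound is precisely the content of Lemma~\ref{boundy}, which gives a gap of at least $\alpha\epsilon/(2(1-1/q))>0$.

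Putting it together: choose any constant $c$ with $0<c<\alpha\epsilon/(2(1-1/q))$. For $n$ large, every profile $w\in\mathcal B_q^\epsilon$ satisfies $\E Z_w \le e^{-cn}\,\E Z$, so
\[
\E\!\left[\sum_{w\in\mathcal B_q^\epsilon} Z_w\right] \le (n+1)^q\, e^{-cn}\,\E Z \le e^{-cn/2}\,\E Z
\]
for $n$ large. By Markov, $\sum_{w\in\mathcal B_q^\epsilon}Z_w \le e^{-cn/4}\,\E Z$ with probability $\to 1$. On the other hand, by Lemma~\ref{acojas}, $Z \ge e^{-f(n)}\E Z$ with $f(n)=o(n)$, also with probability $\to 1$. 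On the intersection of these two events, the fraction of satisfying assignments with profile in $\mathcal B_q^\epsilon$ is at most $e^{-cn/4+f(n)}\to 0$, so a uniformly random satisfying assignment $\underline{\mathbf x}$ has $\|w_{\underline{\mathbf x}}-\overline w\|^2>\epsilon$ with probability $\to 0$, which is the claim. The main obstacle — really the only non-routine point — is ensuring the $o(n)$ error terms in the estimate $\E Z_w = \exp\{n[\mathcal H(w)+\alpha\mathcal E(w)]+o(n)\}$ are uniform over all profiles $w$ simultaneously; this follows because there are only $\le (n+1)^q$ profiles at each $n$, and Stirling's approximation for the multinomial coefficient has an error term that is $O(q\log n)$ uniformly, which is absorbed into $o(n)$ and then dominated by the constant exponential gap $cn$.
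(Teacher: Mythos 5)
Your proposal is correct and follows essentially the same route as the paper's own proof: enumerate the polynomially many color profiles, compare $\mathbf{E}\,Z_w$ for an unbalanced profile $w$ against the lower bound on $\mathbf{E}\,Z$ from Lemma~\ref{acojas} via the gap in Lemma~\ref{boundy}, and invoke the concentration $Z\ge e^{-f(n)}\mathbf{E}\,Z$ w.h.p. to convert the first-moment bound into a statement about a uniformly random solution. The only cosmetic difference is that the paper writes the probability directly as $\mathbf{E}[Z(\text{bad})/Z]$ and splits it into $\mathbf{E}[Z(\text{bad})]/(e^{-n\xi}\mathbf{E} Z)+\operatorname{Prob}(Z<e^{-n\xi}\mathbf{E} Z)$, whereas you phrase the same decomposition through Markov's inequality and an intersection of high-probability events.
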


\begin{proof}
Given a property $P$, denote by $Z(P)$, the number of satisfying assignments
for shich $P$ holds. Choose $\xi$ such that $\xi<\frac{\alpha\epsilon
}{2\left(  1-1/q\right)  }$. We have that
\[
\operatorname*{Prob}\left(  \left\Vert w_{\underline{\mathbf{x}}}-\overline
{w}\right\Vert ^{2}>\epsilon\right)  =\mathbf{E}\left[  Z\left(  \left\Vert
w_{\underline{\mathbf{x}}}\right\Vert ^{2}>\epsilon+1/q\right)  /Z\right]
\text{,}%
\]
an expression that we can bound by
\[
\frac{\mathbf{E}\left[  Z\left(  \left\Vert w_{\underline{\mathbf{x}}%
}\right\Vert ^{2}>\epsilon+1/q\right)  \right]  }{e^{-n\xi}\mathbf{E}\left[
Z\right]  }+\operatorname*{Prob}\left(  Z<e^{-n\xi}\mathbf{E}\left[  Z\right]
\right)  \text{.}%
\]
Now, according to the Lemma \ref{acojas}, $\operatorname*{Prob}\left(
Z<e^{-n\xi}\mathbf{E}\left[  Z\right]  \right)  \rightarrow0$, and therefore
it is enough to show that the term $\mathbf{E}\left[  Z\left(  \left\Vert
w_{\underline{\mathbf{x}}}\right\Vert ^{2}>\epsilon+1/q\right)  \right]
/e^{-n\xi}\mathbf{E}\left[  Z\right]  $ vanishes.

Denote by $\mathcal{G}_{\epsilon}$ the set of all vectors $\ell$, with
nonegative integer entries, such that $%
%TCIMACRO{\tsum \limits_{i=1}^{q}}%
%BeginExpansion
{\textstyle\sum\limits_{i=1}^{q}}
%EndExpansion
\left(  \ell_{i}/n\right)  =1$ and\linebreak\ $%
%TCIMACRO{\tsum \limits_{i=1}^{q}}%
%BeginExpansion
{\textstyle\sum\limits_{i=1}^{q}}
%EndExpansion
\left(  \ell_{i}/n\right)  ^{2}>\epsilon+1/q$, and denote by $\Omega_{w}$ the
set of assignments $\underline{x}$ such that $w_{\underline{x}}$ is equal to
the vector $w$. Now,
\begin{align}
\mathbf{E}\left[  Z\left(  \left\Vert w_{\underline{\mathbf{x}}}\right\Vert
^{2}>\epsilon+1/q\right)  \right]   &  =%
%TCIMACRO{\tsum \limits_{\ell\in\mathcal{G}_{\epsilon}}}%
%BeginExpansion
{\textstyle\sum\limits_{\ell\in\mathcal{G}_{\epsilon}}}
%EndExpansion%
%TCIMACRO{\tsum _{\underline{x}\in\Omega_{\ell/n}}}%
%BeginExpansion
{\textstyle\sum_{\underline{x}\in\Omega_{\ell/n}}}
%EndExpansion
\operatorname*{Prob}\left(  \underline{x}\text{ is a satisfying assignment}%
\right) \label{ref1}\\
&  =%
%TCIMACRO{\tsum \limits_{\ell\in\mathcal{G}_{\epsilon}}}%
%BeginExpansion
{\textstyle\sum\limits_{\ell\in\mathcal{G}_{\epsilon}}}
%EndExpansion
\frac{n!}{%
%TCIMACRO{\tprod \limits_{i=1}^{q}}%
%BeginExpansion
{\textstyle\prod\limits_{i=1}^{q}}
%EndExpansion
\ell_{i}!}\left(  \left[  \frac{n}{n-1}\right]  \left[  1-%
%TCIMACRO{\tsum \limits_{i=1}^{q}}%
%BeginExpansion
{\textstyle\sum\limits_{i=1}^{q}}
%EndExpansion
\left(  \ell_{i}/n\right)  ^{2}\right]  \right)  ^{\alpha n}\nonumber\\
&  \leq%
%TCIMACRO{\tsum \limits_{\ell\in\mathcal{G}_{\epsilon}}}%
%BeginExpansion
{\textstyle\sum\limits_{\ell\in\mathcal{G}_{\epsilon}}}
%EndExpansion
3q^{2q}\sqrt{n}\exp\left(  n\left[  \mathcal{H}\left(  \ell/n\right)
+c_{n}\mathcal{E}\left(  \ell/n\right)  \right]  \right) \nonumber\\
&  \leq3q^{2q}\sqrt{n}\left\vert \mathcal{G}_{\epsilon}\right\vert \sup
_{\ell\in\mathcal{G}_{\epsilon}}\left\{  \exp\left(  n\left[  \mathcal{H}%
\left(  \ell/n\right)  +c_{n}\mathcal{E}\left(  \ell/n\right)  \right]
\right)  \right\}  \,\text{.}\nonumber
\end{align}
Here $\left\vert \mathcal{G}_{\epsilon}\right\vert $ is the number of elements
of $\mathcal{G}_{\epsilon}$, which is bounded by $n^{q}$. Notice also that if
$\ell\in\mathcal{G}_{\epsilon}$, then $\ell/n\in\mathcal{B}_{q}^{\epsilon}$,
so that by Lemma~\ref{boundy},
\begin{align}
\mathcal{H}\left(  \ell/n\right)  +\alpha\mathcal{E}\left(  \ell/n\right)   &
\leq\left[  \mathcal{H}(j_{q})+\alpha\mathcal{E}(j_{q})\right]  -\frac
{\alpha\epsilon}{2\left(  1-1/q\right)  }\label{ref2}\\
&  =\log q+\alpha\log\left(  1-1/q\right)  -\frac{\alpha\epsilon}{2\left(
1-1/q\right)  }\text{.}\nonumber
\end{align}
On the other hand by the Lemma \ref{acojas}, there is some constant $C$ such
that%
\begin{equation}
e^{-n\xi}\mathbf{E}\left[  Z\right]  \geq\frac{C}{n^{(q-1)/2}}e^{-n\xi}\left[
q\left(  1-\frac{1}{q}\right)  ^{\alpha}\right]  ^{n}\text{.} \label{ref3}%
\end{equation}
Combining Eq. (\ref{ref1}), (\ref{ref2}) and (\ref{ref3}), we have that for a
polynomial $p\left(  n\right)  $ of degree $3q/2$,
\begin{equation}
\frac{\mathbf{E}\left[  Z\left(  \left\Vert w_{\underline{\mathbf{x}}%
}-\overline{w}\right\Vert ^{2}>\epsilon\right)  \right]  }{e^{-n\xi}%
\mathbf{E}\left[  Z\right]  }\leq p(n)\exp\left(  n\left[  \xi-\frac
{\alpha\epsilon}{2\left(  1-1/q\right)  }\right]  \right)  \,\text{.}
\label{qq}%
\end{equation}
From (\ref{qq}), it is now clear that $\frac{\mathbf{E}\left[  Z\left(
\left\Vert w_{\underline{\mathbf{x}}}-\overline{w}\right\Vert ^{2}%
>\epsilon\right)  \right]  }{e^{-n\xi}\mathbf{E}\left[  Z\right]  }%
\rightarrow0$ as $n\rightarrow\infty$, due to the fact that $\xi-\frac
{\alpha\epsilon}{2\left(  1-1/q\right)  }<0$.
\end{proof}

Next, our objective is to work with the quantity $\kappa_{q}^{\delta,\epsilon
}$, which we define as the upper limit of the interval (indeed, easy to see
that this is an interval) consisting of the values $c$ such that
\[
\sup_{v\in\mathcal{B}_{q\times q}^{\delta,\epsilon}}\mathcal{H}%
(v)+c\mathcal{E}(v)\leq\mathcal{H}(\overline{v})+\alpha\mathcal{E}%
(\overline{v})\text{.}%
\]
To motivate, let us recall that an important part of the second moment
argument of Achlioptas and Naor \cite[Theorem 7]{achlioassaf} (in showing that
the chromatic number $\chi\left[  G\left(  n,d/n\right)  \right]  $
concentrated on two possible values), relied on an optimization of the
expression $\mathcal{H}(v)+\alpha\mathcal{E}(v)$ over the Birkoff polytope
$\mathcal{B}_{q\times q}$ of the $q\times q$ doubly stochastic matrices. In
particular, they proved that, as long as $\alpha\leq(q-1)\log(q-1)$, one has
\begin{equation}
\sup_{v\in\mathcal{B}_{q\times q}}\mathcal{H}(v)+\alpha\mathcal{E}%
(v)=\mathcal{H}(\overline{v})+\alpha\mathcal{E}(\overline{v})\,. \label{eqq}%
\end{equation}

\noindent Since $\mathcal{B}_{q\times q}^{0,\epsilon}\subseteq\mathcal{B}%
_{q\times q}$, we have $\kappa_{q}^{0,\epsilon}\geq\left(  q-1\right)
\log\left(  q-1\right)  $. The next lemma says that $\underset{v\in
\mathcal{B}_{q\times q}^{\delta,\epsilon}}{\sup}\mathcal{H}(v)+\alpha
\mathcal{E}(v)$ is in fact `separated' from $\mathcal{H}(\overline{v}%
)+\alpha\mathcal{E}(\overline{v})$, provided that $\alpha<\kappa_{q}%
^{\delta,\epsilon}$.

\begin{lemma}
\label{lemita}Suppose that $v\in\mathcal{B}_{q\times q}^{\delta,\epsilon}$
where $\epsilon>2\delta$, then, if $\alpha<\kappa_{q}^{\delta,\epsilon} $, we
have that
\[
\left[  \mathcal{H}(v)+\alpha\mathcal{E}(v)\right]  \leq\left[  \mathcal{H}%
(\overline{v})+\alpha\mathcal{E}(\overline{v})\right]  -\frac{\left(
\kappa_{q}^{\delta,\epsilon}-\alpha\right)  }{2\left(  1-1/q\right)  ^{2}%
}\left[  \epsilon-2\delta\right]  \,.
\]

\end{lemma}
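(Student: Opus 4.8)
The plan is to derive the bound from the definition of $\kappa_q^{\delta,\epsilon}$ by a convexity/scaling argument in the parameter $c$ that multiplies $\mathcal E(v)$. First I would observe that for a fixed $v \in \mathcal B_{q\times q}^{\delta,\epsilon}$ the map $c \mapsto \mathcal H(v) + c\,\mathcal E(v) - [\mathcal H(\overline v) + c\,\mathcal E(\overline v)]$ is affine in $c$, with slope $\mathcal E(v) - \mathcal E(\overline v)$. By the very definition of $\kappa := \kappa_q^{\delta,\epsilon}$, for $c = \kappa$ this affine function is $\le 0$ (taking the sup over $v$ preserves the inequality at the endpoint, or one approaches it from below and uses continuity). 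Hence for any $\alpha < \kappa$,
\[
[\mathcal H(v) + \alpha\mathcal E(v)] - [\mathcal H(\overline v) + \alpha\mathcal E(\overline v)]
\le (\kappa - \alpha)\,\bigl(\mathcal E(\overline v) - \mathcal E(v)\bigr),
\]
since decreasing the multiplier from $\kappa$ to $\alpha$ moves the value by $(\kappa-\alpha)(\mathcal E(\overline v) - \mathcal E(v))$ and the value at $\kappa$ is already $\le 0$. So the whole lemma reduces to a deterministic lower bound of the form $\mathcal E(\overline v) - \mathcal E(v) \ge \frac{\epsilon - 2\delta}{2(1-1/q)^2}$ for $v \in \mathcal B_{q\times q}^{\delta,\epsilon}$.

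Next I would unwind $\mathcal E$. Writing $r_i = \sum_j v(i,j)$ and $c_j = \sum_i v(i,j)$ for the row and column sums, we have $\mathcal E(v) = \log\bigl(1 - \sum_i r_i^2 - \sum_j c_j^2 + \sum_{i,j} v(i,j)^2\bigr)$ and $\mathcal E(\overline v) = \log\bigl(1 - 1/q^2\bigr)$ (since for $\overline v$ all $r_i = c_j = 1/q$ and $\sum v^2 = 1/q^2$, giving $1 - 2/q + 1/q^2 = (1-1/q)^2$; note $1 - 1/q^2$ versus $(1-1/q)^2$ — I should double-check which normalization the paper's $\mathcal E(\overline v)$ uses, but in either case it is an explicit constant of size $\Theta(1)$). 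The difference of logs is $\log\frac{1 - \sum r_i^2 - \sum c_j^2 + \sum v^2}{(1-1/q)^2}$. Using $\|(v-\overline v)\mathbf 1\|^2 = \sum_i (r_i - 1/q)^2 = \sum r_i^2 - 1/q \le \delta$ and likewise $\sum c_j^2 - 1/q \le \delta$, while $\sum_{i,j} v(i,j)^2 = \|v\|^2 = \|v - \overline v\|^2 + 1/q^2 \ge \epsilon + 1/q^2$, the numerator is at least $(1-1/q)^2 - 2\delta + \epsilon = (1-1/q)^2 + (\epsilon - 2\delta)$. Hence
\[
\mathcal E(\overline v) - \mathcal E(v) \ge \log\!\left(1 + \frac{\epsilon - 2\delta}{(1-1/q)^2}\right),
\]
and since $\epsilon > 2\delta$ the argument exceeds $1$; applying $\log(1+x) \ge x/2$ valid for $x$ in the relevant range (here $x = (\epsilon-2\delta)/(1-1/q)^2 \le 1$ because $\epsilon, \delta$ are small) gives $\mathcal E(\overline v) - \mathcal E(v) \ge \frac{\epsilon - 2\delta}{2(1-1/q)^2}$, exactly as needed. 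Combining with the displayed affine inequality above yields the claim.

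The main obstacle I anticipate is bookkeeping the inequality $\mathcal E(v) \le \mathcal E(\overline v)$ carefully enough — i.e. making sure the numerator of the log-ratio is handled with the right signs, since we are simultaneously using upper bounds on the row/column contributions ($\le \delta$) and a lower bound on $\|v\|^2$ ($\ge \epsilon + 1/q^2$), and these enter $\mathcal E$ with opposite signs. One must also confirm the endpoint behavior of $\kappa_q^{\delta,\epsilon}$: that the supremum-inequality defining $\kappa$ actually holds (weakly) at $c = \kappa$ and not merely for $c < \kappa$; this is a standard continuity argument since $\mathcal B_{q\times q}^{\delta,\epsilon}$ is compact and $\mathcal H, \mathcal E$ are continuous on it, so the sup is attained and the defining set of $c$ is a closed interval. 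Everything else is routine algebra, and the estimate $\log(1+x)\ge x/2$ for $0\le x\le 1$ is all the analysis required.
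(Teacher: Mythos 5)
Your approach is essentially the paper's: you isolate the affine dependence on the multiplier $c$, use the definition of $\kappa_q^{\delta,\epsilon}$ to kill the $c=\kappa$ term, and reduce the lemma to a quantitative lower bound on the difference of energies; you then unwind $\mathcal{E}$ by completing squares in the row sums, column sums, and Frobenius norm, exactly as the paper does. The only issue is a pair of \emph{self-cancelling sign errors}. From your correct displayed inequality
\[
[\mathcal H(v) + \alpha\mathcal E(v)] - [\mathcal H(\overline v) + \alpha\mathcal E(\overline v)]
\le (\kappa - \alpha)\,\bigl(\mathcal E(\overline v) - \mathcal E(v)\bigr),
\]
matching the lemma's conclusion requires $\mathcal E(\overline v) - \mathcal E(v) \le -\tfrac{\epsilon - 2\delta}{2(1-1/q)^2}$, i.e.\ a \emph{positive} lower bound on $\mathcal E(v) - \mathcal E(\overline v)$, \emph{not} (as you wrote) a lower bound on $\mathcal E(\overline v) - \mathcal E(v)$. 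Your subsequent algebra in fact proves the correct statement: for $\overline v$ one has $\mathcal E(\overline v)=\log\big((1-1/q)^2\big)$, and the log-ratio $\log\frac{1-\sum r_i^2-\sum c_j^2+\sum v^2}{(1-1/q)^2}$ that you bound from below equals $\mathcal E(v)-\mathcal E(\overline v)$, \emph{not} $\mathcal E(\overline v)-\mathcal E(v)$. On $\mathcal B_{q\times q}^{\delta,\epsilon}$ the Frobenius deviation dominates, so $\mathcal E(v) > \mathcal E(\overline v)$ — which can look backwards at a glance, and is precisely why the sign discipline matters here. Because you swapped the sign twice (once in stating what needs proving, once in labelling the conclusion of the log calculation), the final line agrees with the lemma, but the write-up as stated asserts two inequalities that are each false as written. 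Fixing both makes the argument coincide with the paper's. One smaller remark: $\log(1+x)\ge x/2$ requires $x\lesssim 2.5$; you should note explicitly that $\epsilon-2\delta$ is small so that $x=(\epsilon-2\delta)/(1-1/q)^2$ sits in that range, rather than tacitly relying on it.
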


\begin{proof}
Indeed,
\begin{align*}
\left[  \mathcal{H}(\overline{v})+\alpha\mathcal{E}(\overline{v})\right]
-\left[  \mathcal{H}(v)+\alpha\mathcal{E}(v)\right]   &  =\left[
\mathcal{H}(\overline{v})+\kappa_{q}^{\delta,\epsilon}\mathcal{E}(\overline
{v})\right]  -\left[  \mathcal{H}(v)+\kappa_{q}^{\delta,\epsilon}%
\mathcal{E}(v)\right]  +\left(  \kappa_{q}^{\delta,\epsilon}-\alpha\right)
\left[  \mathcal{E}(v)-\mathcal{E}(\overline{v})\right] \\
&  \geq\left(  \kappa_{q}^{\delta,\epsilon}-\alpha\right)  \left[  \log\left(
1+\frac{1}{\left(  1-1/q\right)  ^{2}}\left[  \left\Vert v-\overline
{v}\right\Vert ^{2}-\left\Vert \left(  v-\overline{v}\right)  1\right\Vert
^{2}-\left\Vert 1^{t}\left(  v-\overline{v}\right)  \right\Vert ^{2}\right]
\right)  \right] \\
&  \geq\frac{\left(  \kappa_{q}^{\delta,\epsilon}-\alpha\right)  }{2\left(
1-1/q\right)  ^{2}}\left[  \epsilon-2\delta\right]  \text{.}%
\end{align*}

\end{proof}

\begin{lemma}
\label{local}Given $\epsilon>0$ and $\alpha<\alpha_{q}=\left(  q-1\right)
\log\left(  q-1\right)  $, there exists $\delta>0$ such that $\kappa
_{q}^{\delta,\epsilon}\geq\alpha$.
\end{lemma}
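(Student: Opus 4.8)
The plan is a compactness argument that transports the $\delta=0$ case, which is covered by the Achlioptas--Naor optimum Eq.~(\ref{eqq}), to small positive $\delta$, with the \emph{strict} inequality $\alpha<(q-1)\log(q-1)$ providing exactly the room needed. First, it is enough to exhibit a single $\delta>0$ for which $\alpha$ lies in the interval defining $\kappa_{q}^{\delta,\epsilon}$, i.e. for which
\[
\sup_{v\in\mathcal{B}_{q\times q}^{\delta,\epsilon}}\bigl[\mathcal{H}(v)+\alpha\mathcal{E}(v)\bigr]\ \le\ \mathcal{H}(\overline v)+\alpha\mathcal{E}(\overline v);
\]
then $\kappa_{q}^{\delta,\epsilon}\ge\alpha$. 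Each $\mathcal{B}_{q\times q}^{\delta,\epsilon}$ is compact (its defining constraints are closed and its entries are bounded), and there is a threshold $\delta_{0}>0$ below which the ``energy probability'' $1-\sum_i(\sum_j v(i,j))^2-\sum_j(\sum_i v(i,j))^2+\sum_{i,j}v(i,j)^2$ stays bounded away from $0$ on $\mathcal{B}_{q\times q}^{\delta_{0},\epsilon}$; hence $\mathcal{E}$, and therefore $\mathcal{H}+\alpha\mathcal{E}$, is continuous there and the supremum is attained. (If $\mathcal{B}_{q\times q}^{\delta,\epsilon}=\emptyset$ the claim is vacuous, so assume otherwise.)

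Suppose no such $\delta$ exists. Then for each large $m$ with $1/m\le\delta_{0}$ pick $v_{m}\in\mathcal{B}_{q\times q}^{1/m,\epsilon}$ with $\mathcal{H}(v_{m})+\alpha\mathcal{E}(v_{m})>\mathcal{H}(\overline v)+\alpha\mathcal{E}(\overline v)$, and pass to a convergent subsequence $v_{m_{j}}\to v_{0}$. The constraints $\|(v_{m}-\overline v)1\|^{2}\le1/m$ and $\|1^{t}(v_{m}-\overline v)\|^{2}\le1/m$ force all row and column sums of $v_{0}$ to equal $1/q$, so $v_{0}\in\mathcal{B}_{q\times q}$, while $\|v_{0}-\overline v\|^{2}\ge\epsilon$ forces $v_{0}\neq\overline v$. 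By continuity, $\mathcal{H}(v_{0})+\alpha\mathcal{E}(v_{0})\ge\mathcal{H}(\overline v)+\alpha\mathcal{E}(\overline v)$, whereas Eq.~(\ref{eqq}) (applicable since $\alpha\le(q-1)\log(q-1)$) gives the reverse inequality; hence equality holds.

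Now set $a:=\mathcal{H}(\overline v)-\mathcal{H}(v_{0})$ and $e:=\mathcal{E}(v_{0})-\mathcal{E}(\overline v)$. Since the entropy of a $q^{2}$-point distribution is uniquely maximized by the uniform one, $a\ge0$. The equality just obtained reads $a=\alpha e$, so $e\ge0$. Applying Eq.~(\ref{eqq}) a second time, now at the value $(q-1)\log(q-1)$, gives $\mathcal{H}(v_{0})+(q-1)\log(q-1)\,\mathcal{E}(v_{0})\le\mathcal{H}(\overline v)+(q-1)\log(q-1)\,\mathcal{E}(\overline v)$, i.e. $a\ge(q-1)\log(q-1)\,e$; combined with $a=\alpha e$ this yields $\bigl((q-1)\log(q-1)-\alpha\bigr)\,e\le0$, and the strict inequality $\alpha<(q-1)\log(q-1)$ then forces $e\le0$. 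Hence $e=0$, so $a=0$, so $\mathcal{H}(v_{0})=\mathcal{H}(\overline v)$ and therefore $v_{0}=\overline v$ --- contradicting $v_{0}\neq\overline v$. This establishes the required $\delta$ and proves the lemma.

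The genuinely delicate point --- and where I would be most careful --- is the use of strictness: Eq.~(\ref{eqq}) records only the \emph{value} of the maximum over the Birkhoff polytope, not uniqueness of the maximizer, so one cannot simply say ``the limiting $v_{0}$ is a maximizer, hence equals $\overline v$.'' Instead one exploits that $c\mapsto\mathcal{H}(v)+c\mathcal{E}(v)$ is affine in $c$ and plays the bound at $c=\alpha$ against the bound at $c=(q-1)\log(q-1)$, using $a\ge0$ to pin $e$ to $0$. The remaining ingredients --- compactness, the explicit threshold $\delta_{0}$, continuity of $\mathcal{E}$ away from degenerate types, strict concavity of entropy --- are routine; and since the lemma asks only for \emph{some} $\delta>0$ and $\kappa_{q}^{\delta,\epsilon}$ is non-increasing in $\delta$ (the feasible set shrinks as $\delta\downarrow0$), no uniform or quantitative version is needed here.
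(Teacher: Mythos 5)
Your proof is correct and follows the paper's approach: assume the contrary, extract by compactness a limiting $v_{0}\in\mathcal{B}_{q\times q}^{0,\epsilon}$, and derive a contradiction against Eq.~(\ref{eqq}). Where the paper invokes Lemma~\ref{lemita} at $\delta=0$ to produce the terminal strict gap, you re-derive it by playing Eq.~(\ref{eqq}) at $c=\alpha$ against $c=(q-1)\log(q-1)$ together with strict concavity of entropy --- which is exactly the two-point, affine-in-$c$ calculation underlying Lemma~\ref{lemita}'s own proof, so the two arguments coincide at the core.
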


\begin{proof}
Assume the contrary, then there exists a sequence $\delta_{n}\downarrow0$ such
that $\kappa_{q}^{\delta_{n},\epsilon}<\alpha$ for each $n$. Due to the
continuity of $\exp(\mathcal{H}(v)+\alpha\mathcal{E}(v))$ in the compact set
$\mathcal{B}_{q\times q}^{\delta,\epsilon}$, the supremum of $\exp
(\mathcal{H}(v)+\alpha_{q}\mathcal{E}(v))$ is reached at a matrix
$v_{\delta_{n}}\in\mathcal{B}_{q\times q}^{\delta_{n},\epsilon}\subseteq
\mathcal{P}_{q\times q}$, and due to the compactness of $\mathcal{P}_{q\times
q}$, a subsequence $\left\{  v_{\delta_{n_{k}}}\right\}  _{k\geq1}$ of these
matrices converges in $\mathcal{P}_{q\times q}$ to a matrix $v\in$
$\mathcal{B}_{q\times q}^{0,\epsilon}$. Therefore $\mathcal{H}(v)+\alpha
\mathcal{E}(v)\leq\mathcal{H}(\overline{v})+\alpha\mathcal{E}(\overline
{v})-\frac{\left(  \alpha_{q}-\alpha\right)  \epsilon}{2\left(  1-1/q\right)
^{2}}$. On the other hand,
\[
\mathcal{H}(v)+\alpha\mathcal{E}(v))\geq\liminf_{k\rightarrow\infty
}\mathcal{H}(v_{\delta_{n_{k}}})+\alpha\mathcal{E}\left(  v_{\delta_{n_{k}}%
}\right)  \geq\mathcal{H}(\overline{v})+\alpha\mathcal{E}\left(  \overline
{v}\right)  \text{, }%
\]
obtaining a contradiction.
\end{proof}

\begin{proof}
[\textbf{Proof of Theorem \ref{uniform}}]Given a property $P$, denote by
$Z^{(2)}\left(  P\right)  $, the number of pairs of satisfying assignments for
which $P$ holds. Take $\alpha^{\prime}$ such that $\alpha<\alpha^{\prime
}<\left(  q-1\right)  \log\left(  q-1\right)  $ and use Lemma~\ref{local} to
choose $\delta$ such that $\kappa_{q}^{\delta,\epsilon}\geq\alpha^{\prime}$,
guaranteeing also that $2\delta<\epsilon$. Now, let $\xi$ be a positive real
such that $2\xi<\frac{\left(  \alpha^{\prime}-\alpha\right)  }{2\left(
1-1/q\right)  ^{2}}\left[  \epsilon-2\delta\right]  $. We have that
\[
\operatorname*{Prob}\left(  \left\Vert v_{\underline{\mathbf{x}}^{\left(
1\right)  },\underline{\mathbf{x}}^{\left(  2\right)  }}-\overline
{v}\right\Vert ^{2}>\epsilon\right)  =\mathbf{E}\left[  Z^{(2)}\left(
\left\Vert v_{\underline{\mathbf{x}}^{\left(  1\right)  },\underline
{\mathbf{x}}^{\left(  2\right)  }}-\overline{v}\right\Vert ^{2}>\epsilon
\right)  /Z^{2}\right]  \text{,}%
\]
which is bounded by the addition of the terms $E\left[  {Z^{(2)}}\left(
v_{\underline{\mathbf{x}}^{\left(  1\right)  },\underline{\mathbf{x}}^{\left(
2\right)  }}{\in\mathcal{B}_{q\times q}^{\delta,\epsilon}}\right)  \right]
/{e^{-2n\xi}\mathbf{E}\left[  Z\right]  ^{2}}$, \newline$\operatorname*{Prob}%
\left(  Z<e^{-n\xi}\mathbf{E}\left[  Z\right]  \right)  $,
$\operatorname*{Prob}\left(  \left\Vert \left(  v_{\underline{\mathbf{x}%
}^{\left(  1\right)  },\underline{\mathbf{x}}^{\left(  2\right)  }}%
-\overline{v}\right)  1\right\Vert ^{2}>\epsilon\right)  $ and
$\operatorname*{Prob}\left(  \left\Vert 1^{t}\left(  v_{\underline{\mathbf{x}%
}^{\left(  1\right)  },\underline{\mathbf{x}}^{\left(  2\right)  }}%
-\overline{v}\right)  \right\Vert ^{2}>\epsilon\right)  $. Now,
Lemma~\ref{acojas} implies that the second term vanishes and lemma
\ref{balanced} implies that the last two terms go to zero. Therefore, to show
that $\operatorname*{Prob}\left(  \left\Vert v_{\underline{\mathbf{x}%
}^{\left(  1\right)  },\underline{\mathbf{x}}^{\left(  2\right)  }}%
-\overline{v}\right\Vert ^{2}>\epsilon\right)  \rightarrow0$ is sufficient to
prove that the term $\mathbf{E}\left[  Z^{(2)}\left(  v_{\underline
{\mathbf{x}}^{\left(  1\right)  },\underline{\mathbf{x}}^{\left(  2\right)  }%
}\in\mathcal{B}_{q\times q}^{\delta,\epsilon}\right)  \right]  /e^{-2n\xi
}\mathbf{E}\left[  Z\right]  ^{2}$ vanishes.

Denoting by $\mathcal{G}_{\epsilon,\delta}$ the set of all $q\times q$
matrices $L$, with nonegative integer entries, such that $L/n\in
\mathcal{B}_{q\times q}^{\delta,\epsilon}$, and denoting by $\Omega_{v}$ the
set of pairs of colorings $x_{1},x_{2}$ such that $v_{x_{1},x_{2}}$ is equal
to the matrix $v$, we have
\begin{align*}
\mathbf{E}\left[  Z^{(2)}\left(  v_{\underline{\mathbf{x}}^{\left(  1\right)
},\underline{\mathbf{x}}^{\left(  2\right)  }}\in\mathcal{B}_{q\times
q}^{\delta,\epsilon}\right)  \right]   &  =%
%TCIMACRO{\tsum \limits_{L\in\mathcal{G}_{\epsilon,\delta}}}%
%BeginExpansion
{\textstyle\sum\limits_{L\in\mathcal{G}_{\epsilon,\delta}}}
%EndExpansion%
%TCIMACRO{\tsum _{x_{1},x_{2}\in\Omega_{L/n}}}%
%BeginExpansion
{\textstyle\sum_{x_{1},x_{2}\in\Omega_{L/n}}}
%EndExpansion
\operatorname*{Prob}\left(  x_{1}\text{ and }x_{2}\text{ are satisfying
assignments}\right) \\
&  =%
%TCIMACRO{\tsum \limits_{L\in\mathcal{G}_{\epsilon}}}%
%BeginExpansion
{\textstyle\sum\limits_{L\in\mathcal{G}_{\epsilon}}}
%EndExpansion
\frac{n!}{%
%TCIMACRO{\tprod \limits_{i,j}}%
%BeginExpansion
{\textstyle\prod\limits_{i,j}}
%EndExpansion
L_{ij}!}\left[  \frac{n}{n-1}\right]  ^{\alpha n}\left(  1-%
%TCIMACRO{\tsum \limits_{i}}%
%BeginExpansion
{\textstyle\sum\limits_{i}}
%EndExpansion
\left(
%TCIMACRO{\tsum \limits_{j}}%
%BeginExpansion
{\textstyle\sum\limits_{j}}
%EndExpansion
L_{ij}/n\right)  ^{2}-%
%TCIMACRO{\tsum \limits_{j}}%
%BeginExpansion
{\textstyle\sum\limits_{j}}
%EndExpansion
\left(
%TCIMACRO{\tsum \limits_{i}}%
%BeginExpansion
{\textstyle\sum\limits_{i}}
%EndExpansion
L_{ij}/n\right)  ^{2}+%
%TCIMACRO{\tsum \limits_{i,j}}%
%BeginExpansion
{\textstyle\sum\limits_{i,j}}
%EndExpansion
\left(  L_{ij}/n\right)  ^{2}\right)  ^{\alpha n}\\
&  \leq%
%TCIMACRO{\tsum \limits_{L\in\mathcal{G}_{\epsilon,\delta}}}%
%BeginExpansion
{\textstyle\sum\limits_{L\in\mathcal{G}_{\epsilon,\delta}}}
%EndExpansion
3q^{2q}\sqrt{n}\exp\left(  n\left[  \mathcal{H}\left(  L/n\right)  +\alpha
E\left(  L/n\right)  \right]  \right)  \text{.}%
\end{align*}
And now, because $\kappa_{q}^{\delta,\epsilon}\geq\alpha^{\prime}>\alpha$ and
$L/n\in\mathcal{B}_{q\times q}^{\delta,\epsilon}$ where $2\delta<\epsilon$, we
can invoke Lemma \ref{lemita} to get that%
\[
\left[  \mathcal{H}(L/n)+\alpha\mathcal{E}(L/n)\right]  \leq\left[
\mathcal{H}(\overline{v})+\alpha\mathcal{E}(\overline{v})\right]
-\frac{\left(  \alpha^{\prime}-\alpha\right)  }{2\left(  1-1/q\right)  ^{2}%
}\left[  \epsilon-2\delta\right]  \,.
\]
Therefore,
\[
\mathbf{E}\left[  Z^{(2)}\left(  v_{\underline{\mathbf{x}}^{\left(  1\right)
},\underline{\mathbf{x}}^{\left(  2\right)  }}\in\mathcal{B}_{q\times
q}^{\delta,\epsilon}\right)  \right]  \leq3q^{2q}\sqrt{n}\left\vert
\mathcal{G}_{\epsilon,\delta}\right\vert \left[  q\left(  1-1/q\right)
^{\alpha}\right]  ^{2n}\exp\left(  -n\frac{\left(  \alpha^{\prime}%
-\alpha\right)  }{2\left(  1-1/q\right)  ^{2}}\left[  \epsilon-2\delta\right]
\right)  \,,
\]
where $\left\vert \mathcal{G}_{\epsilon,\delta}\right\vert $ is the number of
elements in $\mathcal{G}_{\epsilon,\delta}$, which is bounded by $n^{q^{2}}$.
On the other hand by Lemma \ref{acojas}, we have that for some constant $C$,%
\[
e^{-2n\xi}\mathbf{E}\left[  Z\right]  ^{2}\geq\frac{C}{n^{(q-1)}}e^{-2n\xi
}\left[  q\left(  1-\frac{1}{q}\right)  ^{\alpha}\right]  ^{2n}\,.
\]
Hence, for a polynomial $p\left(  n\right)  $ of degree $q^{2}+q-1$, we have
\[
\frac{\mathbf{E}\left[  Z^{(2)}\left(  v_{\underline{\mathbf{x}}^{\left(
1\right)  },\underline{\mathbf{x}}^{\left(  2\right)  }}\in\mathcal{B}%
_{q\times q}^{\delta,\epsilon}\right)  \right]  }{e^{-2n\xi}\mathbf{E}\left[
Z\right]  ^{2}}\leq p(n)\exp\left\{  n\left(  2\xi-\frac{\left(
\alpha^{\prime}-\alpha\right)  }{2\left(  1-1/q\right)  ^{2}}\left[
\epsilon-2\delta\right]  \right)  \right\}  \,.
\]
Due to the fact that $2\xi<\frac{\left(  \alpha^{\prime}-\alpha\right)
}{2\left(  1-1/q\right)  ^{2}}\left[  \epsilon-2\delta\right]  $, it is now
clear that $\frac{\mathbf{E}\left[  Z^{(2)}\left(  v_{\underline{\mathbf{x}%
}^{\left(  1\right)  },\underline{\mathbf{x}}^{\left(  2\right)  }}%
\in\mathcal{B}_{q\times q}^{\delta,\epsilon}\right)  \right]  }{e^{-2n\xi
}\mathbf{E}\left[  Z\right]  ^{2}}\rightarrow0$ as $n\rightarrow\infty
$.\bigskip
\end{proof}

\bigskip

\textit{Acknowledgments}. The last two authors are grateful to Eric Vigoda and
Linji Yang for many insightful discussions on reconstruction problems, and for
their role in the early development of this project. The authors also
gratefully acknowledge the support and the hospitality of BIRS (Canada) and
DIMACS (USA), which provided ideal environs for carrying out a significant
part of this research collaboration.

\bigskip

%\newpage

\appendix

\section{Constrained partition function for binary CSP's}

\label{sec:SecondMom}

\bigskip

In this section, we prove Proposition~\ref{propo:SATCSP}. Given a random
$\mathrm{\operatorname*{CSP}}(n,p,\alpha)$ ensemble $\left\{  \varphi
_{a}\right\}  _{a=1}^{\alpha n}$, consider the statistic $L_{n}\left(
\varphi\right)  =\frac{1}{\alpha n}\#\left\{  a:\varphi_{a}=\varphi\right\}
$, and denote by $\mathrm{\operatorname*{CSP}}(n,p,\alpha;\widetilde{p}_{n})$
the ensemble $\left\{  \varphi_{a}\right\}  _{a=1}^{\alpha n}$ conditioned on
$L_{n}=\widetilde{p}_{n}$.. Also, denote by $\overline
{\mathrm{\operatorname*{CSP}}}(n,p,\alpha)$ the ensemble $\left\{  \varphi
_{a}\right\}  _{a=1}^{\alpha n}$ conditioned on $\left\Vert L_{n}-p\right\Vert
_{TV}<1/n^{1/2-\gamma}$, where $\gamma$ is a fixed positive constant. Because
$\operatorname*{Prob}\left(  \left\Vert L_{n}-p\right\Vert _{TV}%
\geq1/n^{1/2-\gamma}\right)  $ goes to zero (by the central limit theorem),
the probability measures induced by $\mathrm{\operatorname*{CSP}}(n,p,\alpha)$
and $\overline{\mathrm{\operatorname*{CSP}}}(n,p,\alpha)$ become equivalent as
$n\rightarrow\infty$.

A binary configuration $\underline{x}$ is said to be balanced if $\left\vert
\underline{x}\cdot\underline{1}\right\vert \leq1$. We will use $Z$ and $Z_{b}%
$, to denote the variable that counts the number of satisfying assignments and
balanced satisfying assignments, respectively, of a random CSP ensemble. Given
two binary assignments $\underline{x}^{(1)},\underline{x}^{(2)}$, we define
their overlap as $Q_{12}\overset{def}{=}\underline{x}^{(1)}\cdot\underline
{x}^{(2)}/n=\sum_{i=1}^{n}x_{i}^{(1)}x_{i}^{(2)}/n$. In other words
$(1-Q_{12})/2$ is the normalized Hamming distance of $\underline{x}^{(1)}$ and
$\underline{x}^{(2)}$.

\bigskip

The upper bound in Proposition~\ref{propo:SATCSP} follows from a first moment
calculation. In fact, for a random $\overline{\mathrm{\operatorname*{CSP}}%
}(n,p,\alpha)$, we have
\begin{align*}
\operatorname*{Prob}\left(  Z=0\right)   &  \leq\mathbf{E}\left[  Z\right]
=\sum_{x\in\left\{  -1,1\right\}  ^{k}}\operatorname*{Prob}\left(  x\text{ is
a satisfying assignment}\right)  =\sum_{\frac{x\cdot1}{n}=\theta}%
%TCIMACRO{\tprod \limits_{\varphi}}%
%BeginExpansion
{\textstyle\prod\limits_{\varphi}}
%EndExpansion
\left\Vert \varphi\right\Vert _{\theta}^{2L_{n}\left(  \varphi\right)  \alpha
n}\\
&  \leq\exp\left(  n\left\{  \log2+\alpha%
%TCIMACRO{\tsum \limits_{\varphi}}%
%BeginExpansion
{\textstyle\sum\limits_{\varphi}}
%EndExpansion
p\left(  \varphi\right)  \log\left\Vert \varphi\right\Vert ^{2}%
+\operatorname*{O}\left(  1/n^{1/2-\gamma}\right)  \right\}  \right)  \text{,}%
\end{align*}
and the last quantity goes to zero whenever $\alpha>\left(  1+\epsilon\right)
\widehat{\Omega}_{k}\log2$.

To establish the corresponding lower bound, we use the second moment method,
but first we need two lemmas.

\bigskip

\begin{lemma}
\label{lemma:secondmoment}Given a random $\mathrm{\operatorname*{CSP}%
}(n,p,\alpha;\widetilde{p}_{n})$ ensemble, let $Z_{\mathrm{b}}(|Q_{12}%
|\geq\delta)$ be the number of balanced solution pairs $\underline{x}^{(1)}$,
$\underline{x}^{(2)}\in\{+1,-1\}^{n}$ with overlap larger than $\delta$.
Then,
\[
\frac{\mathbf{E}\,\left[  Z_{\mathrm{b}}(|Q_{12}|\geq\delta)\right]  }{\left[
\mathbf{E}Z_{\mathrm{b}}\right]  ^{2}}\leq n\,\exp\left\{  n\left[
\sup\limits_{\theta\geq\delta}\Phi\left(  \theta\right)  \right]  \right\}
\,\text{,}%
\]
where
\[
\Phi(\theta)\overset{def}{=}H(\theta)+\alpha\mathbb{E}_{\varphi\sim
\widetilde{p}_{n}}\log\left\{  \frac{(\varphi,\operatorname*{T}_{\theta
}\varphi)}{\left\Vert \varphi\right\Vert ^{4}}\right\}  \text{,}%
\]
and $H(\theta)\equiv-\frac{1+\theta}{2}\log(1+\theta)-\frac{1-\theta}{2}%
\log(1-\theta)$.
\end{lemma}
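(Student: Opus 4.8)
The plan is a first-plus-second-moment computation restricted to \emph{balanced} assignments, in the style of \cite{achlioassaf,ANP05}: the inputs are entropy counts of configurations of prescribed type together with the identification of clause-satisfaction probabilities with the inner products $(\cdot,\cdot)_{\theta}$ and the Bonami--Beckner operator. Begin with the denominator. Taking $n$ even, there are $\binom{n}{n/2}=\Theta(2^{n}/\sqrt n)$ balanced assignments, and for a fixed balanced $\underline{x}$ the $k$ variable-slots of each clause are i.i.d.\ uniform in $[n]$, so the $k$ values fed to $\varphi_{a}$ are i.i.d.\ uniform on $\{-1,+1\}$ (a function of i.i.d.\ indices) and $\operatorname*{Prob}(\varphi_{a}\text{ satisfied by }\underline{x})=2^{-k}\sum_{u}\varphi_{a}(u)=\|\varphi_{a}\|^{2}$ (using $\varphi_{a}^{2}=\varphi_{a}$). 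Multiplying over the $\alpha n$ clauses, of which $\widetilde p_{n}(\varphi)\alpha n$ carry each function $\varphi$, gives $\mathbf E Z_{\mathrm b}=\binom{n}{n/2}\exp\{\alpha n\,\mathbb E_{\varphi\sim\widetilde p_{n}}\log\|\varphi\|^{2}\}$, and so $[\mathbf E Z_{\mathrm b}]^{2}=\binom{n}{n/2}^{2}\exp\{2\alpha n\,\mathbb E_{\varphi\sim\widetilde p_{n}}\log\|\varphi\|^{2}\}$.

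For the numerator I would stratify over the overlap $\theta=Q_{12}$. Given a balanced $\underline{x}^{(1)}$, the number of balanced $\underline{x}^{(2)}$ with $\underline{x}^{(1)}\!\cdot\underline{x}^{(2)}=\theta n$ equals $\binom{n/2}{n(1+\theta)/4}^{2}\le\exp\{n(\log 2+H(\theta))\}$ by Stirling (the binary entropy of $\tfrac{1+\theta}{2}$ being $\log 2+H(\theta)$), uniformly over admissible $\theta\in[\delta,1]$. The key point is that for such a pair the empirical law of $(x^{(1)}_{i},x^{(2)}_{i})$ under a uniform random $i$ is exactly $P(a,b)=\tfrac14(1+\theta ab)$, so, the $k$ slots of a clause being i.i.d.,
\[
\operatorname*{Prob}\bigl(\varphi_{a}\text{ satisfied by both}\bigr)=\sum_{u,v\in\{-1,1\}^{k}}\varphi_{a}(u)\varphi_{a}(v)\prod_{j=1}^{k}\tfrac14(1+\theta u_{j}v_{j})=(\varphi_{a},\operatorname*{T}_{\theta}\varphi_{a}),
\]
the last equality following from the substitution $y=uw$ in $(\operatorname*{T}_{\theta}\varphi)(u)=\sum_{w}\varphi(w)\prod_{j}\tfrac{1+\theta u_{j}w_{j}}{2}$. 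Combining with the pair count and multiplying over the clauses of empirical type $\widetilde p_{n}$, $\mathbf E[Z_{\mathrm b}(Q_{12}=\theta)]\le\binom{n}{n/2}\exp\{n(\log 2+H(\theta))+\alpha n\,\mathbb E_{\varphi\sim\widetilde p_{n}}\log(\varphi,\operatorname*{T}_{\theta}\varphi)\}$.

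Dividing by $[\mathbf E Z_{\mathrm b}]^{2}$, the binomial factors together with the surviving $e^{n\log 2}$ reduce to a polynomial factor $\Theta(\sqrt n)$, while the two clause-products combine into the exponent $\alpha n\,\mathbb E_{\varphi\sim\widetilde p_{n}}\log\bigl((\varphi,\operatorname*{T}_{\theta}\varphi)/\|\varphi\|^{4}\bigr)$; hence $\mathbf E[Z_{\mathrm b}(Q_{12}=\theta)]/[\mathbf E Z_{\mathrm b}]^{2}\le\mathrm{poly}(n)\,e^{n\Phi(\theta)}$ with $\Phi$ exactly as in the statement (note $(\varphi,\operatorname*{T}_{0}\varphi)=\|\varphi\|^{4}$, so $\Phi(0)=0$). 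Summing over the $O(n)$ admissible lattice values of $|\theta|\ge\delta$, and using that $\theta\mapsto-\theta$ is a bijection on balanced solution pairs by the balance condition \emph{2} --- under which both $H(\theta)$ and $(\varphi,\operatorname*{T}_{\theta}\varphi)=\sum_{Q}|\varphi_{Q}|^{2}\theta^{|Q|}$ are even --- yields the bound $n\exp\{n\sup_{\theta\ge\delta}\Phi(\theta)\}$ after absorbing the polynomial prefactor. The main obstacle is pure bookkeeping: checking that restricting to genuinely achievable overlaps and the parity of $n$ perturb the estimates only by sub-exponential (indeed $e^{O(1)}$) factors --- harmless since $(\varphi,\operatorname*{T}_{\theta}\varphi)\ge\|\varphi\|^{4}\ge 1/4$ by feasibility stays bounded away from zero --- and that the Stirling bounds are uniform for $\theta\in[\delta,1]$ (true since $\tfrac{1+\theta}{4}\le\tfrac12$); the exponential rate $\Phi$ itself drops out cleanly.
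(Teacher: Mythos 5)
Your argument is essentially the paper's own: stratify the pair count by overlap $\theta$, identify the probability that a random clause is satisfied by both members of a balanced pair of overlap $\theta$ with $(\varphi,\operatorname*{T}_{\theta}\varphi)$, divide by the square of the first moment, and take the supremum over admissible $\theta\geq\delta$. Your derivation of the joint satisfaction probability via the empirical law $P(a,b)=\tfrac14(1+\theta ab)$ and the substitution $y=uw$ is exactly the unpacking of the paper's one-line claim $\mathbb{E}_{\pi}[\varphi(\underline{x}^{(1)}_{\pi})\varphi(\underline{x}^{(2)}_{\pi})]=(\varphi,\operatorname*{T}_{\theta}\varphi)$, and the entropy exponent $\log 2 + H(\theta)$ is the same $\mathcal{H}\bigl(\tfrac{1+\theta}{4},\tfrac{1+\theta}{4},\tfrac{1-\theta}{4},\tfrac{1-\theta}{4}\bigr)-\log 2$ that appears there.

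One small bookkeeping caveat: with the crude bound $\binom{n/2}{n(1+\theta)/4}^{2}\le e^{n(\log 2+H(\theta))}$ together with $\binom{n}{n/2}\ge 2^{n}/\sqrt{2n}$, your per-$\theta$ prefactor is $O(\sqrt{n})$, and summing over the $O(n)$ admissible lattice values gives $O(n^{3/2})$, which is larger than the stated $n$. The paper gets $O(\sqrt{n})$ overall by keeping the $\Theta(n^{-3/2})$ Stirling correction in the multinomial count of pairs (equivalently, you would use the two-sided estimate $\binom{n/2}{n(1+\theta)/4}^{2}=\Theta(n^{-1}\,e^{n(\log 2+H(\theta))})$ uniformly on $\theta\in[\delta,1-\delta']$ plus a trivial bound near $\theta=1$). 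This is a minor repair and does not affect how the lemma is used downstream, where only the exponential rate matters.
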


\begin{proof}
For simplicity take $n$ to be even. Let $\varphi$ be a boolean function, and
let $\pi:\left[  k\right]  \rightarrow\left[  n\right]  $ be a uniform random
assignation for the variables in $\varphi$. Now, given two \emph{balanced}
vectors $\underline{x}^{\left(  1\right)  }$, $\underline{x}^{\left(
2\right)  }\in\left\{  -1,1\right\}  ^{n}$, we have
\[
\mathbb{E}_{\pi}\left[  \varphi\left(  \underline{x}_{\pi_{1}}^{\left(
1\right)  },\ldots\underline{x}_{\pi_{k}}^{\left(  1\right)  }\right)
\varphi\left(  \underline{x}_{\pi_{1}}^{\left(  2\right)  },\ldots
,\underline{x}_{\pi_{k}}^{\left(  2\right)  }\right)  \right]  =\left(
\varphi,\operatorname*{T}\nolimits_{\theta}\varphi\right)  \text{,}%
\]
where $\theta=Q_{12}$. Therefore, for some constant $C>0$,
\begin{align*}
\mathbb{E}_{\pi_{a}}Z_{\mathrm{b}}(|Q_{12}|  &  \geq\delta)=%
%TCIMACRO{\dsum \limits_{\theta\geq\delta}}%
%BeginExpansion
{\displaystyle\sum\limits_{\theta\geq\delta}}
%EndExpansion%
%TCIMACRO{\dsum \limits_{Q_{12}=\theta}}%
%BeginExpansion
{\displaystyle\sum\limits_{Q_{12}=\theta}}
%EndExpansion%
%TCIMACRO{\tprod \limits_{\varphi}}%
%BeginExpansion
{\textstyle\prod\limits_{\varphi}}
%EndExpansion
\left\vert \left(  \varphi,\operatorname*{T}\nolimits_{\theta}\varphi\right)
\right\vert ^{L_{n}\left(  \varphi\right)  \alpha n}\\
&  <%
%TCIMACRO{\dsum \limits_{\theta\geq\delta}}%
%BeginExpansion
{\displaystyle\sum\limits_{\theta\geq\delta}}
%EndExpansion
\frac{C}{n^{3/2}}\exp\left(  n\left\{  \mathcal{H}\left(  \frac{1+\theta}%
{4},\frac{1+\theta}{4},\frac{1-\theta}{4},\frac{1-\theta}{4}\right)  +\alpha%
%TCIMACRO{\tsum \limits_{\varphi}}%
%BeginExpansion
{\textstyle\sum\limits_{\varphi}}
%EndExpansion
L_{n}\left(  \varphi\right)  \log\left(  \varphi,\operatorname*{T}%
\nolimits_{\theta}\varphi\right)  \right\}  \right)  \text{.}%
\end{align*}
where $\mathcal{H}\left(  \cdot\right)  $ is the entropy function. On the
other hand, for some positive $C^{\prime}$,
\begin{align*}
\mathbb{E}_{\pi_{a}}\,Z_{\mathrm{b}}  &  =%
%TCIMACRO{\dsum \limits_{\underline{x}\text{ balanced}}}%
%BeginExpansion
{\displaystyle\sum\limits_{\underline{x}\text{ balanced}}}
%EndExpansion%
%TCIMACRO{\tprod \limits_{\varphi}}%
%BeginExpansion
{\textstyle\prod\limits_{\varphi}}
%EndExpansion
\left\Vert \varphi\right\Vert ^{2L_{n}\left(  \varphi\right)  \alpha n}\\
&  >\frac{C^{\prime}}{n^{1/2}}\exp\left(  n\left\{  \mathcal{H}\left(
\frac{1}{2},\frac{1}{2}\right)  +\alpha%
%TCIMACRO{\tsum \limits_{\varphi}}%
%BeginExpansion
{\textstyle\sum\limits_{\varphi}}
%EndExpansion
L_{n}\left(  \varphi\right)  \log\left\Vert \varphi\right\Vert ^{2}\right\}
\right)  \text{.}%
\end{align*}
It is straightforward now to check that
\begin{equation}
\frac{\mathbf{E}Z_{\mathrm{b}}(|Q_{12}|\geq\delta)}{\mathbf{E}\left(
Z_{\mathrm{b}}\right)  ^{2}}<%
%TCIMACRO{\dsum \limits_{\theta\geq\delta}}%
%BeginExpansion
{\displaystyle\sum\limits_{\theta\geq\delta}}
%EndExpansion
\frac{C^{\prime\prime}}{n^{1/2}}\exp\left(  n\left[  \Phi\left(
\theta\right)  \right]  \right)  \label{eq:calc1}%
\end{equation}
and therefore $\frac{\mathbf{E}\,Z_{\mathrm{b}}(|Q_{12}|\geq\delta)}{\left(
\mathbf{E}Z_{\mathrm{b}}\right)  ^{2}}<n\exp\left(  n\left[  \sup
\limits_{\theta\geq\delta}\Phi\left(  \theta\right)  \right]  \right)  $.
\end{proof}

\bigskip

\begin{lemma}
Given a random $\mathrm{\operatorname*{CSP}}(n,p,\alpha;\widetilde{p})$
ensemble, if $\alpha\leq(1-\varepsilon)\Omega_{k,\widetilde{p}_{n}}\log2$,
where $\frac{1}{\Omega_{k,\widetilde{p}_{n}}}\overset{def}{=}\mathbb{E}%
_{\varphi\sim\widetilde{p}_{n}}\frac{2\operatorname*{I}_{1}\left(
\varphi\right)  }{\left\Vert \varphi\right\Vert ^{2}}$, then for any
$\delta>0$ there exists $C(\delta,\varepsilon)>0$ such that
\[
\mathbf{E}\,\left[  Z_{\mathrm{b}}(|Q_{12}|\geq\delta)\right]  \leq
e^{-n\left[  C(\delta,\epsilon)\right]  }\left(  \mathbf{E}Z_{\mathrm{b}%
}\right)  ^{2}\text{.}%
\]
Moreover, as $\delta\rightarrow0$, $C(\delta,\epsilon)=\Omega\left(
\delta^{2}\right)  $.
\end{lemma}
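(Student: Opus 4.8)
The plan is to plug the bound from Lemma~\ref{lemma:secondmoment}, $\mathbf{E}\,[Z_{\mathrm{b}}(|Q_{12}|\ge\delta)]/(\mathbf{E}Z_{\mathrm{b}})^{2}\le n\exp\{n\sup_{\theta\in[\delta,1]}\Phi(\theta)\}$ (we may take the supremum over $[\delta,1]$ since balance forces $\Phi(\theta)=\Phi(|\theta|)$), into a quantitative analysis of $\Phi$. It then suffices to produce $C(\delta,\varepsilon)>0$ with $\sup_{\theta\in[\delta,1]}\Phi(\theta)\le-C(\delta,\varepsilon)$ and $C(\delta,\varepsilon)=\Omega(\delta^{2})$ as $\delta\downarrow0$; the polynomial factor $n$ is absorbed at the end into a slightly smaller constant. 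I would first record that, because $\varphi$ is $\{0,1\}$-valued and balanced, $\|\varphi\|^{2}=\varphi_{\emptyset}=(\varphi,\1)$ and $(\varphi,\operatorname*{T}\nolimits_{0}\varphi)=(\varphi,\1)^{2}=\|\varphi\|^{4}$, so $\Phi(0)=H(0)=0$; and that feasibility plus balance give $(1-\|\varphi\|^{2})/\|\varphi\|^{2}=2\operatorname*{I}\nolimits_{1}(\varphi)/\|\varphi\|^{2}$.

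Write $g(\theta):=\mathbb{E}_{\varphi\sim\widetilde p_{n}}\log\{(\varphi,\operatorname*{T}\nolimits_{\theta}\varphi)/\|\varphi\|^{4}\}$, so that $\Phi(\theta)=H(\theta)+\alpha\,g(\theta)$ and $(\varphi,\operatorname*{T}\nolimits_{\theta}\varphi)=\|\varphi\|^{4}+\sum_{|Q|\ge1}\varphi_{Q}^{2}\theta^{|Q|}$, where by balance the sum runs over even $|Q|\ge2$. I would extract three upper bounds on $g$, each using $\log(1+x)\le x$. (i) From $\theta^{|Q|}\le\theta^{2}$ and $\sum_{|Q|\ge1}\varphi_{Q}^{2}=\|\varphi\|^{2}-\|\varphi\|^{4}$ (Parseval), one gets $g(\theta)\le\theta^{2}\,\mathbb{E}_{\varphi}[2\operatorname*{I}\nolimits_{1}(\varphi)/\|\varphi\|^{2}]=\theta^{2}/\Omega_{k,\widetilde p_{n}}$, hence $\Phi(\theta)\le H(\theta)+(1-\varepsilon)(\log2)\,\theta^{2}$. (ii) From $\theta\sum_{i=1}^{k}(\operatorname*{T}\nolimits_{\theta}\varphi^{(i)},\varphi^{(i)})=\sum_{Q}|Q|\,\varphi_{Q}^{2}\theta^{|Q|}\ge2\sum_{|Q|\ge1}\varphi_{Q}^{2}\theta^{|Q|}$, the hypothesis \eqref{fourier4}, and $\sum_{i}\operatorname*{I}\nolimits_{i}(\varphi)=k\operatorname*{I}\nolimits_{1}(\varphi)$ (equal influences), one gets $\sum_{|Q|\ge1}\varphi_{Q}^{2}\theta^{|Q|}\le\tfrac12\theta k\,e^{-Ck(1-\theta)}\operatorname*{I}\nolimits_{1}(\varphi)$, so (dividing by $\|\varphi\|^{4}\ge\tfrac14$) $g(\theta)\le\theta k\,e^{-Ck(1-\theta)}/\Omega_{k,\widetilde p_{n}}$ and $\Phi(\theta)\le H(\theta)+(1-\varepsilon)(\log2)\,\theta k\,e^{-Ck(1-\theta)}$. (iii) From $\theta^{|Q|}\le1$, $g(\theta)\le g(1)=-\mathbb{E}_{\varphi}\log\|\varphi\|^{2}=:1/\widehat\Omega_{k,\widetilde p_{n}}$, and since $t-1\ge\log t$ gives $\Omega_{k,\widetilde p_{n}}\le\widehat\Omega_{k,\widetilde p_{n}}$, we get $\Phi(\theta)\le H(\theta)+(1-\varepsilon)(\log2)$.

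I would then split $[\delta,1]$ at a point $1-\eta$ with $\eta=\eta(\varepsilon)>0$ chosen so small that $H(1-\eta)\le-(1-\varepsilon/2)\log2$ (possible because $H(1)=-\log2$ and $H$ is continuous and decreasing). On $[1-\eta,1]$, bound (iii) and monotonicity of $H$ give $\Phi(\theta)\le H(1-\eta)+(1-\varepsilon)\log2\le-(\varepsilon/2)\log2$. On $[\delta,1-\eta]$, bound (ii) gives $\Phi(\theta)\le H(\delta)+(\log2)\,k\,e^{-Ck\eta}$; since $\eta$ depends only on $\varepsilon$, the term $k\,e^{-Ck\eta}\to0$ as $k\to\infty$, so for $k$ large (depending on $\varepsilon$ and $\delta$) this is at most $\tfrac12 H(\delta)<0$. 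Taking $C(\delta,\varepsilon):=\min\{(\varepsilon/2)\log2,\,-\tfrac12 H(\delta)\}$ then bounds $\sup_{\theta\in[\delta,1]}\Phi$, and since $-H(\delta)=\tfrac12\delta^{2}+O(\delta^{4})$ we get $C(\delta,\varepsilon)=\Omega(\delta^{2})$. Finally $n\,e^{-nC(\delta,\varepsilon)}\le e^{-nC(\delta,\varepsilon)/2}$ for $n$ large, giving $\mathbf{E}\,[Z_{\mathrm{b}}(|Q_{12}|\ge\delta)]\le e^{-n\,C'(\delta,\varepsilon)}(\mathbf{E}Z_{\mathrm{b}})^{2}$ with $C'=C/2=\Omega(\delta^{2})$.

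The hard part is that bound (i) alone does not suffice near $\theta=0$: one checks $H(\theta)+(\log2)\theta^{2}>0$ for all $\theta\in(0,1)$, so the factor $(1-\varepsilon)$ cannot by itself drive $\Phi$ negative at small $\theta$ when $\varepsilon$ is small. The genuine use of the Fourier hypothesis \eqref{fourier4} is exactly bound (ii), which shows $g(\theta)$ is exponentially small in $k$ once $\theta$ is bounded away from $1$, so that $\Phi(\theta)\approx H(\theta)<0$ there; but \eqref{fourier4} is vacuous at $\theta=1$, where one must instead exploit $\alpha<\widehat\Omega_{k,\widetilde p_{n}}\log2$ via bound (iii). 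Stitching these regimes together at $1-\eta$ is the delicate step, and it is what forces the mild ``$k$ large'' qualification (equivalently, a constant $C$ that may depend on $k$ but is $\Omega(\delta^{2})$ uniformly in $k$).
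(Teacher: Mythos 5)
Your overall architecture (split $[\delta,1]$ into a small-$\theta$ regime, a middle regime where the Fourier hypothesis \eqref{fourier} makes $g$ negligible, and a regime near $\theta=1$ where the first-moment bound $\alpha<\widehat\Omega_{k}\log 2$ takes over) matches the paper's, and your bounds (i) and (iii) are correct. The difference is that you bound $\Phi$ directly via $\log(1+x)\le x$, whereas the paper bounds $\frac{d\Phi}{d\theta}$; that in itself is not a problem. However, there is a genuine gap in the $\Omega(\delta^{2})$ claim, and it traces exactly to the fact that your bound (ii) uses only \eqref{fourier4} and not the low-weight consequence \eqref{fourier3}.

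Concretely, bound (ii) gives $\Phi(\theta)\le H(\theta)+(1-\varepsilon)(\log 2)\,\theta\,k\,e^{-Ck(1-\theta)}$. As $\theta\to 0$ the correction term is \emph{linear} in $\theta$ (of size $\approx (\log 2)\,k e^{-Ck}\,\theta$), while $H(\theta)\approx -\theta^{2}/2$ is quadratic; so for $\theta\lesssim 2(\log 2)ke^{-Ck}$ your upper bound on $\Phi$ is nonnegative and gives nothing. On $[\delta,1-\eta]$ you then want $(\log 2)ke^{-Ck\eta}\le -\tfrac12 H(\delta)\approx\tfrac14\delta^{2}$, which forces $k\gtrsim(\log(1/\delta))/(C\eta)$. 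That means the threshold $k_{0}$ you need depends on $\delta$ and diverges as $\delta\to 0$; for any \emph{fixed} $k$ there is an interval $(0,\delta_{*}(k))$ on which your argument simply does not produce a negative $C(\delta,\varepsilon)$, let alone one of size $\Omega(\delta^{2})$. Your closing parenthetical (``a constant $C$ that may depend on $k$ but is $\Omega(\delta^{2})$ uniformly in $k$'') is not what you have shown; you have shown it only for $k\ge k_{0}(\delta,\varepsilon)$ with $k_{0}\to\infty$ as $\delta\to 0$. Since the $\Omega(\delta^{2})$ rate is precisely what is used in the conclusion of Proposition~\ref{propo:SATCSP} to dominate the Riemann sum $\sum_{\theta}\exp(-n\,\Omega(\theta^{2}))$, this is not an optional refinement.

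The missing ingredient is \eqref{fourier3}: the weight-$2$ Fourier mass of $\varphi$ (equivalently, the weight-$1$ mass of $\varphi^{(1)}$) is $O(e^{-Ck/2})$ relative to $\sum_{|Q|\ge1}\varphi_{Q}^{2}$. Splitting the sum over $Q$ by degree gives $g(\theta)\le\bigl(A\,e^{-Ck/2}\,\theta^{2}+\theta^{4}\bigr)/\Omega_{k,\widetilde p_{n}}$, whence $\Phi(\theta)\le -\tfrac12\theta^{2}\bigl[1-2(\log 2)Ae^{-Ck/2}-2(\log 2)\theta^{2}\bigr]$. For $k$ large enough that $2(\log 2)Ae^{-Ck/2}\le\tfrac14$ (a condition on $k$ alone, not on $\delta$), this gives $\Phi(\theta)\le -\theta^{2}/8$ uniformly on a fixed small interval $[0,\theta_{0}]$, which is what the lemma asserts. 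The paper reaches the same conclusion by instead bounding $\frac{d\Phi}{d\theta}\le -\tfrac12\theta+4k\theta^{2}$ on $[0,e^{-ak}]$, again by isolating the exponentially small weight-$1$ part of $\varphi^{(1)}$ via \eqref{fourier3}, and integrating from $\Phi(0)=0$. Either way, you cannot avoid invoking the ``small-weight Fourier coefficients are small'' condition; replace your bound (ii) on the small-$\theta$ side by this degree-split and the argument goes through with a $k$-threshold independent of $\delta$.
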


\begin{proof}
In view of the previous lemma, it is sufficient to prove that the function
$\theta\mapsto\Phi(\theta)$ achieves its maximum over the interval $[0,1]$
uniquely at $\theta=0$. To establish the second statement, then it will be
enough to prove that $-\Phi\left(  \theta\right)  =\Omega\left(  \theta
^{2}\right)  $ as $\theta\rightarrow0$.

Fix $\alpha\leq(1-\varepsilon)\Omega_{k}\log2\leq(1-\varepsilon)\widehat
{\Omega}_{k}\log2$. We will prove the thesis claim by considering three
different regimes for $\theta$: $0<\theta\leq e^{-ak}$, $e^{-ak}\leq\theta
\leq1-\varepsilon^{1/2}$ and $1-\varepsilon^{1/2}\leq\theta\leq1$, where $a$
is a small constant. In the first two intervals we will prove that the
derivative of $\Phi(\theta)$ with respect to $\theta$ is strictly negative.
Recalling that $\left\Vert \varphi\right\Vert ^{2}\geq1/2$, we have
\begin{align*}
\frac{\mathrm{d}\Phi}{\mathrm{d}\theta}  &  \leq-\mathrm{atanh}\,\theta
+k\alpha\mathbb{E}_{\varphi}\frac{(\varphi^{(1)},\operatorname*{T}_{\theta
}\varphi^{(1)})}{\left\Vert \varphi\right\Vert ^{4}}\\
&  \leq-\theta+2k\alpha\mathbb{E}_{\varphi}\frac{\sum_{i=1}^{k-1}%
|\varphi_{\{i\}}^{(1)}|^{2}}{\left\Vert \varphi\right\Vert ^{2}}%
\;\theta+2k\alpha\mathbb{E}_{\varphi}\frac{||\varphi^{(1)}||^{2}}{\left\Vert
\varphi\right\Vert ^{2}}\;\;\theta^{3}\\
&  \leq-\theta+Ae^{-Ck}\frac{\alpha}{\Omega_{k}}\theta+2k\frac{\alpha}%
{\Omega_{k}}\theta^{2}\leq-\frac{1}{2}\theta+4k\theta^{2}\,\text{,}%
\end{align*}
where we used (from Eq.~(\ref{fourier})) the hypothesis on low weight Fourier
coefficients. The last expression is strictly negative if $0<\theta<e^{-ak}$
for any $a>0$ and all $k$ large enough. The previous formula also shows
$-\Phi\left(  \theta\right)  =\Omega\left(  \theta^{2}\right)  $ as
$\theta\rightarrow0$.

Next assume $e^{-ak}\leq\theta\leq1-\varepsilon$. Using the hypothesis
$(\varphi^{\left(  1\right)  },\operatorname*{T}_{\theta}\varphi^{\left(
1\right)  })\leq e^{-Ck(1-\theta)}||\varphi^{\left(  1\right)  }||^{2}$, we
have
\begin{align*}
\frac{\mathrm{d}\Phi}{\mathrm{d}\theta}  &  \leq-\mathrm{atanh}\,\theta
+4k\alpha\mathbb{E}_{\varphi}\frac{||\varphi^{\left(  1\right)  }||^{2}%
}{||\varphi||^{4}}\;e^{-Ck\epsilon}\\
&  \leq-\mathrm{atanh}\,\theta+2k\frac{\alpha}{\Omega_{k}}e^{-Ck\sqrt
{\epsilon}}\leq-\mathrm{atanh}\theta+2\left(  \log2\right)  ke^{-Ck\epsilon
}\,,
\end{align*}
which is strictly negative if $\theta>e^{-ak}$ with, say, $a=(C\epsilon
^{2})/2$. Finally, we notice that, for $1-\varepsilon^{2}\leq\theta\leq1$, any
$\varepsilon$ small enough we have $H(\theta)\leq-\log2+\varepsilon/10$.
Further, using the fact that $(\varphi,\operatorname*{T}_{\theta}%
\varphi)=||\operatorname*{T}_{\theta^{1/2}}\varphi||^{2}$ is non-decreasing in
$\theta$
\[
\Phi(\theta)\leq-\log2+\frac{\varepsilon}{10}-\alpha\mathbb{E}_{\varphi}%
\log||\varphi||^{2}=-\log2+\frac{\varepsilon}{10}+\frac{\alpha}{\widehat
{\Omega}_{k}}\leq-\varepsilon\frac{\log2}{2}\,,
\]
which finishes the proof.
\end{proof}

\bigskip

\noindent\textbf{Conclusion of Proof of Proposition~\ref{propo:SATCSP}}. From
the previous lemma we have that for any fixed $\delta>0$,
\[
\frac{\mathbf{E}\,Z_{\mathrm{b}}^{2}}{\left(  \mathbf{E}\,Z_{\mathrm{b}%
}\right)  ^{2}}\leq\frac{\mathbf{E}\,\left[  Z_{\mathrm{b}}(|Q_{12}|<\left(
\delta/n\right)  ^{1/2})\right]  }{\left(  \mathbf{E}\,Z_{\mathrm{b}}\right)
^{2}}+e^{-\Omega\left(  \delta\right)  }\text{,}%
\]
while a calculation analogous to that in Eq. (\ref{eq:calc1}) and the fact
that $-\Phi\left(  \theta\right)  =\Omega\left(  \theta^{2}\right)  $, implies
that%
\[
\frac{\mathbf{E}\,\left[  Z_{\mathrm{b}}(|Q_{12}|<\left(  \delta/n\right)
^{1/2})\right]  }{\left(  \mathbf{E}\,Z_{\mathrm{b}}\right)  ^{2}}\leq%
%TCIMACRO{\dsum \limits_{\theta<\left(  \delta/n\right)  ^{1/2}}}%
%BeginExpansion
{\displaystyle\sum\limits_{\theta<\left(  \delta/n\right)  ^{1/2}}}
%EndExpansion
\frac{C^{\prime\prime}}{n^{1/2}}\exp\left(  -n\Omega\left(  \theta^{2}\right)
\right)  \leq%
%TCIMACRO{\tint \limits_{-\delta}^{\delta}}%
%BeginExpansion
{\textstyle\int\limits_{-\delta}^{\delta}}
%EndExpansion
\exp\left(  -\Omega\left(  x^{2}\right)  \right)  dx\text{.}%
\]
Now, letting $\delta\rightarrow0$, it is clear that $\frac{\mathbf{E}%
\,Z_{\mathrm{b}}^{2}}{\left(  \mathbf{E}\,Z_{\mathrm{b}}\right)  ^{2}}$ tends
to $1$. This proves, by means of the Paley-Zygmund inequality, that for
$\alpha<(1-\varepsilon)\left(  \liminf\limits_{n\rightarrow\infty}%
\Omega_{k,\widetilde{p}_{n}}\right)  \log2$, a $\mathrm{\operatorname*{CSP}%
}(n,p,\alpha;\widetilde{p}_{n})$ ensemble is satisfiable w.h.p. The result
extends straightforwardly for a random\ $\mathrm{\operatorname*{CSP}%
}(n,p,\alpha)$, after noticing that $\Omega_{k,L_{n}}>\left(  1-\epsilon
\right)  \Omega_{k,p}$ with high probability. \hfill $\Box$

\bigskip
\end{document}